
\documentclass[a4paper,oneside,12pt]{article}


\usepackage{jheppub}
\usepackage{amsfonts}
\usepackage{amsmath}
\usepackage{amssymb}
\usepackage{amsthm}
\usepackage[mathscr]{eucal}
\usepackage{bbold}
\usepackage{braket}
\usepackage{color}
\usepackage{dsfont}
\usepackage{framed}
\usepackage{mathtools}
\usepackage{physics}
\usepackage[normalem]{ulem}
\usepackage{tensor}
\usepackage{thmtools}
\usepackage{thm-restate}
\usepackage{ulem}
\usepackage{tikz}
\usetikzlibrary{arrows.meta,arrows} 
\usepackage{centernot}
\usepackage{enumitem} 
\usepackage[a]{esvect}  
\usepackage{slashed}
\usepackage[only,llbracket,rrbracket]{stmaryrd} 

\usepackage{yfonts}
\usepackage{float}
\usetikzlibrary{math} 
\usetikzlibrary{calc}		

\usepackage{subcaption}
\usepackage{float}
\usepackage{afterpage}
\captionsetup{font={sf,footnotesize},labelfont=footnotesize}
\captionsetup[sub]{font={sf,footnotesize},labelfont=footnotesize}

\usetikzlibrary{arrows}
\tikzset{elabelcolor/.style={color=blue} 
    vertex/.style={circle,draw,minimum size=1.5em},
    edge/.style={->,> = latex'}
}  

\usepackage{cleveref}

\usepackage{longtable}

\usepackage[ruled,vlined]{algorithm2e}

\DeclareFontFamily{U}{jkpmia}{}
\DeclareFontShape{U}{jkpmia}{m}{it}{<->s*jkpmia}{}
\DeclareFontShape{U}{jkpmia}{bx}{it}{<->s*jkpbmia}{}
\DeclareMathAlphabet{\mathfrak}{U}{jkpmia}{m}{it}
\SetMathAlphabet{\mathfrak}{bold}{U}{jkpmia}{bx}{it}


\definecolor{VHcolor}{rgb}{0.7,0.3,0.9}

\definecolor{MRocolor}{rgb}{0.1,0,1}


\newtheorem{cor}{Corollary}
\newtheorem{defi}{Definition}
\newtheorem{lemma}{Lemma}

\newtheorem{thm}{Theorem}

\newcommand{\comp}[1]{#1^\complement}
\newcommand{\A}{\mathcal{A}} 
\newcommand{\B}{\mathcal{B}}  
\newcommand{\C}{\mathcal{C}}  
\newcommand{\F}{\mathcal{F}}  


\newcommand{\N}{{\sf N}}
\newcommand{\D}{\sf{D}}

\newcommand{\I}{I}
\newcommand{\J}{J}
\newcommand{\K}{K}
\newcommand{\X}{X}
\newcommand{\Y}{Y}
\newcommand{\Z}{Z}
\newcommand{\W}{W}

\newcommand{\nsp}{\llbracket\N\rrbracket} 
\newcommand{\nspp}{\llbracket\N'\rrbracket} 

\newcommand{\mgs}{\mathcal{E}} 
\newcommand{\ds}{\mathcal{D}} 
\newcommand{\cds}{\comp{\mathcal{D}}} 
\newcommand{\ac}{\Check{\mathcal{A}}}  

\newcommand{\bset}{\beta\text{-set}} 
\newcommand{\bsets}{\beta\text{-sets}}
\newcommand{\bs}[1]{\beta(#1)}
\newcommand{\ess}{\mathfrak{B}^{^{\!\vee}}}

\newcommand{\pos}{\mathfrak{B}^{^{\!+}}}

\newcommand{\bsi}[2]{\beta_{#2}(#1)} 
\newcommand{\mipos}[1]{\beta_{_{\!+}}(#1)}
\newcommand{\mivan}[1]{\beta_{_{\!0}}(#1)}

\newcommand{\mmC}{\text{\textit{\u{C}}}} 

\newcommand{\cl}{\text{cl}} 
\newcommand{\clw}{\text{cl}_{\cup}} 

\newcommand{\ent}{{\sf S}}  
\newcommand{\mi}{{\sf I}}  
\newcommand{\pmi}{\mathcal{P}}   
\newcommand{\cpmi}{\comp{\mathcal{P}}} 
\newcommand{\pmimap}{\pi^{_{\bullet}}} 
\newcommand{\face}{\mathscr{F}}   

\newcommand{\Svec}{\vec{\ent}}  
\newcommand{\Svecp}{\vec{\ent}'}  

\newcommand{\cgmap}{\Theta}
\newcommand{\cginv}{\cgmap^{-1}}
\newcommand{\cgpart}{\cgmap^\circ}
\newcommand{\cgpartrest}[1]{\cgpart_{|#1}}
\newcommand{\cgpmi}{\Lambda_\cgmap}
\newcommand{\cgsvec}{\Lambda^{\!^{\!\bullet}}_{\cgmap}}
\newcommand{\cgmapeq}{\sim_{^\cgmap}}
\newcommand{\cgpmiinv}{\cgpmi^{-1}}
\newcommand{\cgmeet}{\widehat{\Phi}} 
\newcommand{\cgmeetpmi}{\Lambda_{\widehat{\Phi}}} 
\newcommand{\cgindex}{\chi_{_\cgmap}}
\newcommand{\eqcl}[1]{[#1]^{^{\sim}}}

\newcommand{\gf}[1]{{\mathrm{#1}}}
\newcommand{\hp}{\gf{H}_{\mathcal{P}}}  
\newcommand{\hpp}{\gf{H}_{\mathcal{P}'}}  
\newcommand{\lhp}{\gf{L}_{\mathcal{P}}}  
\newcommand{\lhpp}{\gf{L}_{\mathcal{P}'}}


\newcommand{\lat}[1]{\mathcal{L}_{\text{#1}}}



\SetKwComment{Comment}{** }{ **} 
\SetKwInOut{Input}{input}\SetKwInOut{Output}{output}

\definecolor{PminusEcol}{rgb}{1,0.8,0.8}
\definecolor{EminusMcol}{rgb}{1,0.85,0.65}
\definecolor{Mcol}{rgb}{1,0.97,0.5}
\definecolor{Qcol}{rgb}{0.7,1,0.9}
\definecolor{Vcol}{rgb}{0.83,0.83,1}
\definecolor{EminusPcol}{rgb}{0.8,1,0.6}

\definecolor{Ccol}{rgb}{0.9,0.9,0.9}

\definecolor{bulkcol}{rgb}{0.8,0.8,0.8}


 
\title{On the construction of graph models realizing given entropy
vectors}

\author[a]{Veronika E. Hubeny,}
\emailAdd{veronika@physics.ucdavis.edu}

\author[b]{Massimiliano Rota}
\emailAdd{max.rota@bristol.ac.uk}

\affiliation[a]{Center for Quantum Mathematics and Physics (QMAP)\\ 
Department of Physics \& Astronomy, University of California, Davis, CA 95616 USA}
\affiliation[b]{School of Mathematics, University of Bristol,
Woodland Road, Bristol, BS8 1UG UK}

\abstract{We present an efficient algorithm for the construction of a holographic simple tree graph model that realizes a given entropy vector, subject to a specific ``chordality'' condition first introduced in arXiv:2412.18018. We further develop the toolkit of the correlation hypergraph, particularly in relation to coarse-graining and fine-graining of subsystems. We then use these techniques to take the first steps towards the generalization of this new algorithm to arbitrary (not necessarily simple) holographic tree graph models, and the ``detection'' of unrealizability of an entropy vector independently from the knowledge of holographic entropy inequalities. 
}

 
\begin{document}
 

\maketitle

\section{Introduction}

In gauge-gravity duality \cite{Maldacena:1997re,Gubser:1998bc,Witten:1998qj}, a long-standing problem is the characterization of states that correspond to bulk classical geometries, and in particular, the understanding of how exactly the bulk geometry is encoded in the boundary theory. The HHRT prescription \cite{Ryu:2006bv,Hubeny:2007xt} for the computation of the von Neumann entropy of a boundary spatial region via the area of bulk extremal surfaces provides deep insights into this encoding. For bulk static spacetimes, it was first found in \cite{Hayden:2011ag} that the RT formula implies that the entropy of boundary spatial subsystems satisfies an inequality, called the \textit{monogamy of mutual information} (MMI), which does not hold in general for arbitrary quantum systems, thereby characterizing ``geometric'' states.\footnote{\,The proof of MMI was generalized to dynamical spacetimes in \cite{Wall:2012uf}. For further details on holographic entropy inequalities in the dynamical case, see e.g.\ \cite{Grado-White:2025jci} and references therein.} New inequalities of this type were then found in \cite{Bao:2015bfa}, which introduced the framework of the \textit{holographic entropy cone} (HEC), and more recently in \cite{Czech:2022fzb,Hernandez-Cuenca:2023iqh,Czech:2024rco}.

As the number of boundary regions (or parties) $\N$ increases, the number of \textit{holographic entropy inequalities} (HEIs) grows substantially (for $\N=6$ there are at least a thousand orbits of inequalities \cite{Hernandez-Cuenca:2023iqh}), although it remains finite \cite{Bao:2015bfa}. For sufficiently large $\N$, however, not only does the complete list of inequalities become practically impossible to compute, but it is also unclear how much insight one could realistically extract from such a list. This difficulty is further increased by the fact that even the sole $\N=3$ inequality (MMI) currently lacks a clear interpretation, even from an information-theoretic perspective.\footnote{\,See \cite{Czech:2025jnw} for first steps in the direction of interpreting HEIs.} Motivated by these considerations and building on the ideas of \cite{Hubeny:2018trv,Hubeny:2018ijt}, a search for a ``fundamental principle'' underlying the structure of the HEC for arbitrary $\N$ was initiated in \cite{Hernandez-Cuenca:2022pst}. It was argued that for any $\N$, the structure of the HEC is intimately connected to the holographically permissible ``patterns'' of simultaneous saturation of multiple instances of the subadditivity inequality, which encode independence among collections of subsystems.

One of the key difficulties, not only in proving, but even already
in testing, the conjecture from \cite{Hernandez-Cuenca:2022pst}, is closely related to the complications encountered when attempting to establish the completeness of a set of known inequalities and to interpret these inequalities: the absence of a systematic method to construct a geometry, or a \textit{holographic graph model},\footnote{\,See \cite{Bao:2015bfa} for a detailed discussion about the relationship between geometric set-ups in holography and holographic graph models.} that realizes a given list of candidate areas of RT surfaces. More precisely, suppose that for a fixed number of regions $\N$ we are given a list of $\D=2^\N-1$ positive numbers. How can we determine whether there exists a geometry and a choice of boundary regions such that these numbers correspond to the areas of the RT surfaces for all (non-empty) collections of these
regions? Of course, this is precisely the purpose of HEIs: if all HEIs for that value of $\N$ are known, one can straightforwardly check whether the given list of numbers—referred to as an ``entropy vector''—satisfies all the inequalities. To advance our understanding of the physical significance of these inequalities, and consequently to extract the general principle underlying the structure of the HEC, we seek alternative methods for addressing this question—namely, a way of ``detecting unrealizability'' independently of the prior knowledge of HEIs. Furthermore, it is important to note that even when all inequalities are known and an entropy vector satisfies them, the information obtained remains limited: one can only conclude the existence of a holographic graph model realizing that vector, but a systematic procedure to explicitly construct such a model is currently lacking. The absence of such a method becomes a critical obstacle in attempts to prove the completeness of a set of HEIs for a given $\N$, as the strategy to answer this question relies on the construction of graph models realizing the \textit{extreme rays} (ERs) of the cone specified by the inequalities.

Building on this perspective, the main contribution of this work is to take a first step towards a systematic construction of holographic graph models for a given entropy vector by proposing an algorithm in a certain restricted context, namely when the entropy vector can be realized by a ``simple tree'' holographic graph model—that is, a graph model with tree topology and distinct boundary vertices. Three particularly notable features of this algorithm are that the condition for its applicability can be verified efficiently (and it is not itself an entropy inequality), it is completely general for any $\N$, and it enables an efficient construction of a candidate graph model. While in this work we do not prove that the algorithm always succeeds, i.e., that the condition for its applicability is both necessary and sufficient, we conjecture that this is the case, and we hope to report on a general proof in the near future \cite{sufficiency}.

The second goal of this work is to take initial steps toward generalizing the algorithm introduced above, extending it from entropy vectors realizable by simple tree graph models to those that can be realized by arbitrary graph models with tree topology—that is, graph models without cycles involving only ``bulk'' vertices. In the course of discussing these strategies, we also comment on potential methods to detect the unrealizability of an entropy vector by one of these graph models, or possibly \textit{any} graph model, independently of prior knowledge of HEIs. As we will remark in the discussion, this possibility is closely connected to the conjecture of \cite{Hernandez-Cuenca:2022pst} regarding the structure of the HEC.

Lastly, both the necessary condition for the applicability of the algorithm presented here, and the algorithm itself, rely heavily on the new tools introduced in \cite{Hubeny:2024fjn} for describing ``patterns of marginal independence'' in the framework of the ``correlation hypergraph''. This object was introduced in \cite{Hubeny:2024fjn} as a more convenient and conceptually suggestive way to represent the set of vanishing instances of the mutual information (equivalently, saturated instances of subadditivity) for a given entropy vector. While \cite{Hubeny:2024fjn} focused on the properties of the correlation hypergraph for a fixed number of parties $\N$, the generalization of our new algorithm from simple trees to arbitrary trees requires the ``refinement'' of individual parties into multiple components—or, in a holographic set-up, the replacement regions with collections of regions—which may then be relabeled as new parties. The third contribution of this work is a discussion of the properties of the correlation hypergraph under such fine-grainings, and the development of the toolkit to describe the (simpler) opposite transformation, a coarse-graining, from this perspective.

\paragraph{Structure of the paper:} All essential notions from previous works are reviewed in \S\ref{sec:toolkit}, in particular the concepts of a PMI, KC, and MI-poset in \S\ref{subsec:review}, and $\bsets$ and the correlation hypergraph from \cite{Hubeny:2024fjn} in \S\ref{subsec:correlation-hypergraph}. On the other hand, the content of \S\ref{subsec:kc-lattice} and \S\ref{subsec:cg-and-fg} is largely new. Specifically, in \S\ref{subsec:kc-lattice} we examine the KC-lattice, first introduced in \cite{He:2022bmi}, from the perspective of the correlation hypergraph, while \S\ref{subsec:cg-and-fg} focuses on transformations of PMIs under coarse-graining and fine-graining maps, where the number of parties is no longer fixed. The basics of holographic graph models are reviewed in \S\ref{subsec:graph-review}. The algorithm for the construction of holographic graph models is the focus of \S\ref{sec:strategies}. Given an arbitrary entropy vector for any number of parties, \S\ref{subsec:preliminary-steps} discusses preliminary steps to verify the necessary condition for realizability by simple tree graph models \cite{Hubeny:2024fjn}, as well as additional steps that, for certain entropy vectors, allow a reduction of the problem to fewer parties. The algorithm is then presented in \S\ref{subsec:chordal-case}, while \S\ref{subsec:non-chordal-case} comments on the generalization to arbitrary tree graph models and on methods for detecting unrealizability. Finally, \S\ref{sec:discussion} concludes with a discussion of future directions for this program and its connection to the conjecture of \cite{Hernandez-Cuenca:2022pst}. Throughout this work, the general theory developed in the main text is complemented by figures with detailed captions, which illustrate and further elucidate particular constructions and serve as a concrete counterpart to the more high-level presentation.

\paragraph{Guide for the reader:} Readers primarily interested in the new algorithm for the construction of simple tree graph models can skip subsections \S\ref{subsec:kc-lattice} and \S\ref{subsec:cg-and-fg}, and, if already familiar with the basics of holographic graph models and the correlation hypergraph from \cite{Hubeny:2024fjn}, may proceed directly to \S\ref{sec:strategies}, in particular \S\ref{subsec:preliminary-steps} and \S\ref{subsec:chordal-case}. The discussion in \S\ref{subsec:non-chordal-case} relies on a few notions related to coarse-grainings and fine-grainings of subsystems; however, on a first reading, the basic definitions at the beginning of \S\ref{subsec:cg-and-fg}, and the overview of these transformations for holographic graph models in \S\ref{subsec:graph-review}, may be sufficient. Subsection \S\ref{subsec:cg-and-fg} examines these transformations in greater depth and is particularly relevant for readers focused on the general toolkit of the correlation hypergraph, rather than on specific holographic applications.

\paragraph{Conventions:} For an arbitrary positive integer $k$, we denote the set $\{1,\ldots,k\}$ by $[k]$. To simplify the presentation, we typically use the same notation for an element $a$ of a set $A$, and the subset of $A$ whose only element is $a$. For example, we write $A\setminus a$ instead of $A\setminus \{a\}$. The complement of a subset $B$ of a given set (which will be clear from context and hence left implicit) is denoted by $\comp{B}$.

\section{A correlation hypergraph toolkit}
\label{sec:toolkit}

In this section we present the toolkit of the correlation hypergraph in detail. We emphasize that while \S\ref{subsec:review} and \S\ref{subsec:correlation-hypergraph} review earlier work, \S\ref{subsec:kc-lattice} and \S\ref{subsec:cg-and-fg} are mostly new. Specifically, in \S\ref{subsec:review} we summarize the basic definitions concerning PMIs, the MI-poset, and Klein's condition, which have been discussed in several previous works. The notion of $\bsets$ and the correlation hypergraph introduced in \cite{Hubeny:2024fjn}, together with its main properties, are reviewed in \S\ref{subsec:correlation-hypergraph}. As first shown in \cite{He:2022bmi}, the set of PMIs that obey Klein's condition at any given $\N$ has the structure of a lattice. This result is reviewed in \S\ref{subsec:kc-lattice}, where we examine this lattice from the perspective of the correlation hypergraph. Finally, in \S\ref{subsec:cg-and-fg}, we formalize the definitions of coarse-grainings of entropy vectors and PMIs,\footnote{\,Some of these notions were already discussed in \cite{Hernandez-Cuenca:2022pst}, although mainly from the perspective of holographic graph models rather than PMIs.} as well as the closely related notion of fine-grainings, and explain how these transformations can be more conveniently—and suggestively—described in terms of the correlation hypergraph.

\subsection{Review of basic concepts and definitions}
\label{subsec:review}

For a fixed number of parties $\N$, and quantum density matrix $\rho_\N$, the \textit{entropy vector} $\Svec(\rho_\N)$ is the collection
(in some conventional order, which we choose to be lexicographic) of the von Neumann entropy of its reduced density matrices, for all possible non-empty collections $\I$ of the $\N$ parties ($\varnothing\neq\I\subseteq [\N]$). The space $\mathbb{R}^{\D}$, with $\D=2^\N-1$, where these entropy vectors live is called \textit{entropy space}. Not all vectors in entropy space are vectors of entropies of a density matrix; we will call an arbitrary vector $\Svec$ in entropy space an entropy vector, and when there is a density matrix whose entropy vector is $\Svec$, we will say that $\Svec$ is \textit{realizable}. We will typically implicitly consider a purification of a density matrix $\rho_\N$ by an additional party, called the \textit{purifier}, and labeled conventionally by $0$. The set of $\N$ parties and purifier is denoted by $\nsp$. To distinguish the subsets of $[\N]$, labeled by $\I,\J,\K$ (and assumed to be non-empty), from the (possibly empty) subsets of $\nsp$, we will denote the latter by $\X,\Y,\Z$.\footnote{\,This notation is the same as the one adopted in \cite{Hubeny:2024fjn}, but differs from the one used in several earlier works (e.g., \cite{He:2022bmi}), where collections that could include $0$ were denoted by $\underline{\mathscr{I}},\underline{\mathscr{J}},\underline{\mathscr{K}}$.} 

Entropy inequalities, like \textit{subadditivity} (SA) and \textit{strong subadditivity} (SSA), bound the region of entropy space that contains realizable entropy vectors. The $\N$-party \textit{subadditivity cone} (SAC$_\N$) is an obvious outer bound to this region, and will play a key role in this work.\footnote{\,Of course one can obtain a more stringent bound by also imposing SSA, but this inequality will not play any role here. Intuitively, one reason for this is that in the holographic setting, because of MMI, SSA cannot be saturated without also saturating SA. In other words, SSA is a redundant inequality for the description of the HEC.\label{ft:SSA}} 
The SAC$_\N$ is defined as the pointed polyhedral cone specified by all instances of SA for $\N$ parties
\begin{equation}
\label{eq:sac-def}
    \mi(\X:\Y)\geq0,\ \ \ \forall\,\X,\Y\;\; \text{with}\;\; \X\cap\Y=\varnothing,\; \ \X,\Y\neq\varnothing,
\end{equation}
where we have written an instance of SA in terms of the corresponding instance of the \textit{mutual information} (MI). In \eqref{eq:sac-def}, notice that when one of the arguments of an MI instance contains the purifier, $\mi(\X:\Y)\geq0$ is not written as an inequality in entropy space. In these cases, we always implicitly assume to have replaced each term $\ent_\Z$ in the inequality where $0\in\Z$ with $\ent_{\comp{\Z}}$, effectively mapping an instance of SA to one of the \textit{Araki-Lieb} inequality.

Even if the SAC is a rather weak outer bound on the region of entropy space accessible to quantum states, this of course does not imply that some its faces cannot be reached by realizable entropy vectors. If an entropy vector $\Svec$ belongs to a face of the SAC with dimension less than $\D$, this means that at least one instance of the mutual information $\mi(\X:\Y)$ vanishes for $\Svec$, and if $\Svec$ is realizable by a density matrix, it must be that the corresponding marginal factorizes $\rho_{\X\Y}=\rho_\X\otimes\rho_\Y$, i.e., that the subsystems $\X$ and $\Y$ are uncorrelated \cite{Nielsen_Chuang_2010}. For this reason, the collection of all vanishing MI instances for an entropy vector $\Svec$ in the \textit{interior} of a face $\face$ of the SAC is called a \textit{pattern of marginal independence} (PMI). Notice that for any given face, all vectors in its interior share the same PMI. The importance of $\Svec$ being in the interior of $\face$ is that if $\Svec$ lies instead on the boundary of $\face$, the set of vanishing MI instances is larger, corresponding to another face and therefore another PMI. For a given value of $\N$ and entropy vector $\Svec$ which obeys all instances of SA, the PMI $\pmi$ \textit{of} $\Svec$ is then defined as the subset of the MI-poset given by
\begin{equation}
\label{eq:pmi-map-def}
    \pmi=\pmimap(\Svec) \coloneq \{\mi(\X:\Y)\ |\ \; \mi(\X:\Y)(\Svec)=0\},
\end{equation}
where $\mi(\X:\Y)(\Svec)$ denotes the value of $\mi(\X:\Y)$ for $\Svec$, and we have introduced a map $\pmimap$ which associates to an entropy vector the corresponding PMI. The dot in the notation $\pmimap$ is meant to indicate that the map acts on entropy vectors, rather than on extended objects like faces or subspaces (we will use a similar notation for other maps below). We will return to this correspondence between PMIs and the interior of a face of the SAC in \S\ref{subsec:cg-and-fg}.

A natural question, intimately connected to the understanding of fundamental constraints on the entropy of subsystems in quantum mechanics, is which faces of the SAC can be reached by realizable entropy vectors, or equivalently, which PMIs are \textit{realizable}. This question was termed the \textit{quantum marginal independence problem} (QMIP) in \cite{Hernandez-Cuenca:2019jpv}. Naturally, as we shall explore in later sections, the same question can also be posed for restricted classes of states, such as classical probability distributions, stabilizer states, or geometric states in holography (equivalently, holographic graph models). Note that the realizability of a PMI is weaker than the realizability of an entropy vector, since the fact that a PMI is realizable neither implies nor requires that all entropy vectors in the interior of the corresponding face of the SAC are themselves realizable.

Already for small values of $\N$, it is evident that the vast majority of PMIs are not realizable by quantum states due to SSA, and one should therefore restrict attention to PMIs that are SSA-\textit{compatible}.\footnote{\,A PMI is \textit{compatible} with SSA if in the interior of the corresponding face of the SAC there exists at least one entropy vector which satisfies all instances of SSA.} One could, of course, address this issue by considering a smaller cone, defined not only by SA but also by SSA. However, this approach would introduce many additional faces, which are at least partially associated with the saturation of SSA, rather than with SA alone. We will instead concentrate on the faces of the SAC, motivated by the fact that the conjecture in \cite{Hernandez-Cuenca:2022pst} connects the structure of the HEC to marginal independence, rather than \textit{conditional} independence (see also \Cref{ft:SSA}).

To restrict the set of faces of the SAC that merit consideration, it is convenient, following \cite{He:2022bmi}, to impose an additional condition on PMIs which, while slightly weaker than SSA, has the advantage of being expressible entirely as a combinatorial constraint. This condition, referred to as \textit{Klein's condition} (KC) in \cite{He:2022bmi}, simply requires that for each MI instance $\mi(\X:\Y)$ in a PMI $\pmi$, all MI instances of the form $\mi(\X':\Y')$, where $\X'\subseteq\X$ and $\Y'\subseteq\Y$ (or $\X'\subseteq\Y$ and $\Y'\subseteq\X$), are also included in $\pmi$. A PMI satisfying this requirement is called a KC-PMI, and we leave it as a straightforward exercise for the reader to verify that KC is implied by SSA.\footnote{\,As shown in \cite{He:2022bmi}, at $\N=3$ all KC-PMIs are SSA-compatible. However, this is no longer true for $\N\geq 4$, demonstrating that, as a constraint on the set of realizable PMIs, KC is in general strictly weaker than SSA (for entropy vectors, this is immediate).}

\begin{figure}[tb]    
    \centering
        \begin{tikzpicture}
    \node (0a123) at (2.2,3) {{\scriptsize $\mi(0\!:\!123)$}};
     \node (1a023) at (3.7,3) {{\scriptsize $\mi(1\!:\!023)$}};
     \node (2a013) at (5.2,3) {{\scriptsize $\mi(2\!:\!013)$}};
     \node (3a012) at (6.7,3) {{\scriptsize $\mi(3\!:\!012)$}};
     \node (01a23) at (8.2,3) {{\scriptsize $\mi(01\!:\!23)$}};
     \node (02a13) at (9.7,3) {{\scriptsize $\mi(02\!:\!13)$}};
     \node (03a12) at (11.2,3) {{\scriptsize $\mi(03\!:\!12)$}};
     
    \node (01a2) at (0.1,1) {{\scriptsize $\mi(01\!:\!2)$}};
    \node (02a1) at (1.2,1) {{\scriptsize $\mi(02\!:\!1)$}};
    \node (12a0) at (2.3,1) {{\scriptsize $\mi(12\!:\!0)$}};
    \node (01a3) at (3.7,1) {{\scriptsize $\mi(01\!:\!3)$}};
    \node (03a1) at (4.8,1) {{\scriptsize $\mi(03\!:\!1)$}};
    \node (13a0) at (5.9,1) {{\scriptsize $\mi(13\!:\!0)$}};
    \node (02a3) at (7.3,1) {{\scriptsize $\mi(02\!:\!3)$}};
    \node (03a2) at (8.4,1) {{\scriptsize $\mi(03\!:\!2)$}};
    \node (23a0) at (9.5,1) {{\scriptsize $\mi(23\!:\!0)$}};
    \node (12a3) at (10.9,1) {{\scriptsize $\mi(12\!:\!3)$}};
    \node (13a2) at (12,1) {{\scriptsize $\mi(13\!:\!2)$}};
    \node (23a1) at (13.1,1) {{\scriptsize $\mi(23\!:\!1)$}};

    \node (0a1) at (1.8,-1) {{\scriptsize $\mi(0\!:\!1)$}};
     \node (0a2) at (3.8,-1) {{\scriptsize $\mi(0\!:\!2)$}};
     \node (1a2) at (5.8,-1) {{\scriptsize $\mi(1\!:\!2)$}};
     \node (0a3) at (7.8,-1) {{\scriptsize $\mi(0\!:\!3)$}};
     \node (1a3) at (9.8,-1) {{\scriptsize $\mi(1\!:\!3)$}};
     \node (2a3) at (11.8,-1) {{\scriptsize $\mi(2\!:\!3)$}};

    \draw[-,gray,very thin] (0a123.south) -- (12a0.north);
    \draw[-,gray,very thin] (0a123.south) -- (13a0.north);
    \draw[-,gray,very thin] (0a123.south) -- (23a0.north);
    \draw[-,gray,very thin] (1a023.south) -- (02a1.north);
    \draw[-,gray,very thin] (1a023.south) -- (03a1.north);
    \draw[-,gray,very thin] (1a023.south) -- (23a1.north);
    \draw[-,gray,very thin] (2a013.south) -- (01a2.north);
    \draw[-,gray,very thin] (2a013.south) -- (03a2.north);
    \draw[-,gray,very thin] (2a013.south) -- (13a2.north);
    \draw[-,gray,very thin] (3a012.south) -- (01a3.north);
    \draw[-,gray,very thin] (3a012.south) -- (02a3.north);
    \draw[-,gray,very thin] (3a012.south) -- (12a3.north);
    \draw[-,gray,very thin] (01a23.south) -- (01a2.north);
    \draw[-,gray,very thin] (01a23.south) -- (01a3.north);
    \draw[-,gray,very thin] (01a23.south) -- (23a0.north);
    \draw[-,gray,very thin] (01a23.south) -- (23a1.north);
    \draw[-,gray,very thin] (02a13.south) -- (02a1.north);
    \draw[-,gray,very thin] (02a13.south) -- (13a0.north);
    \draw[-,gray,very thin] (02a13.south) -- (02a3.north);
    \draw[-,gray,very thin] (02a13.south) -- (13a2.north);
    \draw[-,gray,very thin] (03a12.south) -- (12a0.north);
    \draw[-,gray,very thin] (03a12.south) -- (03a1.north);
    \draw[-,gray,very thin] (03a12.south) -- (03a2.north);
    \draw[-,gray,very thin] (03a12.south) -- (12a3.north);

    \draw[-,gray,very thin] (01a2.south) -- (0a2.north);
    \draw[-,gray,very thin] (01a2.south) -- (1a2.north);
    \draw[-,gray,very thin] (02a1.south) -- (0a1.north);
    \draw[-,gray,very thin] (02a1.south) -- (1a2.north);
    \draw[-,gray,very thin] (12a0.south) -- (0a1.north);
    \draw[-,gray,very thin] (12a0.south) -- (0a2.north);
    \draw[-,gray,very thin] (01a3.south) -- (0a3.north);
    \draw[-,gray,very thin] (01a3.south) -- (1a3.north);
    \draw[-,gray,very thin] (03a1.south) -- (0a1.north);
    \draw[-,gray,very thin] (03a1.south) -- (1a3.north);
    \draw[-,gray,very thin] (13a0.south) -- (0a1.north);
    \draw[-,gray,very thin] (13a0.south) -- (0a3.north);
    \draw[-,gray,very thin] (02a3.south) -- (0a3.north);
    \draw[-,gray,very thin] (02a3.south) -- (2a3.north);
    \draw[-,gray,very thin] (03a2.south) -- (0a2.north);
    \draw[-,gray,very thin] (03a2.south) -- (2a3.north);
    \draw[-,gray,very thin] (23a0.south) -- (0a2.north);
    \draw[-,gray,very thin] (23a0.south) -- (0a3.north);
    \draw[-,gray,very thin] (12a3.south) -- (1a3.north);
    \draw[-,gray,very thin] (12a3.south) -- (2a3.north);
    \draw[-,gray,very thin] (13a2.south) -- (1a2.north);
    \draw[-,gray,very thin] (13a2.south) -- (2a3.north);
    \draw[-,gray,very thin] (23a1.south) -- (1a2.north);
    \draw[-,gray,very thin] (23a1.south) -- (1a3.north);
     
    \end{tikzpicture}
    \caption{The Hasse diagram of the $\N=3$ MI-poset.
    }
    \label{fig:N3-MI-poset}
\end{figure}
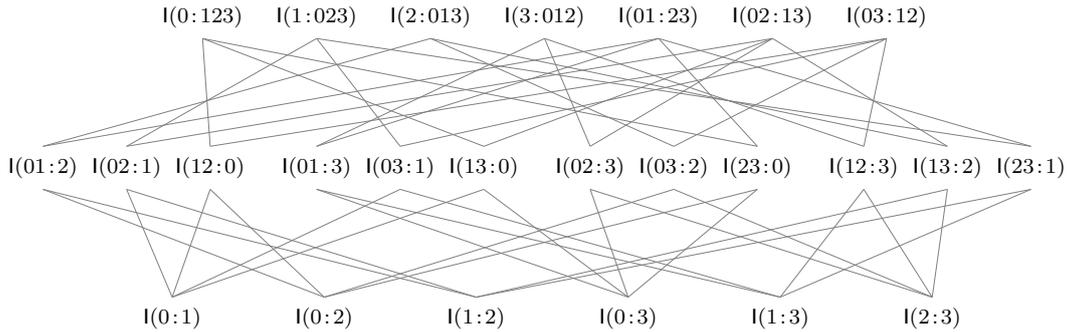

As a condition on the structure of a PMI, KC can be formulated more conveniently by introducing a partial order on the set $\mgs_\N$ of all MI instances 
\begin{equation}
\label{eq:mi_order}
    \mi(\X':\Y')\preceq\mi(\X:\Y) \;\; \iff \;\; \X'\subseteq\X\;\text{{\small and}}\;\Y'\subseteq\Y\;\ \text{or}\;\;\X'\subseteq\Y\;\text{{\small and}}\;\Y'\subseteq\X.
\end{equation}
We call the resulting poset $(\mgs_{\N},\preceq)$ the $\N$-party \textit{mutual information poset}, or simply the MI-\textit{poset} (see \Cref{fig:N3-MI-poset} for a diagrammatic representation of the $\N=3$ case). A PMI $\pmi$ is then KC-PMI if and only if $\pmi$ is a \textit{down-set} in the MI-poset.\footnote{\,A down-set in a poset $\A$ is a subset $\B\subset\A$ such that for every $b\in\B$ and $a\preceq b$, $a\in\B$. An \textit{up-set} is defined analogously, replacing $\preceq$ with $\succeq$.} It is important to keep in mind, however, that an arbitrary down-set $\ds$ in the MI-poset is not necessarily a PMI. The reason for this is that $\ds$ is not necessarily closed with respect to the linear dependence among the MI instances, and even if it is, it is not necessarily compatible with all instances of SA (we will present an example in the next subsection, cf., \Cref{fig:D-not-P}).\footnote{\,More precisely, as discussed in \cite{He:2022bmi}, given $\ds$, there might be some MI instance $\mi\notin\ds$ which is a linear combination of the instances in $\ds$. Furthermore, when this is not the case, one can meaningfully associate to $\ds$ a linear subspace of entropy space (the intersection of the hyperplanes corresponding to the equations of the form $\mi=0$ for all $\mi\in\pmi$), but this subspace is not necessarily spanned by a face of the SAC.}

\subsection{Review of $\bsets$ and the correlation hypergraph}
\label{subsec:correlation-hypergraph}

As reviewed in the previous subsection, any KC-PMI can be conveniently  described as a down-set in the MI-poset, but one of the key results of \cite{Hubeny:2024fjn} was the observation that any KC-PMI also admits a more compact (and much more suggestive) representation in terms of a certain hypergraph, called the \textit{correlation hypergraph} of the PMI. Before we can review the definition, we need to introduce a certain partition of the MI poset and explain how a KC-PMI can be described in terms of this partition.

For any subsystem $\X\subseteq\nsp$ such that $|\X|\geq 2$, we define the \textit{$\bset$ of $\X$} as the set of MI instances whose arguments form a bipartition of $\X$, i.e.,
\begin{equation}
    \bs{\X} =\{\mi(\Y:\Z)\ |\ \; \Y\cup\Z =\X\}.
\end{equation}
The collection of $\bsets$ for all subsystems $\X\subseteq\nsp$ with $|\X|\geq 2$ is a partition of $\mgs_\N$. To simplify the presentation, in what follows we will often keep the constraint $|\X|\geq 2$ implicit when we work with $\bsets$, and will only mention it explicitly when it is particularly relevant. 

Given an arbitrary down-set $\ds$ in the MI-poset (not necessarily a PMI), and a $\bset$ $\bs{\X}$, we say that $\bs{\X}$ is \textit{vanishing} if all its elements are in $\ds$, \textit{positive} if they are all in its complement ($\cds$), and \textit{partial} if at least one is in $\ds$ and one in $\cds$. 
While this nomenclature is of course inspired by the notion of a PMI, here we regard $\ds$ and $\cds$ merely as specific sets of MI instances, without any value assigned to these elements.
We then denote by $\ac$ the antichain in the MI-poset that ``generates'' the up-set $\cds$, i.e.,
\begin{equation}
\label{eq:antichain-gen}
    \cds =\; \uparrow\!\ac \coloneq \{\mi\in\mgs \ |\ \; \exists\,\mi'\in\ac,\, \mi\succeq\mi'\},
\end{equation}
and we say that $\bs{\X}$ is \textit{essential} if it contains an element of $\ac$. In the particular case where all elements of $\bs{\X}$ are in $\ac$, $\bs{\X}$ is said to be \textit{completely essential}. An example of down-set $\ds$ in the $\N=3$ MI-poset which is not a PMI, and its corresponding $\bset$ classification, are shown in \Cref{fig:D-not-P}.

\begin{figure}[tb]    
    \centering
        \begin{tikzpicture}

    \draw[draw=none,rounded corners,fill=PminusEcol] (1.3,2.6) rectangle (12.1,3.4);

    \draw[draw=none,rounded corners,fill=EminusPcol] (-0.55,0.6) rectangle (2.9,1.4);
    \draw[draw=none,rounded corners,fill=Qcol] (3.05,0.6) rectangle (6.5,1.4);
    \draw[draw=none,rounded corners,fill=Mcol] (6.65,0.6) rectangle (10.1,1.4);
    \draw[draw=none,rounded corners,fill=EminusMcol] (10.25,0.6) rectangle (13.7,1.4);

    \draw[draw=none,rounded corners,fill=Vcol] (1.2,-1.4) rectangle (2.3,-0.6);
    \draw[draw=none,rounded corners,fill=Vcol] (3.2,-1.4) rectangle (4.3,-0.6);
    \draw[draw=none,rounded corners,fill=Vcol] (5.2,-1.4) rectangle (6.3,-0.6);
    \draw[draw=none,rounded corners,fill=Vcol] (7.2,-1.4) rectangle (8.3,-0.6);
    \draw[draw=none,rounded corners,fill=Mcol] (9.2,-1.4) rectangle (10.3,-0.6);
    \draw[draw=none,rounded corners,fill=Vcol] (11.2,-1.4) rectangle (12.3,-0.6);

    \node () at (2.2,3) {{\scriptsize $\mi(0\!:\!123)_{_{\!+}}$}};
     \node () at (3.7,3) {{\scriptsize $\mi(1\!:\!023)_{_{\!+}}$}};
     \node () at (5.2,3) {{\scriptsize $\mi(2\!:\!013)_{_{\!+}}$}};
     \node () at (6.7,3) {{\scriptsize $\mi(3\!:\!012)_{_{\!+}}$}};
     \node () at (8.2,3) {{\scriptsize $\mi(01\!:\!23)_{_{\!+}}$}};
     \node () at (9.7,3) {{\scriptsize $\mi(02\!:\!13)_{_{\!+}}$}};
     \node () at (11.2,3) {{\scriptsize $\mi(03\!:\!12)_{_{\!+}}$}};
     
    \node () at (0.1,1) {{\scriptsize $\mi(01\!:\!2)_{_{\!+}}$}};
    \node () at (1.2,1) {{\scriptsize $\mi(02\!:\!1)_{_0}$}};
    \node () at (2.3,1) {{\scriptsize $\mi(12\!:\!0)_{_{\!+}}$}};
    \node () at (3.7,1) {{\scriptsize $\mi(01\!:\!3)_{_{\!+}}$}};
    \node () at (4.8,1) {{\scriptsize $\mi(03\!:\!1)_{_{\!+}}$}};
    \node () at (5.9,1) {{\scriptsize $\mi(13\!:\!0)_{_0}$}};
    \node () at (7.3,1) {{\scriptsize $\mi(02\!:\!3)_{_{\!+}}$}};
    \node () at (8.4,1) {{\scriptsize $\mi(03\!:\!2)_{_{\!+}}$}};
    \node () at (9.5,1) {{\scriptsize $\mi(23\!:\!0)_{_{\!+}}$}};
    \node () at (10.9,1) {{\scriptsize $\mi(12\!:\!3)_{_{\!+}}$}};
    \node () at (12,1) {{\scriptsize $\mi(13\!:\!2)_{_{\!+}}$}};
    \node () at (13.1,1) {{\scriptsize $\mi(23\!:\!1)_{_{\!+}}$}};

    \node () at (1.8,-1) {{\scriptsize $\mi(0\!:\!1)_{_0}$}};
     \node () at (3.8,-1) {{\scriptsize $\mi(0\!:\!2)_{_0}$}};
     \node () at (5.8,-1) {{\scriptsize $\mi(1\!:\!2)_{_0}$}};
     \node () at (7.8,-1) {{\scriptsize $\mi(0\!:\!3)_{_0}$}};
     \node () at (9.8,-1) {{\scriptsize $\mi(1\!:\!3)_{_{\!+}}$}};
     \node () at (11.8,-1) {{\scriptsize $\mi(2\!:\!3)_{_0}$}};

    \node (0123) at (6.7,3.7) {{\scriptsize $\bs{0123}$}};
    
    \node (012) at (1.2,1.7) {{\scriptsize $\bs{012}$}};
    \node (013) at (4.8,1.7) {{\scriptsize $\bs{013}$}};
    \node (023) at (8.4,1.7) {{\scriptsize $\bs{023}$}};
    \node (123) at (12,1.7) {{\scriptsize $\bs{123}$}};

    \node (01) at (1.8,-0.3) {{\scriptsize $\bs{01}$}};
    \node (02) at (3.8,-0.3) {{\scriptsize $\bs{02}$}};
    \node (12) at (5.8,-0.3) {{\scriptsize $\bs{12}$}};
    \node (03) at (7.8,-0.3) {{\scriptsize $\bs{03}$}};
    \node (13) at (9.8,-0.3) {{\scriptsize $\bs{13}$}};
    \node (23) at (11.8,-0.3) {{\scriptsize $\bs{23}$}};

    \draw[-,gray,very thin] (6.7,2.6) -- (012.north);
    \draw[-,gray,very thin] (6.7,2.6) -- (013.north);
    \draw[-,gray,very thin] (6.7,2.6) -- (023.north);
    \draw[-,gray,very thin] (6.7,2.6) -- (123.north);

    \draw[-,gray,very thin] (1.175,0.6) -- (01.north);
    \draw[-,gray,very thin] (1.175,0.6) -- (02.north);
    \draw[-,gray,very thin] (1.175,0.6) -- (12.north);
    \draw[-,gray,very thin] (4.775,0.6) -- (01.north);
    \draw[-,gray,very thin] (4.775,0.6) -- (03.north);
    \draw[-,gray,very thin] (4.775,0.6) -- (13.north);
    \draw[-,gray,very thin] (8.375,0.6) -- (02.north);
    \draw[-,gray,very thin] (8.375,0.6) -- (03.north);
    \draw[-,gray,very thin] (8.375,0.6) -- (23.north);
    \draw[-,gray,very thin] (11.975,0.6) -- (12.north);
    \draw[-,gray,very thin] (11.975,0.6) -- (13.north);
    \draw[-,gray,very thin] (11.975,0.6) -- (23.north);

    \draw[-,very thick] (-0.2,0.75) -- (0.3,0.75);
    \draw[-,very thick] (2,0.75) -- (2.5,0.75);
    \draw[-,very thick] (7,0.75) -- (7.5,0.75);
    \draw[-,very thick] (8.1,0.75) -- (8.6,0.75);
    \draw[-,very thick] (9.2,0.75) -- (9.7,0.75);
    \draw[-,very thick] (11.7,0.75) -- (12.2,0.75);

    \draw[-,very thick] (9.5,-1.25) -- (10,-1.25);
     
    \end{tikzpicture}
    \caption{An example (from \cite{Hubeny:2024fjn}) of a down-set $\ds$ in the $\N=3$ MI-poset which is \textit{not} a PMI, with the corresponding $\bset$ classification. $\ds$ is the set of MI instances indicated by the lower index $0$, while $+$ labels the elements of the complementary up-set $\cds$. The elements of the antichain $\ac$ that generates $\cds$ are underlined. 
    Each $\bset$ is represented by a colored box: positive but not essential (red), essential but not completely essential (orange), partial and not essential (teal), partial and essential (green), completely essential (yellow), and vanishing (blue). The gray lines indicate the order relations among the $\bsets$, which is simply given by inclusion. The key aspect of this example is that $\bs{012}$ is both essential and partial (green), since it contains elements of both $\ac$ and $\ds$, which as discussed in the main text, is not possible for KC-PMIs. To see that $\ds$ is indeed not a PMI, it suffices to notice that it is not consistent with the linear dependence among the MI instances. Imposing that all MI instances in $\ds$ vanish, it is immediate to verify that 
    $\mi(0:1)=0$, $\mi(0:2)=0$, and $\mi(02:1)=0$ imply $\mi(01:2)=0$, which should therefore be included in $\ds$, while it is not. Similar relations further imply that $\mi(12:0)=0$, and that in any KC-PMI $\pmi$ such that $\pmi\supseteq\ds$, the $\bset$ $\bs{012}$ is actually vanishing and no longer essential.
    }
    \label{fig:D-not-P}
\end{figure}

Suppose now that for a fixed down-set $\ds$ we are only given the set of its essential $\bsets$, rather than the specific elements of $\ac$. It should be intuitively clear that in general this information is not sufficient to unambiguously reconstruct $\ds$.\footnote{\,To see this, it suffices to find two down-sets with the same $\bset$ classification. We leave it as a simple exercise for the reader.} The key result of \cite{Hubeny:2024fjn}, however, is that unlike for arbitrary down-sets, for any KC-PMI \textit{all essential $\bsets$ are positive}, and this removes the ambiguity. While we refer the reader to \cite{Hubeny:2024fjn} for the proof of positivity of essential $\bsets$ for KC-PMIs, it is easy to see why this makes the reconstruction possible. 

Let $\pmi$ be a KC-PMI, suppose that we are only given the set of its essential $\bsets$, and let $\bs{\X}$ be one such $\bset$. Similarly to the case of arbitrary down-sets, it appears that we still do not know which elements of $\bs{\X}$ are in $\ac$, and which ones are not (although we will see below how to also efficiently extract this information). Nevertheless, if all elements of $\bs{\X}$ are positive, i.e., they are in $\cpmi$, then for any $\mi\in\bs{\X}$ such that $\mi\notin\ac$ there exists some $\mi'\preceq\mi$ such that $\mi'\in\ac$ (since $\ac$ generates $\cpmi$, \eqref{eq:antichain-gen}). Since $\bsets$ form a partition of the MI-poset, $\mi'\in\bs{\X'}$ for some $\X'$, and by definition, $\bs{\X'}$ is essential. (If it is not completely essential, then it contains some other $\mi''\notin\ac$, but we can repeat the same argument.) Therefore, if we take the up-set of each MI instance in each essential $\bset$ of $\pmi$, and then the union of all these up-sets, we obtain $\cpmi$. In summary, for any KC-PMI $\pmi$, the reconstruction of its complement $\cpmi$ from the essential $\bsets$ is given by
\begin{equation}
\label{eq:reconstruction}
    \cpmi = \!\!\!\bigcup_{\substack{\mi\in\bs{\X} \\ \bs{\X}\, \text{essential}}} \!\!\!\!\!\uparrow \!\mi \, = \, \{\mi'\in\mgs_\N \ | \ \, 
    \mi' \succeq \mi, 
    \, \mi\in\bs{\X},\, \bs{\X}\; \text{is essential}\}.
\end{equation}
Furthermore, from the argument we just presented, it is a simple exercise to verify that exactly the same formula also works if we replace the essential $\bsets$ with the positive ones (as in fact was done in \cite{Hubeny:2024fjn}). Notice that if we try to apply \eqref{eq:reconstruction} to the example in \Cref{fig:D-not-P}, $\cds$ would also include $\mi(02:1)$ (since it belongs to an essential $\bset$), and if we use the alternative version based on positive $\bsets$, $\cds$ does not include $\mi(01:2)$ and $\mi(12:0)$ (since they belong to a partial $\bset$). In either case, the result of the formula is not $\cds$. 
On the other hand, it is easy to see that the formula gives the correct result for a KC-PMI (see the example discussed below and illustrated in \Cref{fig:PMI-down-set-bsets}).

Since the knowledge of positive or essential $\bsets$ is sufficient to characterize any KC-PMI, we can represent any KC-PMI using this data, and accordingly, introduce the following definition.
\begin{defi}[Correlation hypergraph of a KC-PMI]
    For an arbitrary number of parties $\N$, and an arbitrary \emph{KC-PMI} $\pmi$, the correlation hypergraph of $\pmi$ is the  hypergraph $\hp$ with $\N+1$ vertices $v_0,v_1,\ldots,v_\N$ corresponding to the parties (including the purifier), and a hyperedge $h_\X=\{v_\ell|\,\ell\in\X\}$ for each subsystem $\X$ whose $\bset$ $\bs{\X}$ is positive.
\end{defi}
Notice that, given any hypergraph $\gf{H}=(V,E)$ with vertex set $V$ and set of hyperedges $E$, one can assign a party to each vertex, interpret each hyperedge as a $\bset$, and apply \eqref{eq:reconstruction} to obtain the up-set $\cds$ and its complementary down-set $\ds$ in the MI-poset. Since not all down-sets can be represented in this way in terms of a hypergraph, one may then wonder whether the restriction automatically implemented by this hypergraph representation characterizes the subsets of down-sets of the MI-poset that are indeed KC-PMIs. In other words, one may ask whether any hypergraph is the correlation hypergraph of a KC-PMI. Unfortunately, this is not the case, as can be readily verified by considering, for any $|V|\geq 3$, the hypergraph with the single hyperedge $h_{\nsp}$. We leave the details as a simple exercise for the reader. 

The hyperedges of the correlation hypergraph—and, in particular, the connectivity of certain subhypergraphs—can be intuitively interpreted as indicating the presence or absence of correlations among subsystems. To make this connection precise, let us first recall a few additional properties of $\bsets$ for KC-PMIs from \cite{Hubeny:2024fjn}. Suppose we are given a KC-PMI $\pmi$ described solely by its correlation hypergraph, rather than by its set of vanishing MI instances. Observe that for a given KC-PMI $\pmi$, if the $\bset$ $\bs{\X}$ corresponding to some subsystem $\X$ is either vanishing or positive, the sign of all MI instances in $\bs{\X}$ is immediately determined: in the former case, they are all vanishing, and in the latter, all positive.
In contrast, if $\bs{\X}$ is partial, it is not immediately clear which MI instances vanish and which are positive. Likewise, if $\bs{\X}$ is essential but not completely essential, we cannot directly tell which MI instances belong to $\ac$ and which do not. Of course, by using \eqref{eq:reconstruction}, we can always reconstruct $\pmi$ from the positive $\bsets$ and thereby determine this information. However, a key advantage of the correlation hypergraph representation is that it allows us to make these determinations directly, without the need to perform this reconstruction step.

As shown in \cite{Hubeny:2024fjn}, for a given KC-PMI, if a $\bset$ $\bs{\X}$ is partial, or essential but not completely essential, the MI instances that are (respectively) vanishing, or belong to $\ac$, can be conveniently characterized by means of a specific partition $\Gamma(\X)$ of $\X$, constructed as follows. We first consider all subsystems $\Y\subset\X$ such that $\bs{\Y}$ is positive, select the maximal ones with respect to inclusion, and denote them by $\X_i$, where $i\in[n]$ and $n$ is the number of subsystems with this property. It turns out that when $\bs{\X}$ is partial or essential but not completely essential, these subsystems are pairwise disjoint and form a partition of a subset of $\X$; in general, they do not cover all of $\X$. The partition $\Gamma(\X)$ can therefore be written as
\begin{equation}
\label{eq:gamma-partition}
\Gamma(\X) = \{\X_1,\ldots,\X_n,\ell_{n+1},\ldots,\ell_{n+\tilde{n}}\},
\end{equation}
where each $\bs{\X_i}$ is positive and maximal for $i\in[n]$, and the singletons $\ell_{n+1},\ldots,\ell_{n+\tilde{n}}$ correspond to the parties not contained in any $\X_i$, completing the partition. This partition will be called the $\Gamma$-\textit{partition} of $\X$ in what follows. It was then shown in \cite{Hubeny:2024fjn} that, if $\bs{\X}$ is partial, the vanishing MI instances in $\bs{\X}$ are precisely those of the form $\mi(\Y:\Z)$, where both $\Y$ and $\Z$ are unions of collections of elements of $\Gamma(\X)$. Furthermore, in the case where $\bs{\X}$ is essential, the same prescription identifies the MI instances in $\ac$ (recall that there are no partial and simultaneously essential $\bsets$, since for KC-PMIs all essential $\bsets$ are positive). 

We can also reinterpret the remaining classes of $\bsets$ in terms of $\Gamma(\X)$. If a $\bset$ $\bs{\X}$ is positive but not essential, the collection $\{\X_i\}_{i\in[n]}$ of subsystems corresponding to maximal positive $\bsets$ with $\X_i\subset\X$ does not form a partition of $\X$. In this case, we may simply define $\Gamma(\X)$ to be the trivial partition, namely $\{\X\}$. To see this, note that one can think of $\Gamma(\X)$ as the finest partition such that each $\X_i$ is contained within one of its elements. However, as shown in \cite{Hubeny:2024fjn}, for positive $\bsets$,
all $\X_i$ intersect pairwise, and the union of any two equals $\X$. If, instead, there are no $\X_i\subset\X$ such that $\bs{\X_i}$ is positive, then $\Gamma(\X)$ is the singleton partition (the finest possible partition of $\X$), and $\bs{\X}$ is completely essential. Similarly, if $\bs{\X}$ is not positive, it must be either partial or vanishing, and it is vanishing if and only if $\Gamma(\X)$ is again the singleton partition. 

\begin{figure}[tb]    
    \centering
        \begin{tikzpicture}

    \draw[draw=none,rounded corners,fill=PminusEcol] (1.3,2.6) rectangle (12.1,3.4);

    \draw[draw=none,rounded corners,fill=PminusEcol] (-0.55,0.6) rectangle (2.9,1.4);
    \draw[draw=none,rounded corners,fill=PminusEcol] (3.05,0.6) rectangle (6.5,1.4);
    \draw[draw=none,rounded corners,fill=PminusEcol] (6.65,0.6) rectangle (10.1,1.4);
    \draw[draw=none,rounded corners,fill=EminusMcol] (10.25,0.6) rectangle (13.7,1.4);

    \draw[draw=none,rounded corners,fill=Mcol] (1.2,-1.4) rectangle (2.3,-0.6);
    \draw[draw=none,rounded corners,fill=Mcol] (3.2,-1.4) rectangle (4.3,-0.6);
    \draw[draw=none,rounded corners,fill=Mcol] (5.2,-1.4) rectangle (6.3,-0.6);
    \draw[draw=none,rounded corners,fill=Mcol] (7.2,-1.4) rectangle (8.3,-0.6);
    \draw[draw=none,rounded corners,fill=Vcol] (9.2,-1.4) rectangle (10.3,-0.6);
    \draw[draw=none,rounded corners,fill=Vcol] (11.2,-1.4) rectangle (12.3,-0.6);

    \node () at (2.2,3) {{\scriptsize $\mi(0\!:\!123)_{_{\!+}}$}};
     \node () at (3.7,3) {{\scriptsize $\mi(1\!:\!023)_{_{\!+}}$}};
     \node () at (5.2,3) {{\scriptsize $\mi(2\!:\!013)_{_{\!+}}$}};
     \node () at (6.7,3) {{\scriptsize $\mi(3\!:\!012)_{_{\!+}}$}};
     \node () at (8.2,3) {{\scriptsize $\mi(01\!:\!23)_{_{\!+}}$}};
     \node () at (9.7,3) {{\scriptsize $\mi(02\!:\!13)_{_{\!+}}$}};
     \node () at (11.2,3) {{\scriptsize $\mi(03\!:\!12)_{_{\!+}}$}};
     
    \node () at (0.1,1) {{\scriptsize $\mi(01\!:\!2)_{_{\!+}}$}};
    \node () at (1.2,1) {{\scriptsize $\mi(02\!:\!1)_{_{\!+}}$}};
    \node () at (2.3,1) {{\scriptsize $\mi(12\!:\!0)_{_{\!+}}$}};
    \node () at (3.7,1) {{\scriptsize $\mi(01\!:\!3)_{_{\!+}}$}};
    \node () at (4.8,1) {{\scriptsize $\mi(03\!:\!1)_{_{\!+}}$}};
    \node () at (5.9,1) {{\scriptsize $\mi(13\!:\!0)_{_+}$}};
    \node () at (7.3,1) {{\scriptsize $\mi(02\!:\!3)_{_{\!+}}$}};
    \node () at (8.4,1) {{\scriptsize $\mi(03\!:\!2)_{_{\!+}}$}};
    \node () at (9.5,1) {{\scriptsize $\mi(23\!:\!0)_{_{\!+}}$}};
    \node () at (10.9,1) {{\scriptsize $\mi(12\!:\!3)_{_{\!+}}$}};
    \node () at (12,1) {{\scriptsize $\mi(13\!:\!2)_{_{\!+}}$}};
    \node () at (13.1,1) {{\scriptsize $\mi(23\!:\!1)_{_{\!+}}$}};

    \node () at (1.8,-1) {{\scriptsize $\mi(0\!:\!1)_{_{\!+}}$}};
     \node () at (3.8,-1) {{\scriptsize $\mi(0\!:\!2)_{_{\!+}}$}};
     \node () at (5.8,-1) {{\scriptsize $\mi(1\!:\!2)_{_{\!+}}$}};
     \node () at (7.8,-1) {{\scriptsize $\mi(0\!:\!3)_{_{\!+}}$}};
     \node () at (9.8,-1) {{\scriptsize $\mi(1\!:\!3)_{_{0}}$}};
     \node () at (11.8,-1) {{\scriptsize $\mi(2\!:\!3)_{_0}$}};

    \node (0123) at (6.7,3.7) {{\scriptsize $\bs{0123}$}};
    
    \node (012) at (1.2,1.7) {{\scriptsize $\bs{012}$}};
    \node (013) at (4.8,1.7) {{\scriptsize $\bs{013}$}};
    \node (023) at (8.4,1.7) {{\scriptsize $\bs{023}$}};
    \node (123) at (12,1.7) {{\scriptsize $\bs{123}$}};

    \node (01) at (1.8,-0.3) {{\scriptsize $\bs{01}$}};
    \node (02) at (3.8,-0.3) {{\scriptsize $\bs{02}$}};
    \node (12) at (5.8,-0.3) {{\scriptsize $\bs{12}$}};
    \node (03) at (7.8,-0.3) {{\scriptsize $\bs{03}$}};
    \node (13) at (9.8,-0.3) {{\scriptsize $\bs{13}$}};
    \node (23) at (11.8,-0.3) {{\scriptsize $\bs{23}$}};

    \draw[-,gray,very thin] (6.7,2.6) -- (012.north);
    \draw[-,gray,very thin] (6.7,2.6) -- (013.north);
    \draw[-,gray,very thin] (6.7,2.6) -- (023.north);
    \draw[-,gray,very thin] (6.7,2.6) -- (123.north);

    \draw[-,gray,very thin] (1.175,0.6) -- (01.north);
    \draw[-,gray,very thin] (1.175,0.6) -- (02.north);
    \draw[-,gray,very thin] (1.175,0.6) -- (12.north);
    \draw[-,gray,very thin] (4.775,0.6) -- (01.north);
    \draw[-,gray,very thin] (4.775,0.6) -- (03.north);
    \draw[-,gray,very thin] (4.775,0.6) -- (13.north);
    \draw[-,gray,very thin] (8.375,0.6) -- (02.north);
    \draw[-,gray,very thin] (8.375,0.6) -- (03.north);
    \draw[-,gray,very thin] (8.375,0.6) -- (23.north);
    \draw[-,gray,very thin] (11.975,0.6) -- (12.north);
    \draw[-,gray,very thin] (11.975,0.6) -- (13.north);
    \draw[-,gray,very thin] (11.975,0.6) -- (23.north);

    \draw[-,very thick] (10.6,0.75) -- (11.1,0.75);
    
    \draw[-,very thick] (1.5,-1.25) -- (2,-1.25);
    \draw[-,very thick] (3.5,-1.25) -- (4,-1.25);
    \draw[-,very thick] (5.5,-1.25) -- (6,-1.25);
    \draw[-,very thick] (7.5,-1.25) -- (8,-1.25);
     
    \end{tikzpicture}
    \caption{An example of an $\N=3$ KC-PMI with the corresponding classification of $\bsets$. The notation and color coding are the same as in \Cref{fig:D-not-P}. Notice that $\bs{123}$ is essential, and that $\Gamma(123)=\{12,3\}$, since the only $\X_i\subset\X$ with positive $\bs{\X_i}$ is $\X_i=12$. Accordingly, the MI instance in $\bs{123}$ which belongs to $\ac$ is $\mi(12:3)$. For any other positive $\bs{\X}$ instead, $\Gamma(\X)$ is the trivial partition since the maximal $\X_i\subset\X$ with positive $\bs{\X_i}$ do not form a partition of $\X$. For example, for $\X=012$ we have $i=3$ and $\X_1=01$, $\X_2=02$ and $\X_3=12$. Furthermore, notice that as discussed in the main text, these subsystems intersect pairwise, and the union of any pair is $\X=012$. 
    }
    \label{fig:PMI-down-set-bsets}
\end{figure}
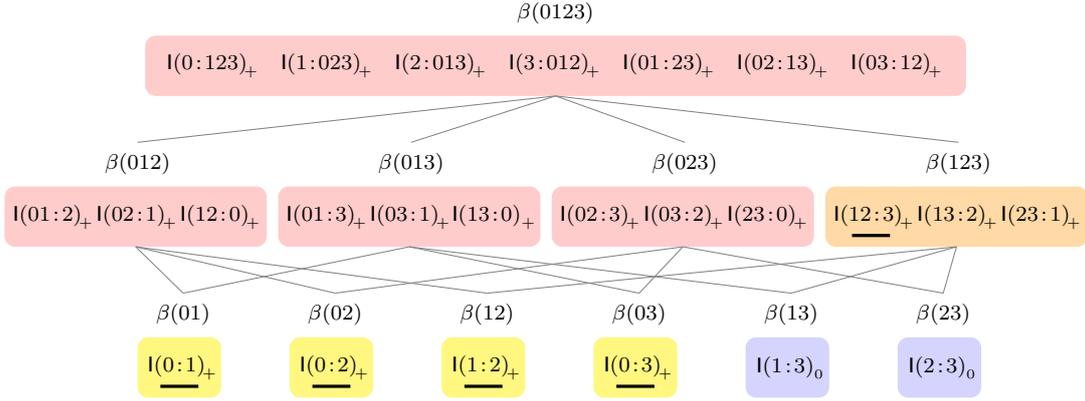

For convenience, we collect these properties of $\Gamma$-partitions into a theorem, where we say that an MI instance $\mi(\Y:Z)$ \textit{splits} $\X$ if $\X\cap\Y\neq\varnothing$ and $\X\cap\Z\neq\varnothing$. For an example about how to derive and interpret these partitions, see \Cref{fig:PMI-down-set-bsets}.

\begin{thm}
\label{thm:gamma-partitions}
    For any \emph{KC-PMI} $\pmi$, and any subsystem $\X$:
    \begin{enumerate}[label={\footnotesize \emph{(\roman*)}}]
    \item $\Gamma(\X)$ is trivial if and only if $\bs{\X}$ is positive and not essential,
    \item $\Gamma(\X)$ is the singleton partition if and only if $\bs{\X}$ is either vanishing, or completely essential (and therefore positive), 
    \item if $\bs{\X}$ is positive (respectively, 
    partial), the \emph{MI} instances in $\bs{\X}$ which belong to $\ac$ (respectively, $\pmi$) are precisely the ones which do not split any $X_i$ in the  $\Gamma$-partition \eqref{eq:gamma-partition}. In case \emph{(i)} above there are no such instances, while in case \emph{(ii)} all instances have this property.
    \end{enumerate}
\end{thm}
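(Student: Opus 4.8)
The plan is to prove the theorem by sorting the $\bsets$ of the KC-PMI $\pmi$ into five mutually exclusive and exhaustive classes that the results recalled in \S\ref{subsec:correlation-hypergraph} make available, and then reading off each of the three claims from this classification. Since for a KC-PMI no $\bset$ is simultaneously partial and essential, every $\bset$ $\bs{\X}$ is exactly one of: \textbf{(a)} positive and not essential; \textbf{(b)} positive, essential, but not completely essential; \textbf{(c)} completely essential; \textbf{(d)} partial; \textbf{(e)} vanishing. The inputs I will take from \cite{Hubeny:2024fjn} are: that the subsystems $\X_i\subsetneq\X$ maximal under inclusion among those with $\bs{\X_i}$ positive are pairwise disjoint and partition a subset of $\X$ exactly in classes (b) and (d), whereas in class (a) they pairwise intersect nontrivially and the union of any two equals $\X$; that in class (d) the vanishing instances of $\bs{\X}$ are precisely the $\mi(\Y:\Z)$ with $\Y$ and $\Z$ each a union of blocks of $\Gamma(\X)$; and that in classes (b)--(c) the same prescription picks out the instances of $\bs{\X}$ lying in $\ac$.

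For part \emph{(iii)} I would first record the elementary observation that, because $\Y\cap\Z=\varnothing$ and $\Y\cup\Z=\X$ for any $\mi(\Y:\Z)\in\bs{\X}$, and because each singleton block of $\Gamma(\X)$ automatically lies inside $\Y$ or inside $\Z$, the condition ``$\Y$ and $\Z$ are each a union of blocks of $\Gamma(\X)$'' is \emph{equivalent} to ``$\mi(\Y:\Z)$ does not split any $\X_i$''. Given this, the cited facts immediately yield (iii) in classes (d) and (b)--(c). For class (a) I would show directly that \emph{every} $\mi(\Y:\Z)\in\bs{\X}$ splits some $\X_i$: if not, each $\X_i$ lies entirely in $\Y$ or entirely in $\Z$; since any two $\X_i$ have union $\X$, they cannot all lie in $\Y$ (else $\Z=\varnothing$) nor all in $\Z$, so some $\X_i\subseteq\Y$ and $\X_j\subseteq\Z$, forcing $\X_i\cap\X_j=\varnothing$ and contradicting the pairwise intersection property. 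Hence $\bs{\X}$ contains no non-splitting instance, matching the statement that there are none in $\ac$. Class (e) is the vacuous case: there are no $\X_i$ and every instance lies in $\pmi$.

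For parts \emph{(i)} and \emph{(ii)} I would track $\Gamma(\X)$ across the five classes. By definition $\Gamma(\X)=\{\X\}$ in class (a); it is the singleton partition exactly in classes (c) and (e), since in both there are no positive proper sub-$\bsets$ --- in (c) by the contrapositive of the criterion recalled in \S\ref{subsec:correlation-hypergraph}, and in (e) because a positive sub-$\bset$ on $\Y\subsetneq\X$ would, through $\mi(\Y_1:\Y_2\cup(\X\setminus\Y))\succeq\mi(\Y_1:\Y_2)$ and the up-set property of $\cpmi$, force a positive instance into $\bs{\X}$ --- so $n=0$ and \eqref{eq:gamma-partition} reduces to all singletons; while in classes (b) and (d) one has $n\geq 1$, hence $\Gamma(\X)$ contains a multi-party block $\X_1\subsetneq\X$ and is neither $\{\X\}$ nor the singleton partition. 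Reading this table in both directions gives (i) $\Gamma(\X)$ trivial $\iff$ class (a), i.e.\ iff $\bs{\X}$ is positive and not essential, and (ii) $\Gamma(\X)$ the singleton partition $\iff$ class (c) or (e), i.e.\ iff $\bs{\X}$ is completely essential or vanishing. I would also remark that the degenerate case $|\X|=2$ causes no trouble: there classes (a) and (d) are empty, and $\{\{a\},\{b\}\}$ is still distinct from $\{\{a,b\}\}$ as a partition.

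I expect the only real obstacle to be bookkeeping rather than mathematics: since $\Gamma(\X)$ is \emph{defined} by different rules in different classes, one must take care that the phrase ``does not split any $\X_i$'' in part (iii) always refers to the genuine maximal positive \emph{proper} sub-$\bsets$ and never to the block $\X$ appearing in the trivial partition of class (a); the class-(a) argument above, showing that in that case there are no non-splitting instances, is exactly what reconciles the uniform statement of (iii) with the ``no such instances'' clause. (See \Cref{fig:PMI-down-set-bsets} for the $\N=3$ illustration of all of these possibilities.)
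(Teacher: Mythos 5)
Your proposal is correct and follows essentially the same route as the paper, which simply defers to Theorems 5 and 6 of \cite{Hubeny:2024fjn} together with the surrounding discussion of $\Gamma$-partitions; your five-class case analysis is an explicit bookkeeping of exactly those imported facts. The small deductions you supply (the equivalence of ``both arguments are unions of blocks'' with ``no $\X_i$ is split,'' the class-(a) argument that every instance splits some $\X_i$, and the up-set argument ruling out positive sub-$\bsets$ of a vanishing $\bs{\X}$) are all sound and are precisely what the paper leaves implicit in the citation.
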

\begin{proof}
    See \cite{Hubeny:2024fjn} (in particular Theorem 5 and 6), and the discussion above.
\end{proof}

We can now reinterpret these results more intuitively in the language of the correlation hypergraph. Consider an arbitrary subsystem $\X$. If there is no hyperedge corresponding to $\X$, then, by definition, $\bs{\X}$ is not positive and is therefore either partial or vanishing. In this case, we can ``zoom in'' on $\X$ by deleting all vertices associated with parties not in $\X$, thereby obtaining the corresponding induced subhypergraph, which is necessarily disconnected. Its connected components correspond to the maximal positive $\bsets$, while the vanishing MI instances are precisely those whose arguments do not ``cut'' any of these hyperedges. In the particular case where $\bs{\X}$ is vanishing, the induced subhypergraph is trivial.

Now suppose instead that $\bs{\X}$ is positive and that there exists a corresponding hyperedge. We can again zoom in on $\X$, obtaining a subhypergraph that is trivially connected due to this hyperedge. We may then ``look inside'' $\X$ by ignoring this hyperedge and examining the internal connectivity of the remaining structure. If the resulting subhypergraph remains connected, $\bs{\X}$ is not essential: the parties within $\X$ remain correlated and continue to ``hold together'' even in the absence of that hyperedge. Conversely, if the subhypergraph becomes disconnected, $\bs{\X}$ is essential, and—as in the partial case—the elements of $\ac$ are precisely those MI instances whose arguments do not cut any hyperedge. Finally, if the subhypergraph is trivial, this corresponds to $\bs{\X}$ being completely essential.

\subsection{A new look at the KC-lattice}
\label{subsec:kc-lattice}

As first shown in \cite{He:2022bmi}, the set of all KC-PMIs at any given $\N$ is a \textit{complete lattice},\footnote{\,A poset is a lattice if any pair of elements has a least upper bound, called the \textit{join}, and a greatest lower bound, called the \textit{meet}. A lattice is said to be \textit{complete} if meet and join exist for any collection of elements. Any finite lattice is a complete lattice.} denoted by $\lat{KC}^\N$. The partial order corresponds to inclusion
\begin{equation}
\label{eq:kc-order}
    \pmi_1 \preceq \pmi_2 \iff \pmi_1\subseteq\pmi_2,
\end{equation}
and the \textit{meet}\footnote{\,For completeness we remark that the join can be described geometrically as the highest dimensional face of the SAC$_\N$ whose PMI is a KC-PMI and such that it is contained in the boundary of both $\face_1$ and $\face_2$ (the faces corresponding to $\pmi_1$ and $\pmi_2$). A combinatorial characterization however is currently not known, and we will not need to invoke it in what follows.} to intersection
\begin{equation}
\label{eq:kc-meet}
    \pmi_1 \wedge \pmi_2 = \pmi_1 \cap \pmi_2.
\end{equation}
Geometrically, ``smaller'' PMIs contain fewer MI instances and therefore correspond to higher-dimensional faces of the SAC$_\N$.
Given two entropy vectors $\Svec_1\in\text{int}(\face_1)$ and $\Svec_2\in\text{int}(\face_2)$, where $\face_1$ and $\face_2$ are the faces of SAC$_\N$ corresponding to $\pmi_1$ and $\pmi_2$, any linear combination of $\Svec_1$ and $\Svec_2$ with strictly positive coefficients gives an entropy vector in the interior of the face corresponding to $\pmi_1 \wedge \pmi_2$ \cite{He:2022bmi}, and in particular
\begin{equation}
\label{eq:pmi-meet-svec-sum}
    \pmimap(\Svec_1+\Svec_2) = \pmimap(\Svec_1) \wedge \pmimap(\Svec_2),
\end{equation}
which will be convenient below.

The bottom element of $\lat{KC}^\N$ is $\pmi_{\perp}=\varnothing$, which corresponds to the interior of the SAC$_\N$ (its $\D$-dimensional face), and the top element is $\pmi_{\top}=\mgs_\N$, which corresponds to the tip of the SAC$_\N$ (its 0-dimensional face). In the language of the correlation hypergraph, these correspond (respectively) to the complete hypergraph (with all possible hyperedges\footnote{\,For all hypergraphs considered in this work, we always assume that they have no ``loops'' (single vertex hyperedges).}) and the trivial hypergraph (with no hyperedges).

The goal of this subsection is to rephrase \eqref{eq:kc-order} and \eqref{eq:kc-meet} in the language of $\bsets$ and the correlation hypergraph. For \eqref{eq:kc-order} this is quite straightforward: It is reversed inclusion of the set of hyperedges, as shown by the following result, where we denote by $\pos(\pmi)$ the set of positive $\bsets$ of a KC-PMI $\pmi$.

\begin{thm}
\label{thm:partial-order-h}
    For any $\N$, and any two \emph{KC-PMI}s $\pmi_1,\pmi_2$
    \begin{equation}
        \pmi_1\preceq\pmi_2 \iff \pos(\pmi_1)\supseteq\pos(\pmi_2).
    \end{equation}
\end{thm}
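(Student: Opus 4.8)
The plan is to pass to the complementary up-sets and prove the two implications
\[
\pmi_1 \subseteq \pmi_2 \;\Longleftrightarrow\; \pos(\pmi_1) \supseteq \pos(\pmi_2)
\]
separately, recalling that by \eqref{eq:kc-order} the statement $\pmi_1\preceq\pmi_2$ is literally $\pmi_1\subseteq\pmi_2$, equivalently $\comp{\pmi_1}\supseteq\comp{\pmi_2}$. The key structural input is the reconstruction formula \eqref{eq:reconstruction}, which is valid precisely because $\pmi_1$ and $\pmi_2$ are KC-PMIs. I would flag at the outset the asymmetry between the two directions: the forward implication holds for arbitrary down-sets and is essentially a tautology, whereas the reverse implication is where the KC hypothesis is genuinely needed.

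For the forward direction I would argue directly from the definition of a positive $\bset$. Assume $\pmi_1\subseteq\pmi_2$, so $\comp{\pmi_1}\supseteq\comp{\pmi_2}$. If $\bs{\X}\in\pos(\pmi_2)$, then every MI instance in $\bs{\X}$ lies in $\comp{\pmi_2}$, hence in $\comp{\pmi_1}$, so $\bs{\X}$ is also positive for $\pmi_1$, i.e.\ $\bs{\X}\in\pos(\pmi_1)$. This gives $\pos(\pmi_1)\supseteq\pos(\pmi_2)$ using nothing beyond monotonicity of complementation — in particular it does not use that the $\pmi_i$ are KC-PMIs.

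For the reverse direction, assume $\pos(\pmi_1)\supseteq\pos(\pmi_2)$. Here I would invoke \eqref{eq:reconstruction}, which exhibits $\comp{\pmi_i}$ as a manifestly monotone function of the collection of positive $\bsets$:
\[
\comp{\pmi_i} \;=\; \bigcup_{\substack{\mi\in\bs{\X}\\ \bs{\X}\in\pos(\pmi_i)}} \uparrow\!\mi \qquad (i=1,2).
\]
Since the index set of the union for $i=2$ is contained in that for $i=1$, we get $\comp{\pmi_2}\subseteq\comp{\pmi_1}$, and taking complements yields $\pmi_1\subseteq\pmi_2$, i.e.\ $\pmi_1\preceq\pmi_2$.

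I do not expect a real obstacle: once \eqref{eq:reconstruction} is available the whole argument is a two-line monotonicity check, and the substantive content of the theorem is carried entirely by that reconstruction result from \cite{Hubeny:2024fjn}. The one point I would stress in a short remark is why the KC hypothesis cannot be dropped in the reverse direction: for a general down-set $\ds$ the positive $\bsets$ do not determine $\ds$, and \Cref{fig:D-not-P} already shows that feeding the positive $\bsets$ of a non-KC down-set into \eqref{eq:reconstruction} returns the wrong up-set — so the reverse implication really does need both $\pmi_1$ and $\pmi_2$ to be KC-PMIs. Finally, since $\pos(\pmi)$ is in bijection with the hyperedge set of the correlation hypergraph $\hp$, I would restate the conclusion in that language: $\pmi_1\preceq\pmi_2$ iff the correlation hypergraph of $\pmi_1$ has (weakly) more hyperedges than that of $\pmi_2$, i.e.\ the KC-order is reversed inclusion of hyperedge sets, consistent with the informal statement preceding the theorem and with the identifications $\pmi_\perp\leftrightarrow$ complete hypergraph, $\pmi_\top\leftrightarrow$ trivial hypergraph.
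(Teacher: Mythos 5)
Your proof is correct and follows essentially the same route as the paper's: the forward direction by complementation and the definition of positive $\bsets$, and the reverse direction by monotonicity of the reconstruction formula \eqref{eq:reconstruction} in the set of positive $\bsets$. The additional remarks on where the KC hypothesis enters are accurate but not part of the paper's argument.
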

\begin{proof}
    In the forward direction, if $\pmi_1\preceq\pmi_2$, then $\pmi_1\subseteq\pmi_2$ (cf., \eqref{eq:kc-order}). Taking the complements we obtain $\cpmi_1\supseteq\cpmi_2$, implying that for every subsystem $\X$, if $\bs{\X}$ is positive for $\pmi_2$ it is also positive for $\pmi_1$.

    Conversely, given any two KC-PMIs $\pmi_1$ and $\pmi_2$ such that $\pos(\pmi_1)\supseteq\pos(\pmi_2)$, the reconstruction formula \eqref{eq:reconstruction} implies that for any MI instance $\mi$ in $\cpmi_2$, $\mi$ is also in $\cpmi_1$, which means that $\cpmi_1\supseteq\cpmi_2$ and taking the complement we conclude the proof.
\end{proof}

It is important to notice that \Cref{thm:partial-order-h} does not  provide direct access to the cover relations in $\lat{KC}^\N$.\footnote{\,Given a poset $(\A,\preceq)$, and two elements $a,b\in\A$, $a$ is said to \textit{cover} $b$ if $b\prec a$ and there is no element $c\in\A$, such that $b\prec c\prec a$.} In particular, given the correlation hypergraph $\hp$ for some KC-PMI $\pmi$, it is not in general the case that by deleting a hyperedge from $\hp$ one obtains a hypergraph which is the correlation hypergraph for some other KC-PMI $\pmi'\succeq\pmi$. The main reason for this is that, as mentioned above, not all hypergraphs are correlation hypergraphs for some KC-PMI. 

Having described the partial order relation, we now proceed to describe the meet of the KC-lattice. To do so, it will be convenient to first briefly review the definition of meet and join in the \textit{lattice of partitions} of a finite set, and introduce a certain \textit{closure operator}\footnote{\,For any set $\A$, a closure operator \textit{on} $\A$ is a map $\cl:2^{\A}\rightarrow 2^{\A}$, where $2^{\A}$ is the power set of $\A$, such that for all subsets $\B,\C$: $\cl(\B)\supseteq \B$, $\cl (\cl(\B))=\cl(\B)$, and $\B\subseteq\C \implies \cl(\B)\subseteq\cl(\C)$.\label{ft:closure}} on the hyperedges of a hypergraph. 

Given a set $\A$, a partition $\Upsilon$ of $\A$, and an element $a\in \A$, we denote by $\eqcl{a}$ the \textit{block} (i.e., element of the partition) that contains $a$. Since any partition $\Upsilon$ of $\A$ can be interpreted as an equivalence relation $\sim$ on $\A$, and vice versa any equivalence relation naturally induces a partition $\A/\sim$, the block $\eqcl{a}$ can also be interpreted as the \textit{equivalence class} under $\sim$ represented by the element $a$. A partition $\Upsilon'$ is said to be a \textit{refinement} of another partition $\Upsilon$ if every element of $\Upsilon'$ is a subset of some element of $\Upsilon$. We will say that $\Upsilon'$ is \textit{finer} than $\Upsilon$ and $\Upsilon$ is \textit{coarser} than $\Upsilon'$. This relation among partitions is a partial order on the set of all partitions of $\A$, and we will write $\Upsilon'\preceq\Upsilon$. The resulting poset is also a lattice, with meet and join corresponding to the following operations. Given two partitions $\Upsilon_1$ and $\Upsilon_2$ of $\A$, the meet $\Upsilon_1\wedge\Upsilon_2$ is the partition whose blocks are the intersections of all pairs of blocks from $\Upsilon_1$ and $\Upsilon_2$, excluding the empty ones. To obtain the join, we can consider an equivalence relation between the blocks $\eqcl{a}\in\Upsilon_1$ and $\eqcl{b}\in\Upsilon_2$ where $\eqcl{a}\sim\eqcl{b}$ if $\eqcl{a}\cap \eqcl{b}\neq\varnothing$. The join $\Upsilon_1\vee\Upsilon_2$ is then the partition where each block is the union of a family of blocks of $\Upsilon_1$ and $\Upsilon_2$ connected by this relation. Equivalently, since the join is the least upper bound of $\Upsilon_1$ and $\Upsilon_2$ in the poset, we can think about the join as the finest partition which is greater than both $\Upsilon_1$ and $\Upsilon_2$.

To describe the meet in the KC-lattice in the language of the correlation hypergraph, we also need to introduce an operation on the set of hyperedges 
defined for any hypergraph, which we will call the \textit{weak union closure}.   
This operation will have applications beyond what we discuss in this subsection since, as we will see momentarily, any hypergraph that is the correlation hypergraph of a KC-PMI is necessarily invariant under this transformation.

\begin{defi}[Weak union closure]
    Given a hypergraph $\gf{H}=(V,E)$, we define its weak union closure $\text{\emph{cl}}_{\cup}(\gf{H})$ as the hypergraph $\gf{H}'=(V',E')$ with $V'=V$ and $E'$ corresponding to the end point $E_n$ of the 
    sequence $E_0\subset E_1\subset\ldots\subset E_n$ where $E_0=E$ and 
    \begin{equation}
        E_{i+1}=E_i\cup\{h\cup h'\ |\ \; h,h'\in E_i\;,\; h\cap h'\neq\varnothing\}.
    \end{equation}
\end{defi}
In other words, we include all hyperedges corresponding to unions of overlapping hyperedges.
In the above definition notice that the sequence is guaranteed to be finite by the requirement that the inclusion $E_i\subset E_{i+1}$ is strict, and the fact that $V$ is a finite set. 
We leave it as an exercise for the reader to verify that the map from $E_0$ to $E_n$ is indeed a closure operator (cf., \Cref{ft:closure}) on the set $2^V$ of possible hyperedges. 

For any $\N$ and any KC-PMI $\pmi$, the correlation hypergraph of $\pmi$ is invariant under this operation, since the set of hyperedges is already closed,\footnote{\,Since by construction there is a bijection between the positive $\bsets$ and the set of hyperedges, to simplify the notation we use the same symbol for the corresponding weak union closure operators acting on these sets.}
\begin{equation}
\label{eq:corr-hyp-inv}
    \clw(\pos(\pmi)) = \pos(\pmi).
\end{equation}
This property follows from the fact \cite{Hubeny:2024fjn} that, for any KC-PMI, if for two overlapping subsystems $\X$ and $\Y$ both $\bs{\X}$ and $\bs{\Y}$ are positive, then $\bs{\X\cup\Y}$ is necessarily also positive. This behavior is, in fact, closely related to the properties of the $\Gamma$-partitions \eqref{eq:gamma-partition} reviewed in the previous subsection (see \cite{Hubeny:2024fjn} for further details). 
The same analysis further demonstrates that
\begin{equation}
\label{eq:pos-from-ess}
    \pos (\pmi) = \clw (\ess (\pmi)),
\end{equation}
where $\ess(\pmi)$ denotes the set of essential $\bsets$ of $\pmi$. Lastly, notice that \eqref{eq:corr-hyp-inv} imposes a restriction on the set of hypergraphs that could potentially be correlation hypergraphs of KC-PMIs, and one might wonder whether this restriction is sufficient to characterize precisely which hypergraphs correspond to correlation hypergraphs of KC-PMIs. Unfortunately, this is again not the case, as demonstrated already by the previous example, of a hypergraph with a single hyperedge containing all vertices, which is clearly invariant under $\clw$, even if it is not a correlation hypergraph.

\begin{figure}[tbp]
    \centering
    \begin{subfigure}{0.3\textwidth}
    \centering
    \begin{tikzpicture}
    
    \draw[Mcol!80!black, very thick] (-1,0) -- (0,1.5);
    
    \filldraw (-1,0) circle (2pt);
    \filldraw (1,0) circle (2pt);
    \filldraw (0,1.5) circle (2pt);
    
    \node[] () at (-1.2,-0.2) {{\scriptsize $v_1$}};
    \node[] () at (1.2,-0.2) {{\scriptsize $v_2$}};
    \node[] () at (0,1.8) {{\scriptsize $v_0$}};

    \node[] () at (-0.9,0.75) {{\scriptsize $h_{01}$}};
    \end{tikzpicture}
    \subcaption[]{}
    \end{subfigure}
    \begin{subfigure}{0.3\textwidth}
    \centering
    \begin{tikzpicture}
    
    \draw[Mcol!80!black, very thick] (1,0) -- (0,1.5);
    
    \filldraw (-1,0) circle (2pt);
    \filldraw (1,0) circle (2pt);
    \filldraw (0,1.5) circle (2pt);
    
    \node[] () at (-1.2,-0.2) {{\scriptsize $v_1$}};
    \node[] () at (1.2,-0.2) {{\scriptsize $v_2$}};
    \node[] () at (0,1.8) {{\scriptsize $v_0$}};

    \node[] () at (0.9,0.75) {{\scriptsize $h_{02}$}};
    \end{tikzpicture}
    \subcaption[]{}
    \end{subfigure}
    \begin{subfigure}{0.3\textwidth}
    \centering
    \begin{tikzpicture}
    \draw[PminusEcol, rounded corners, very thick] (-1.6,-0.6) rectangle (1.6,2.1);
    
    \draw[Mcol!80!black, very thick] (-1,0) -- (0,1.5);
    \draw[Mcol!80!black, very thick] (1,0) -- (0,1.5);
    
    \filldraw (-1,0) circle (2pt);
    \filldraw (1,0) circle (2pt);
    \filldraw (0,1.5) circle (2pt);
    
    \node[] () at (-1.2,-0.2) {{\scriptsize $v_1$}};
    \node[] () at (1.2,-0.2) {{\scriptsize $v_2$}};
    \node[] () at (0,1.8) {{\scriptsize $v_0$}};

    \node[] () at (-0.9,0.75) {{\scriptsize $h_{01}$}};
    \node[] () at (0.9,0.75) {{\scriptsize $h_{02}$}};
    \node[] () at (0,-0.4) {{\scriptsize $h_{012}$}};
    \end{tikzpicture}
    \subcaption[]{}
    \end{subfigure}
    \caption{A simple example of two KC-PMIs $\pmi_1$ and $\pmi_2$ (whose correlation hypergraphs are indicated respectively in (a) and (b)) such that for their meet (c), $\pos(\pmi_1\wedge\pmi_2)$ strictly contains the union of $\pos(\pmi_1)$ and $\pos(\pmi_2)$. Notice that in (c) the hyperedge $h_{012}$ is added to the set of hyperedges of the correlation hypergraph of the meet 
    by the week union closure of the union of $\pos(\pmi_1)$ and $\pos(\pmi_2)$, since $h_{01}\cap h_{02}=\{v_0\}\neq\varnothing$. The color of the hyperedges corresponds to the classification of the associated $\bsets$ as in \Cref{fig:D-not-P}.
    Following the presentation in \cite{Hubeny:2024fjn}, we are representing a 2-edge by a line segment and higher $k$-edge by a rounded polygon around the corresponding vertices.
    }
    \label{fig:kc-meet-bs-inclusion}
\end{figure}
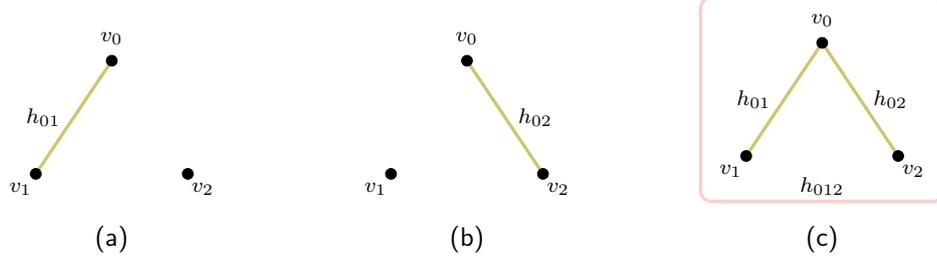

With these definitions we can finally describe the meet in the KC-lattice from the point of view of $\bsets$ and the correlation hypergraph. Because of \eqref{eq:kc-meet}, given the correlation hypergraphs of $\pmi_1$ and $\pmi_2$, it is clear that to find the correlation hypergraph of their meet we could simply reconstruct $\pmi_1$ and $\pmi_2$ explicitly, take their intersection, and then recover the desired hypergraph. The key point of the following result is that it tells us how to obtain the correlation hypergraph of $\pmi_1\wedge\pmi_2$ directly, without going through these steps. Similarly, if we are interested in the transformation of the $\Gamma$-partition for a specific subsystem $\X$, we could read off this data from the correlation hypergraph of the meet. The theorem however also provides a formula for a direct computation from the same data in $\pmi_1$ and $\pmi_2$. For a simple example of the first relation  \eqref{eq:meet-pos-bsets} below, see \Cref{fig:kc-meet-bs-inclusion}.

\begin{thm}
\label{thm:meet}
    For any given $\N$, and any two $\N$-party \emph{KC-PMI}s $\pmi_{1}$ and $\pmi_{2}$, the set of positive $\bsets$ of $\pmi_1\wedge\pmi_2$ is given by
    \begin{equation}
    \label{eq:meet-pos-bsets}
         \pos(\pmi_1\wedge\pmi_2) = \text{\emph{cl}}_{\cup} \left(\pos(\pmi_1) \cup \pos(\pmi_2)\right).
    \end{equation}
    Furthermore, for each subsystem $\X$, 
    \begin{equation}
    \label{eq:meet-partition}
        \Gamma_{\pmi_1\wedge\pmi_2}(\X) = \Gamma_{\pmi_1}(\X) \vee \Gamma_{\pmi_2}(\X),
    \end{equation}
    where $\Gamma_{\pmi}(\X)$ is the $\Gamma$-partition of $\X$ in the \emph{PMI} $\pmi$, and $\vee$ is the join in the lattice of partitions of $\X$.
\end{thm}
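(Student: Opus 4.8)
The plan is to prove \eqref{eq:meet-pos-bsets} first, since it carries essentially all of the content, and then to deduce \eqref{eq:meet-partition} from it together with the combinatorial description of the $\Gamma$-partitions. For the first formula, I would start from $\pmi_1\wedge\pmi_2=\pmi_1\cap\pmi_2$ (cf.\ \eqref{eq:kc-meet}), so that $\comp{(\pmi_1\wedge\pmi_2)}=\cpmi_1\cup\cpmi_2$. The inclusion ``$\supseteq$'' is then immediate: if a $\bset$ $\bs{\X}$ is positive for $\pmi_i$, all of its elements lie in $\cpmi_i\subseteq\comp{(\pmi_1\wedge\pmi_2)}$, so $\bs{\X}$ is positive for $\pmi_1\wedge\pmi_2$; thus $\pos(\pmi_1)\cup\pos(\pmi_2)\subseteq\pos(\pmi_1\wedge\pmi_2)$ (equivalently, apply \Cref{thm:partial-order-h} to $\pmi_1\wedge\pmi_2\preceq\pmi_i$), and since $\clw$ is a monotone closure operator and $\pos(\pmi_1\wedge\pmi_2)$ is $\clw$-invariant by \eqref{eq:corr-hyp-inv}, this yields $\clw(\pos(\pmi_1)\cup\pos(\pmi_2))\subseteq\pos(\pmi_1\wedge\pmi_2)$.

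The reverse inclusion ``$\subseteq$'' is the crux of the argument. Here I would use \eqref{eq:pos-from-ess}, $\pos(\pmi_1\wedge\pmi_2)=\clw(\ess(\pmi_1\wedge\pmi_2))$, to reduce the claim to proving $\ess(\pmi_1\wedge\pmi_2)\subseteq\pos(\pmi_1)\cup\pos(\pmi_2)$, after which applying the monotone operator $\clw$ closes the gap. To prove this inclusion, let $\bs{\X}$ be an essential $\bset$ of $\pmi_1\wedge\pmi_2$ and choose an element $\mi$ of $\bs{\X}$ lying in the antichain that generates $\comp{(\pmi_1\wedge\pmi_2)}=\cpmi_1\cup\cpmi_2$; that is, $\mi$ is a minimal element of $\cpmi_1\cup\cpmi_2$. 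Without loss of generality $\mi\in\cpmi_1$, and since $\mi$ is minimal in the larger set it is in particular minimal in $\cpmi_1$, hence belongs to the antichain generating $\cpmi_1$; therefore $\bs{\X}$ is essential for $\pmi_1$, and since every essential $\bset$ of a KC-PMI is positive \cite{Hubeny:2024fjn} we conclude $\bs{\X}\in\pos(\pmi_1)$. I expect this step to be the main obstacle: it is the only place where one must feed in genuine structural information about KC-PMIs---the positivity of essential $\bsets$, i.e.\ the validity of the reconstruction \eqref{eq:reconstruction}---rather than perform purely formal set-theoretic manipulations.

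To obtain \eqref{eq:meet-partition} I would use the hypergraph reformulation of \Cref{thm:gamma-partitions}: for a KC-PMI $\pmi$, $\Gamma_{\pmi}(\X)$ is the partition of $\X$ into connected components of the subhypergraph obtained by restricting the correlation hypergraph to the vertices in $\X$ and then deleting the hyperedge $h_\X$ if present---equivalently, the finest partition of $\X$ in which every $\Y\subsetneq\X$ with $\bs{\Y}$ positive is contained in a single block (one should check that this uniform description is consistent with all the cases listed in \Cref{thm:gamma-partitions}). From $\pos(\pmi_i)\subseteq\pos(\pmi_1\wedge\pmi_2)$ it follows that $\Gamma_{\pmi_1\wedge\pmi_2}(\X)$ also has the defining property of $\Gamma_{\pmi_i}(\X)$ for $i=1,2$, hence is coarser than each of them and therefore coarser than their join: $\Gamma_{\pmi_1}(\X)\vee\Gamma_{\pmi_2}(\X)\preceq\Gamma_{\pmi_1\wedge\pmi_2}(\X)$. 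For the opposite relation I would verify that the join itself has the defining property of $\Gamma_{\pmi_1\wedge\pmi_2}(\X)$: if $\Y\subsetneq\X$ has $\bs{\Y}$ positive for the meet, then by \eqref{eq:meet-pos-bsets} one can write $\Y=\bigcup_j\Z_j$ as a weak union of subsystems $\Z_j\subsetneq\X$ whose $\bsets$ are positive for $\pmi_1$ or for $\pmi_2$, with the overlap graph on the family $\{\Z_j\}$ connected; each $\Z_j$ lies in a single block of the corresponding $\Gamma_{\pmi_i}(\X)$, hence in a single block of the join, and since overlapping members of the family are forced into the same block, all of $\Y$ lies in one block of the join. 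Combining the two relations gives $\Gamma_{\pmi_1\wedge\pmi_2}(\X)=\Gamma_{\pmi_1}(\X)\vee\Gamma_{\pmi_2}(\X)$, as claimed.
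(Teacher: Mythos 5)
Your proof is correct. For \eqref{eq:meet-pos-bsets} you follow essentially the same route as the paper: the easy inclusion from $\pmi_1\wedge\pmi_2=\pmi_1\cap\pmi_2$, and for the converse a reduction to essential $\bsets$ via \eqref{eq:pos-from-ess}, using the fact that a minimal element of $\cpmi_1\cup\cpmi_2$ lying in $\cpmi_1$ is automatically minimal in $\cpmi_1$, hence in $\ac_1$; the paper phrases this as $\ess(\pmi_1\wedge\pmi_2)\subseteq\ess(\pmi_1)\cup\ess(\pmi_2)$ followed by $\ess\subseteq\pos$, but the mechanism is identical and both hinge on the positivity of essential $\bsets$ for KC-PMIs, exactly as you identify. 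For \eqref{eq:meet-partition}, however, you take a genuinely different and shorter path. The paper runs a case analysis over the $\bset$ classification (both partial, both essential, mixed partial/essential, plus the degenerate cases), characterizing $\Gamma(\X)$ through the condition $\mi(\Y:\Z)\in\pmi\iff\{\Y,\Z\}\succeq\Gamma(\X)$ and ruling out strict or incomparable relations between $\Gamma_{12}(\X)$ and the join by exhibiting contradictions with \eqref{eq:kc-meet}. You instead package $\Gamma_\pmi(\X)$ uniformly as the finest partition of $\X$ in which every proper positive sub-$\bset$ sits inside a block, and then read off both inequalities directly from the already-proven hyperedge formula, using that elements of $\clw$ are unions of connected families. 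This buys a case-free argument and makes transparent why the join appears (it is forced by connectivity of overlapping blocks), at the cost of having to verify that your uniform characterization of $\Gamma$ agrees with all the cases of \Cref{thm:gamma-partitions} --- a check that rests on the same structural facts from the prior work (pairwise disjointness of the maximal $\X_i$ in the partial/essential case, and their pairwise-overlapping structure in the positive non-essential case) that the paper's case analysis also invokes, so nothing is hidden; you correctly flag this as the point requiring verification.
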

\begin{proof}
For given $\N$, consider two $\N$-party KC-PMIs $\pmi_{1}$ and $\pmi_{2}$. For an arbitrary subsystem $\X$, we denote by $\bsi{\X}{1}$, $\bsi{\X}{2}$ and $\bsi{\X}{12}$ the $\bset$ of $\X$ respectively in $\pmi_1$, $\pmi_2$ and their meet $\pmi_1\wedge\pmi_2$ (and similarly for the antichains $\ac$ and $\Gamma$-partitions below). It follows immediately from \eqref{eq:kc-meet} that for any subsystem $\X$, if either $\bsi{\X}{1}$ or $\bsi{\X}{2}$ is positive, $\bsi{\X}{12}$ is necessarily also positive, implying
\begin{equation}
\label{eq:pos-inclusion}
    \pos(\pmi_1\wedge\pmi_2) \supseteq \pos(\pmi_1) \cup \pos(\pmi_2).
\end{equation}
On the other hand, if $\bsi{\X}{12}$ is essential, then there must be an MI instance $\mi\in\bsi{\X}{12}$ such that $\mi$ is in the antichain $\ac_{12}$ for $\pmi_1\wedge\pmi_2$,and therefore $\mi\notin \pmi_1\wedge\pmi_2$ but for all $\mi' \prec \mi$, $\mi'\in \pmi_1\wedge\pmi_2$.  Hence $\mi'$ must vanish in both $\pmi_1$ and $\pmi_2$, but $\mi$ must be positive in at least one of $\pmi_1$ or $\pmi_2$, so at least one of $\bsi{\X}{1}$ or $\bsi{\X}{2}$ is essential, implying

\begin{equation}
\label{eq:ess-inclusion}
    \ess(\pmi_1\wedge\pmi_2) \subseteq \ess(\pmi_1) \cup \ess(\pmi_2).
\end{equation}
Taking the closure $\clw$ on both sides of \eqref{eq:pos-inclusion} and \eqref{eq:ess-inclusion}, and using \eqref{eq:corr-hyp-inv} and
\eqref{eq:pos-from-ess}, we obtain
\begin{align}
\label{eq:closures-proof}
    & \pos(\pmi_1\wedge\pmi_2) \supseteq \clw \left(\pos(\pmi_1) \cup \pos(\pmi_2)\right),\nonumber\\
    & \pos(\pmi_1\wedge\pmi_2) \subseteq \clw \left( \ess(\pmi_1) \cup \ess(\pmi_2)\right).
\end{align}
Since $\ess(\pmi_1)\subseteq\pos(\pmi_1)$, and similarly for $\pmi_2$, from the second relation in \eqref{eq:closures-proof} we get
\begin{equation}
    \pos(\pmi_1\wedge\pmi_2) \subseteq \clw \left( \pos(\pmi_1) \cup \pos(\pmi_2)\right),
\end{equation}
which combined with the first relation in \eqref{eq:closures-proof} gives \eqref{eq:meet-pos-bsets}.

To prove \eqref{eq:meet-partition}, we need to consider the various possibilities listed in \Cref{thm:gamma-partitions}. From the discussion above, it should be clear that if $\bsi{\X}{1}$ is positive but not essential,\footnote{\,The choice of $\bsi{\X}{1}$ is w.l.o.g., since otherwise we can simply swap 1 and 2 (and similarly, later in the proof).} and therefore $\Gamma_1(\X)$ is the trivial partition of $\X$, the same must be true for $\bsi{\X}{12}$, in agreement with \eqref{eq:meet-partition}, since in this case
\begin{equation}
    \Gamma_1(\X)\succeq\Gamma_2(\X) \implies \Gamma_{12}(\X)=\Gamma_1(\X)\vee\Gamma_2(\X)=\Gamma_1(\X).
\end{equation}
We then need to consider three possible cases: Both $\bsi{\X}{1}$ and $\bsi{\X}{2}$ are partial, they are both essential, or one is partial and the other is essential (from \Cref{thm:gamma-partitions}, we can view the situations where these $\bsets$ are vanishing or completely essential as special instances of these main three cases, we leave the details as an exercise for the reader).

Consider then two arbitrary $\N$-party KC-PMIs $\pmi_{1}$ and $\pmi_{2}$, and suppose that there exists a subsystem $\X$ such that both $\bsi{\X}{1}$ and $\bsi{\X}{2}$ are partial. Recall that any vanishing MI instance in $\bsi{\X}{1}$ has the property that its arguments constitute a partition of $\X$ that is coarser than $\Gamma_1(\X)$ (and similarly for 2). Using the partial order in the partition lattice of $\X$ we can then write this condition as
\begin{equation}
    \mi(\Y:\Z)\in\pmi_1\;\; \text{with}\;\; \Y\cup\Z=\X \iff \{\Y,\Z\} \succeq \Gamma_1(\X),
\end{equation}
and similarly for $\pmi_2$ and $\pmi_{12}$. Therefore, by \eqref{eq:kc-meet}, 
it must be that
\begin{equation}
\label{eq:gamma-above-join}
    \Gamma_{12}(\X) \succeq \Gamma_1(\X) \vee \Gamma_2(\X).
\end{equation}
To see this, notice that if $\Gamma_{12}(\X)$ and $\Gamma_1(\X) \vee \Gamma_2(\X)$ are incomparable (obviously it cannot be that $\Gamma_{12}(\X) \preceq \Gamma_1(\X) \vee \Gamma_2(\X)$), there is some bipartition $\{\Y',\Z'\}$ of $\X$ such that $\{\Y',\Z'\}\succeq\Gamma_{12}$ while $\{\Y',\Z'\}\nsucceq \Gamma_1$ (or $\Gamma_2$), contradicting the fact that (by \eqref{eq:kc-meet} and \Cref{thm:gamma-partitions}) $\{\Y',\Z'\}\succeq \Gamma_1(\X) \vee \Gamma_2(\X)$ for all MI instances of $\bsi{\X}{12}$ in $\pmi_{12}$.
Furthermore, if \eqref{eq:gamma-above-join} is strict, there is at least one MI instance which is vanishing in $\pmi_1$ and $\pmi_2$ but not in $\pmi_{12}$, which is a contradiction with \eqref{eq:kc-meet}. We have therefore proven \eqref{eq:meet-partition} in the particular case where $\bsi{\X}{1}$ and $\bsi{\X}{2}$ are both partial. Notice however that for essential $\bsets$, the characterization of the set of MI instances which belong to $\ac$ is exactly the same as for the vanishing MI instances in partial $\bsets$. Therefore, by the exact same argument, \eqref{eq:meet-partition} also holds when $\bsi{\X}{1}$ and $\bsi{\X}{2}$ are essential.

We are then left to prove \eqref{eq:meet-partition} in the case where $\bsi{\X}{1}$ is partial and $\bsi{\X}{2}$ is essential. Since any essential $\bset$ is positive, it follows from the discussion above that $\bsi{\X}{12}$ is also positive, and the question is whether it is essential or not, and if so, what are the elements of $\ac_{12}$. Since any positive MI instance $\mi\in\bsi{\X}{1}$ is not in $\ac_1$, by \eqref{eq:kc-meet} it cannot be in $\ac_{12}$. Therefore the only elements of $\ac_2$ in $\bsi{\X}{2}$ which can be in $\ac_{12}$ are the ones that are in $\pmi_1$ (i.e., that are vanishing in $\bsi{\X}{1}$). But in fact, all MI instances $\mi$ which are in $\ac_2$ and $\pmi_1$ must be in $\ac_{12}$. To see this, notice that $\mi\in\ac_2$ implies that any $\mi'\preceq\mi$ is in $\pmi_2$, and by the fact that $\pmi_1$ is a KC-PMI, $\mi\in\pmi_1$ implies that any $\mi'\preceq\mi$ is also in $\pmi_1$. Therefore, again by \eqref{eq:kc-meet}, any $\mi'\preceq\mi$ is in $\pmi_{12}$, and $\mi$ is in $\ac_{12}$. Since any MI instance $\mi(\Y:Z)$ which is simultaneously in $\ac_2$ and $\pmi_1$ satisfies $\{\Y,\Z\} \succeq \Gamma_1(\X)$ and $\{\Y,\Z\} \succeq \Gamma_2(\X)$, we obtain again \eqref{eq:gamma-above-join}, and by the same argument as above, the relation cannot be strict, implying \eqref{eq:meet-partition}.
\end{proof}

\subsection{Varying the number of parties}
\label{subsec:cg-and-fg}

All our discussions so far pertained to a fixed number of parties $\N$. We now consider mappings between KC-PMIs involving different numbers of parties. We first discuss coarse-grainings, which are transformations where collections of parties are mapped to single parties, thereby effectively reducing $\N$. These transformations are relatively straightforward and have been considered previously to varying degrees of detail (see for example \cite{Hernandez-Cuenca:2022pst}). The focus here will again be on the correlation hypergraph perspective.

We then consider the opposite transformation, in which a single party is replaced by a collection of parties, thereby increasing $\N$. The structure of fine-grainings is considerably more complex, as there is in principle no upper bound on the number of parties that can be introduced. In the present work, we primarily introduce the relevant definitions and discuss only a few fundamental properties that will be useful in later sections. A more detailed treatment of fine-grainings of correlation hypergraphs will be postponed to future work \cite{graphoidal}.

Conventionally, when dealing with coarse-grainings and fine-grainings, we fix $\N' > \N$ throughout this work. Accordingly, a coarse-graining is a transformation from $\N'$ parties to $\N$, and conversely, a fine-graining is a transformation from $\N$ parties to $\N'$.

\subsubsection{Coarse-grainings}
\label{subsubsec:cg}

We structure the discussion of coarse-grainings into five steps, each corresponding to a distinct level of generalization or interpretation. We begin by introducing the natural operational definition of coarse-grainings for a density matrix. Next, we describe the associated transformation at the level of entropy vectors, which—being a linear map—naturally extends to the full entropy space, independently of realizability. From this, we derive the corresponding transformation at the level of KC-PMIs, and then reformulate this transformation in terms of the correlation hypergraph. Finally, we reinterpret the meet of the KC-lattice using coarse-grainings of correlation hypergraphs, highlighting the similarity with the corresponding standard construction in quantum mechanics based on tensor products of density matrices.

\paragraph{The coarse-graining map for density matrices:} To define a coarse-graining, we consider the following physical scenario. For a quantum system with $\N'$ parties described by a density matrix $\rho_{\N'}$, let $\ket{\psi}_{[\N']0'}$ be a purification by an ancillary party $0'$, referred to as the purifier. We now relabel each of the $\N'+1$ parties (including the purifier) with an integer from $\nsp$, which specifies the set of ``new'' parties (as well as the new purifier) after the coarse-graining. By convention, we choose the party labeled by $0$ as the new purifier. The resulting density matrix $\rho_\N$ after coarse-graining is then obtained by tracing out each party labeled by $0$. Notice that this transformation is more general than simply ``grouping'' parties from the original $\N'$-party system and mapping them to individual parties in the $\N$-party system (which effectively preserves $\rho$ and merely relabels the factors of the Hilbert space on which it acts). In particular, if we group only the purifier with certain other parties and leave the remaining ones untouched, the transformation effectively amounts to tracing out the parties grouped with the purifier.

A general coarse-graining procedure of this type is specified by a \textit{coarse-graining map} (CG-map), which is a surjective map 
\begin{align}
    \cgmap:\; \nspp &\rightarrow  \nsp\nonumber\\
    \ell' &\mapsto \ell = \cgmap(\ell').
\end{align}
Since $\N<\N'$, this map is not injective, and we will denote by $\cginv(\ell)$ the \textit{pre-image} in $\nspp$ of a party $\ell$ in the $\N$-party system. For a subsystem $\X\subseteq\nsp$ we then define
\begin{equation}
\label{eq:cgmap-power-set}
    \cginv(\X) = \bigcup_{\ell\in\X} \cginv(\ell).
\end{equation}

In what follows, it will often be convenient to view a CG map as being specified by a partition of $\nspp$, denoted by $\cgpart$, whose blocks are labeled by an injective indexing function, denoted by $\cgindex$, which assigns to each block an index from $\nsp$. Notice that the partition $\cgpart$ captures the fundamental structure of the coarse-graining transformation, since two different CG-maps from $\nspp$ to $\nsp$ with the same $\cgpart$ are related by a permutation of the parties in the $\N$-party system. In concrete examples, we will denote a specific map $\cgmap$ by the set of blocks of $\cgpart$, each one indexed by $\cgindex$. For example, the CG-map $\cgmap$ from $\N=3$ to $\N=2$ given by $\cgmap(1)=0$, $\cgmap(2)=1$, $\cgmap(3)=2$, $\cgmap(0)=1$ is written as 
\begin{equation}
    \cgmap:\quad\{\{1\}_0,\{2,0\}_1,\{3\}_2\}.
\end{equation}

\paragraph{Transformation of entropy vectors under coarse-grainings:} Having defined the transformation of a density matrix under a CG-map, we can now ask what is the corresponding transformation for entropy vectors. Given a density matrix $\rho_{\N'}$, and a CG-map $\cgmap$, consider a subsystem $\X$ after the coarse-graining, i.e., $\varnothing\neq\X\subseteq[\N]$. The entropy $\ent_\X$ of $\X$ is simply the entropy $\ent'_{\X'}$ of the subsystem $\X'$ in the $\N'$-party system, where $\X'$ is the collection of all $\N'$-parties (including possibly the purifier) which are mapped by $\cgmap$ to one of the parties in $\X$. Explicitly, we have
\begin{equation}
\label{eq:ent-vec-cg}
    \ent_{\X}=\ent'_{\cginv(\X)}\;.
\end{equation}
Notice that to describe this transformation we have used subsystems $\X,\X'$ that can include the purifier, even if we are focusing on entropy vectors. The reason is that an arbitrary CG-map may permute the parties in such a way that the purifier in the $\N$-party system is not the image of the purifier in the $\N'$-party system. As usual, in this case we implicitly assume that each new component $\ent_{\X}$ where $\X$ includes the purifier has been replaced with the entropy of the complementary subsystem.

Since $\Svec$ is obtained from $\Svecp$ by simply ignoring some components and possibly permuting the others, this transformation between entropy vectors from $\N'$-party entropy space to $\N$-party entropy space is clearly a linear transformation that is completely determined by $\cgmap$, and we will denote it by $\cgsvec$. Since not all entropy vectors in entropy space can be realized by a density matrix, we then take the map $\cgsvec$ as the definition of a coarse-graining of an entropy vector specified by $\cgmap$, independently from realizability
\begin{align}
    \cgsvec:\; \mathbb{R}^{2^{\N'}-1} &\rightarrow  \mathbb{R}^{2^{\N}-1}\nonumber\\
    \Svecp &\mapsto \Svec = \cgsvec(\Svecp).
\end{align}

\paragraph{Transformation of PMIs under coarse-grainings:} We now want to understand how PMIs, and in particular KC-PMIs, transform under coarse-grainings, but first let us briefly review the basic definitions and properties of maps between PMIs and entropy vectors, and between PMIs and faces of the SAC. As we mentioned in \S\ref{subsec:review} (cf., \eqref{eq:pmi-map-def}), given a face $\face$ of the SAC$_\N$, the PMI of $\Svec$, $\pmimap(\Svec)$, does not depend on the choice of $\Svec$, as long as $\Svec\in\text{int}(\face)$. We can then introduce a map $\pi$ between faces of SAC$_\N$ and PMIs which associates to a face $\face$ of SAC$_\N$ the PMI of an arbitrary entropy vector in the interior of $\face$
\begin{align}
    \pi:\;\lat{SAC}^{\N} &\rightarrow \lat{PMI}^{\N}\nonumber\\
     \face &\mapsto \pmi=\pi(\face)=\{\mi(\X:\Y)\ |\ \; \mi(\X:\Y)(\Svec)=0,\Svec\in\text{int}(\face)\},
\end{align}
where $\lat{SAC}^{\N}$ is the lattice of faces of SAC$_\N$ and $\lat{PMI}^{\N}$ is the lattice of all $\N$-party PMIs. Since each face of SAC$_\N$ is characterized uniquely by the set of MI instances which vanish for an entropy vector in its interior, and we have defined PMIs to be such sets, $\pi$ is a bijection. On the other hand, notice that the map $\pmimap$ is not invertible, because it is not injective (see \cite{He:2022bmi} for more details on these issues).
 
Suppose now that we are given an entropy vector $\Svecp$ in $\N'$-party entropy space and a coarse-graining map $\cgmap$, and we want to find the PMI after the coarse-graining transformation. Let $\pmi'=\pmimap(\Svecp)$ be the PMI of $\Svecp$. To find the PMI $\pmi$ after the coarse-graining, we can first map $\Svecp$ to $\Svec$ via $\cgsvec$, and then compute $\pmi=\pmimap(\Svec)$.\footnote{\,Clearly the map $\pmimap$ depends on $\N$, but we keep this dependence implicit to simplify the notation.} Because of \eqref{eq:ent-vec-cg}, the relation between the \textit{values} of the MI instances in the $\N'$ and $\N$-party systems is 
\begin{equation}
\label{eq:mi-cg-eval}
    \mi(\X:\Y)(\Svec)=\mi(\cginv(\X):\cginv(\Y))(\Svecp).
\end{equation}
Therefore, given $\Svecp$ and $\cgmap$, and introducing the shorthand
\begin{equation}
     \cginv(\mi)\coloneq\cginv(\mi(\X:\Y))\coloneq \mi(\cginv(\X):\cginv(\Y))
\end{equation}
the PMI of $\Svec$ is given by
\begin{equation}
\label{eq:pmi-cg-ent-vec}
    \pmi=\pmimap(\cgsvec(\Svecp))=\{\mi\in\mgs_\N \ |\ \;  \cginv(\mi)=0\}.
\end{equation}
For fixed $\cgmap$, the set of MI instances on the right hand side of \eqref{eq:pmi-cg-ent-vec} does not depend on the specific choice of $\Svecp$, as long as this choice is confined to the interior of the same face. Therefore, for a fixed CG-map $\cgmap$, we can define the following coarse-graining map between PMIs
\begin{align}
\label{eq:pmi-cg}
    \cgpmi:\; \lat{PMI}^{\N'} &\rightarrow \lat{PMI}^{\N}\nonumber\\
     \pmi' &\mapsto \pmi=\cgpmi(\pmi')=
     \{\mi\in\mgs_\N \ |\ \; \cginv(\mi)\in\pmi'\}.
\end{align}

This definition of PMI coarse-grainings is by construction consistent with the definition of coarse-grainings for entropy vectors, in the sense that for any\footnote{\,Recall that the map $\pmimap$ is defined only for entropy vectors in the SAC. This is consistent, since any entropy vector $\Svecp$ in SAC$_\N'$ is mapped by $\cgsvec$ to an entropy vector $\Svec$ in SAC$_\N$.}
given $\Svecp$ and $\cgmap$ we have 
\begin{equation}
\label{eq:pmi-cg-consistency}
    \pmimap\left(\cgsvec(\Svecp)\right)=\cgpmi\left(\pmimap(\Svecp)\right).
\end{equation}
However, it is important to notice that we cannot write a relation analogous to \eqref{eq:pmi-cg-consistency} for faces of the SAC, rather than entropy vectors. Consider a face $\face'$ of the $\N'$-party SAC, and its corresponding PMI $\pmi'=\pi(\face')$. Given a coarse-graining $\cgmap$, we can map $\pmi'$ to $\pmi=\cgpmi(\pmi')$ and find the corresponding face $\face=\pi^{-1}(\pmi)$ of the $\N$-party SAC. However, if we map $\text{int}(\face')$ to $\N$-party entropy space via $\cgsvec$, in general we only have
\begin{equation}
    \label{eq:faces-cg-inclusion}
    \cgsvec\left(\text{int}(\face')\right)\subseteq \text{int}(\face)
\end{equation}
and the inclusion can be strict. While this inclusion guarantees that for every $\Svec\in\cgsvec(\text{int}(\face'))$ we have $\pmimap(\Svec)=\pmi$, the map $\pi(\cgsvec(\text{int}(\face')))$ is not well defined because $\cgsvec(\text{int}(\face'))$ in general is not the full interior of a face of the $\N$-party SAC, but a proper subset. Furthermore, it is also important to notice that even if $\cgsvec$ is a linear map, we can have $\text{dim}(\face)>\text{dim}(\face')$. 

To summarize, $\cgpmi$ is a well-defined coarse-graining map for PMIs, and $\pi$ is a bijection between PMIs and faces of the SAC for both the $\N'$ and $\N$-party systems, but the composition of these maps does not correspond to the entropy-vector-wise coarse-graining of faces of the SAC via $\cgsvec$, i.e., 
\begin{equation}
    (\pi^{-1}\circ \cgpmi\circ \pi)(\face') \neq \cgsvec (\face').
\end{equation}

An example of this situation and the fact that we could have $\text{dim}(\face)>\text{dim}(\face')$, is the ``perfect state'' at $\N'=3$, i.e., the (pure) state of four parties which is maximally entangled for all bipartitions, whose  entropy vector and PMI are
\begin{align}
\label{eq:CG-example}
    & \Svecp=(1,1,1,2,2,2,1)\nonumber\\
    & \pmi'=\{\mi(1:2),\mi(1:3),\mi(1:0),\mi(2:3),\mi(2:0),\mi(3:0)\},
\end{align}
corresponding to an extreme ray of the SAC$_3$.\footnote{\,This is also an extreme ray of HEC$_3$, realized by the star graph with four leaves and unit weights on all edges (see \S\ref{subsec:graph-review} for a brief review on graph models).} Under the CG-map
\begin{equation}
    \cgmap:\quad \{\{1,2\}_1,\{3\}_2,\{0\}_0\},
\end{equation}
the entropy vector becomes $\Svec=(2,1,1)$, and the PMI $\pmi=\{\mi(2:0)\}$, corresponding to a 2-dimensional face $\face$ of SAC$_2$. Since $\Svecp$ generates an extreme ray, which is by itself a face $\face'$ of SAC$_3$, the image under the coarse-graining is a proper subset of $\face$. Furthermore, it only spans a 1-dimensional subspace, generated by $\Svec$, and accordingly, $\text{dim}(\face)=2>1=\text{dim}(\face')$.

Finally, in what follows we will mostly be interested in KC-PMIs, rather than arbitrary PMIs, and correspondingly in the lattice $\lat{KC}\subseteq\lat{PMI}$. Accordingly, we focus on the restriction of the map $\cgpmi$ to $\lat{KC}^{\N'}$, and we need to verify that for any $\N'$ and $\cgmap$, any element of $\lat{KC}^{\N'}$ is mapped to an element of $\lat{KC}^{\N}$. The fact that this is indeed the case can immediately be seen by noticing that for any pair of MI instances $\mi_1,\mi_2\in\mgs_{\N}$ such that $\mi_1\preceq\mi_2$ we have $\cginv(\mi_1)\preceq\cginv(\mi_2)$. To avoid introducing unnecessary notation, in the following we will denote by $\cgpmi$ also its restriction to $\lat{KC}$.

\paragraph{The point of view of $\bsets$ and the correlation hypergraph:} We now want to understand coarse-grainings of KC-PMIs from the point of view of $\bsets$ and the correlation hypergraph. In principle, given any KC-PMI $\pmi'$ and CG-map $\cgmap$, we can simply compute $\pmi=\cgpmi(\pmi')$ and  reconstruct its correlation hypergraph $\hp$ by finding the positive $\bsets$. However, as in the case of the meet of the KC-lattice discussed in the previous subsection, our goal will be to understand the direct transformation of $\hpp$ to $\hp$, without going through this reconstruction. To understand how this works, we begin by directly applying \eqref{eq:pmi-cg} to see how the set of positive $\bsets$ transforms.
We will then prove a theorem about a reformulation of this result, which in terms of the correlation hypergraph takes the form of a hypergraph quotient by the equivalence relation on the vertices (i.e., the parties) specified by the CG-map, followed by the usual indexing of the equivalence classes, and the weak union closure introduced in \S\ref{subsec:kc-lattice} to describe the meet of the KC-lattice.

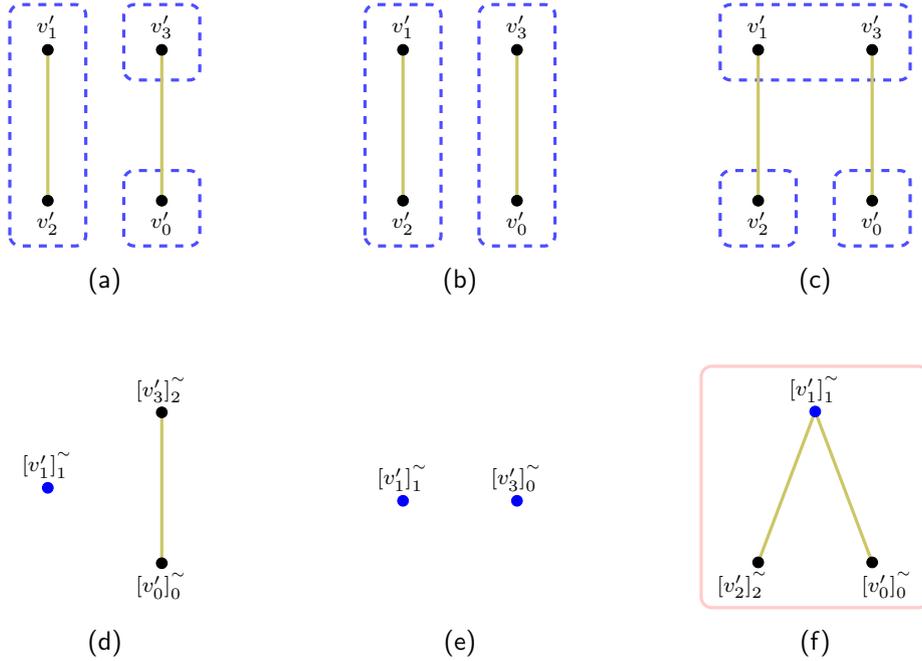
\begin{figure}[tbp]
    \centering
    \begin{subfigure}{0.3\textwidth}
    \centering
    \begin{tikzpicture}

    \draw[blue!70, rounded corners, very thick, dashed] (-0.5,-0.6) rectangle (0.5,2.6);
    \draw[blue!70, rounded corners, very thick, dashed] (1,1.6) rectangle (2,2.6);
    \draw[blue!70, rounded corners, very thick, dashed] (1,0.4) rectangle (2,-0.6);
    
    \draw[Mcol!80!black, very thick] (0,0) -- (0,2);
    \draw[Mcol!80!black, very thick] (1.5,0) -- (1.5,2);
    
    \filldraw (0,0) circle (2pt);
    \filldraw (0,2) circle (2pt);
    \filldraw (1.5,0) circle (2pt);
    \filldraw (1.5,2) circle (2pt);
    
    \node[] () at (0,-0.3) {{\scriptsize $v'_2$}};
    \node[] () at (0,2.3) {{\scriptsize $v'_1$}};
    \node[] () at (1.5,2.3) {{\scriptsize $v'_3$}};
    \node[] () at (1.5,-0.3) {{\scriptsize $v'_0$}};

    \end{tikzpicture}
    \subcaption[]{}
    \end{subfigure}
    \begin{subfigure}{0.3\textwidth}
    \centering
    \begin{tikzpicture}

    \draw[blue!70, rounded corners, very thick, dashed] (-0.5,-0.6) rectangle (0.5,2.6);
    \draw[blue!70, rounded corners, very thick, dashed] (1,-0.6) rectangle (2,2.6);
    
    \draw[Mcol!80!black, very thick] (0,0) -- (0,2);
    \draw[Mcol!80!black, very thick] (1.5,0) -- (1.5,2);
    
    \filldraw (0,0) circle (2pt);
    \filldraw (0,2) circle (2pt);
    \filldraw (1.5,0) circle (2pt);
    \filldraw (1.5,2) circle (2pt);
    
    \node[] () at (0,-0.3) {{\scriptsize $v'_2$}};
    \node[] () at (0,2.3) {{\scriptsize $v'_1$}};
    \node[] () at (1.5,2.3) {{\scriptsize $v'_3$}};
    \node[] () at (1.5,-0.3) {{\scriptsize $v'_0$}};

    \end{tikzpicture}
    \subcaption[]{}
    \end{subfigure}
    \begin{subfigure}{0.3\textwidth}
    \centering
    \begin{tikzpicture}

    \draw[blue!70, rounded corners, very thick, dashed] (-0.5,1.6) rectangle (2,2.6);
    \draw[blue!70, rounded corners, very thick, dashed] (-0.5,0.4) rectangle (0.5,-0.6);
    \draw[blue!70, rounded corners, very thick, dashed] (1,0.4) rectangle (2,-0.6);
    
    \draw[Mcol!80!black, very thick] (0,0) -- (0,2);
    \draw[Mcol!80!black, very thick] (1.5,0) -- (1.5,2);
    
    \filldraw (0,0) circle (2pt);
    \filldraw (0,2) circle (2pt);
    \filldraw (1.5,0) circle (2pt);
    \filldraw (1.5,2) circle (2pt);
    
    \node[] () at (0,-0.3) {{\scriptsize $v'_2$}};
    \node[] () at (0,2.3) {{\scriptsize $v'_1$}};
    \node[] () at (1.5,2.3) {{\scriptsize $v'_3$}};
    \node[] () at (1.5,-0.3) {{\scriptsize $v'_0$}};

    \end{tikzpicture}
    \subcaption[]{}
    \end{subfigure}
    \\ \bigskip\bigskip
       \begin{subfigure}{0.3\textwidth}
    \centering
    \begin{tikzpicture}
    
    \draw[Mcol!80!black, very thick] (1.5,0) -- (1.5,2);
    
    \filldraw[blue] (0,1) circle (2pt);
    \filldraw (1.5,0) circle (2pt);
    \filldraw (1.5,2) circle (2pt);
    
    \node[] () at (0,1.3) {{\scriptsize $\eqcl{v'_1}_1$}};
    \node[] () at (1.5,2.3) {{\scriptsize $\eqcl{v'_3}_2$}};
    \node[] () at (1.5,-0.3) {{\scriptsize $\eqcl{v'_0}_0$}};
    
    \end{tikzpicture}
    \subcaption[]{}
    \end{subfigure}
    \begin{subfigure}{0.3\textwidth}
    \centering
    \begin{tikzpicture}
    
    \filldraw[blue] (0,1) circle (2pt);
    \filldraw[blue] (1.5,1) circle (2pt);
    
    \node[] () at (0,-0.3) {{\scriptsize ${}$}};
    \node[] () at (0,1.3) {{\scriptsize $\eqcl{v'_1}_1$}};
    \node[] () at (1.5,1.3) {{\scriptsize $\eqcl{v'_3}_0$}};
    \node[] () at (1.5,-0.3) {{\scriptsize ${}$}};
    
    \end{tikzpicture}
    \subcaption[]{}
    \end{subfigure}
    \begin{subfigure}{0.3\textwidth}
    \centering
    \begin{tikzpicture}
    
    \draw[PminusEcol, rounded corners, very thick] (-1.5,-0.6) rectangle (1.5,2.6);
    
    \draw[Mcol!80!black, very thick] (-0.75,0) -- (0,2);
    \draw[Mcol!80!black, very thick] (0.75,0) -- (0,2);
    
    \filldraw (-0.75,0) circle (2pt);
    \filldraw (0.75,0) circle (2pt);
    \filldraw[blue] (0,2) circle (2pt);
    
    \node[] () at (-0.95,-0.3) {{\scriptsize $\eqcl{v'_2}_2$}};
    \node[] () at (0.95,-0.3) {{\scriptsize $\eqcl{v'_0}_0$}};
    \node[] () at (0,2.3) {{\scriptsize $\eqcl{v'_1}_1$}};
    
    \end{tikzpicture}
    \subcaption[]{}
    \end{subfigure}
    \caption{An example of a KC-PMI with a partial $\bset$, and its transformations under different CG-maps. Panels (a), (b) and (c) show the  (disconnected) correlation hypergraph $\hpp$
    for the same $\N=3$ KC-PMI $\pmi'$, corresponding to the tensor product of two Bell pairs between parties $\{1,2\}$ and $\{3,0\}$. The dashed blue lines indicate the partition $\cgpart$ associated to three different CG-maps (to $\N=2$ for (a) and (c), and to $\N=1$ for (b)). Panels (d), (e), and (f) show (respectively) the correlation hypergraph $\hp$ after the coarse-graining transformation, with blue vertices corresponding to the non-trivial equivalent classes. To simplify the figure, we have omitted labels for the hyperedges. Notice that in $\pmi'$, the $\bset$ $\bs{1230}$ is partial, but it is mapped to a partial $\bset$ in (d), a vanishing one in (e), and a positive one in (f). Furthermore, in (f), notice that the quotient hypergraph does not have the red hyperedge, which is instead added by the weak union closure (since the two yellow hyperedges have non-empty intersection). This example also shows that an essential $\bset$ in the $\N$-party system, for example $\bs{12}$ in (f), can result from the coarse-graining of a partial $\bset$ in the $\N'$-party system, in this case $\bs{123}$ in (c).}
    \label{fig:quotient}
\end{figure}

For any given $\N$, KC-PMI $\pmi$, and subsystem $\X$, we denote by $\mipos{\X,\pmi}$ (respectively, $\mivan{\X,\pmi}$) the set of positive (respectively, vanishing) MI instances in the $\bset$ $\bs{\X}$, explicitly
\begin{align}
    & \mipos{\X,\pmi}=\{\mi\in\bs{\X},\; \mi\in \cpmi\},\nonumber\\
    & \mivan{\X,\pmi}=\{\mi\in\bs{\X},\; \mi\in \pmi\}.
\end{align}
It follows from \eqref{eq:pmi-cg} that for any KC-PMI $\pmi'$, CG-map $\cgmap$, and subsystem $\X$ in the $\N$-party system with $|\X|\geq 2$, the sets $\mipos{\X,\pmi}$ and $\mivan{\X,\pmi}$ for $\pmi=\cgpmi(\pmi')$ are given by 
\begin{align}
    \label{eq:bs-cg-components}
    & \mipos{\X,\pmi}=\{\mi\in\bs{\X},\; \cginv(\mi)\in(\pmi')^\complement\},\nonumber\\
    & \mivan{\X,\pmi}=\{\mi\in\bs{\X},\; \cginv(\mi)\in\pmi'\}.
\end{align}
From these relations we immediately obtain the following implications
\begin{align}
\label{eq:cg-pos-van-bs}
    & \bs{\cginv(\X)}\;  \text{is positive and $|\X|\geq 2$} \; \implies\; \bs{\X}\;  \text{is positive}\nonumber\\
    & \bs{\cginv(\X)}\;  \text{is vanishing} \; \implies\; \bs{\X}\;  \text{is vanishing}.
\end{align}
On the other hand, if $\bs{\cginv(\X)}$ is partial, $\bs{\X}$ can be positive, partial or vanishing, as the example in \Cref{fig:quotient} demonstrates. 

Consider then a subsystem $\X$ such that $\bs{\cginv(\X)}$ is partial, and let $\X'=\cginv(\X)$. We want to determine under what condition $\bs{\X}$ is positive, partial or vanishing. In $\pmi'$, an MI instance in $\bs{\X'}$ vanishes if and only if it does not split any component of $\Gamma(\X')$. However, not all MI instances in $\bs{\X'}$ are of the form $\cginv(\mi)$ for some $\mi\in\mgs_{\N}$, and we can conveniently identify those that are using another partition of $\X'$, namely the one inherited from the coarse-graining map. To this end, consider the restriction $\cgpartrest{\X'}$ of $\cgpart$ to the subsystem $\X'$ (this is well defined because by assumption $\X'$ is a union of elements of $\cgpart$). The MI instances of $\bs{\X'}$ with the form $\cginv(\mi)$ for some $\mi\in\bs{\X}$ are then precisely those that do not split any element of $\cgpartrest{\X'}$. Therefore, the elements of $\mivan{\X,\pmi}$ are the MI instances such that $\cginv(\mi)$ does not split any element of the partition
\begin{equation}
\label{eq:cg-partial-partition}
     \Gamma(\X') \vee \cgpartrest{\X'} \ .
\end{equation}
In order for $\bs{\X}$ to be positive then, \eqref{eq:cg-partial-partition} must be the trivial partition of $\X'$. 

In summary, the positive $\bsets$ of $\pmi$, and therefore the hyperedges of $\hp$, are the ones such that $\bs{\X'}$ is positive for $\pmi'$, or it is partial but \eqref{eq:cg-partial-partition} is trivial. We will now see how this result for the correlation hypergraph can be rephrased in a nicer form using the notion of a hypergraph quotient.

\begin{defi}[Hypergraph quotient] 
    Given a hypergraph $\gf{H}=(V,E)$, and an equivalence relation $\sim$ on the vertex set $V$, the quotient hypergraph $\gf{H}^{\sim}=\gf{H}/\sim$ is the hypergraph $\gf{H}^{\sim}=(V^{\sim},E^{\sim})$ where $V^{\sim}$ is the set of equivalence classes of $V$, i.e.,
    \begin{equation}
        V^{\sim}=\{\eqcl{v}, v\in V\},
    \end{equation}
    and $E^{\sim}$ is the set of collections $h^{\sim}\subseteq V^{\sim}$ of equivalence classes for which $|h^{\sim}|\geq 2$, and there is $h\in E$ such that 
    \begin{equation}
    \label{eq:hypere-quotient-hyperg}
      h\subseteq \bigcup\; \{[v]^{\sim}\;|\; [v]^{\sim}\!\in h^{\sim}\}\;\;\;  \text{\emph{and}}\;\;\;  h\cap [v]^{\sim}\neq\varnothing\;\; \forall\; [v]^{\sim}\!\in h^{\sim}.
    \end{equation}
\end{defi}

Pictorially, one can visualize this transformation as one where the vertices in the same equivalence class are identified, and a new hyperedge is a collection of equivalence classes such that there exists a hyperedge in the original hypergraph that contains at least one vertex for each class of the collection, and no vertex in an equivalence class which is not in the collection (cf., \eqref{eq:hypere-quotient-hyperg}). 
With this definition, we can now prove the following suggestive result about the transformation of the correlation hypergraph of a KC-PMI under coarse-grainings. In the statement of the theorem, we denote by $\cgmapeq$ the equivalence relation induced by the partition $\cgpart$ associated to the CG-map $\cgmap$, and introduce the notation $\cgpmi(\hpp)$ for the coarse-graining of a correlation hypergraph.\footnote{\,To simplify the notation, we use the same symbol that we used for the coarse-graining of PMIs, and its restriction to KC-PMIs, since this is effectively the same map. The only difference is how a KC-PMI is represented.}

\begin{thm}
\label{thm:corr-hyp-cg}
    For any $\N'$-party \emph{KC-PMI} $\pmi'$, and any \emph{CG}-map $\cgmap$, the correlation hypergraph $\hp$ of $\pmi=\cgpmi(\pmi')$ is 
    \begin{equation}
    \label{eq:corr-hyp-cg}
        \hp = \cgpmi (\hpp) =(\text{\emph{cl}}_{\cup}\circ\cgindex)\left(\hpp/\cgmapeq\right).
    \end{equation}
\end{thm}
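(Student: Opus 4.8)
The plan is to establish \eqref{eq:corr-hyp-cg} as the set equality $\pos(\pmi)=(\clw\circ\cgindex)(\hpp/\cgmapeq)$ between families of hyperedges, by a double inclusion, combining the explicit form of the quotient hyperedges with the description of $\pos(\pmi)$ obtained in the discussion just above. First I would unwind the right‑hand side. The classes of $\cgmapeq$ are exactly the fibers $\cginv(\ell)$, $\ell\in\nsp$, reindexed by $\cgindex$ back to $\ell$; feeding a hyperedge $h=h_{\Y'}$ of $\hpp$ (so $\bs{\Y'}$ is positive for $\pmi'$) into \eqref{eq:hypere-quotient-hyperg} and identifying a collection of classes with a subsystem $\X\subseteq\nsp$, the two conditions in \eqref{eq:hypere-quotient-hyperg} become $\cgmap(\Y')\subseteq\X$ and $\X\subseteq\cgmap(\Y')$. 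Hence, after indexing, the hyperedges of $\hpp/\cgmapeq$ are precisely the $h_\X$ with $|\X|\geq 2$ and $\X=\cgmap(\Y')$ for some $\Y'$ with $\bs{\Y'}$ positive for $\pmi'$, and then $\clw$ adjoins all iterated unions of overlapping such hyperedges.

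For ``$\supseteq$'', since $\pos(\pmi)=\clw(\pos(\pmi))$ by \eqref{eq:corr-hyp-inv} and $\clw$ is monotone, it suffices to show that every quotient hyperedge already lies in $\pos(\pmi)$: if $\bs{\Y'}$ is positive for $\pmi'$ and $\X:=\cgmap(\Y')$ has $|\X|\geq 2$, then $\bs{\X}$ is positive for $\pmi$. By \eqref{eq:bs-cg-components} this means showing $\cginv(\mi)$ is positive for $\pmi'$ for every $\mi=\mi(A:B)\in\bs{\X}$. Since $\{A,B\}$ partitions $\X$, the sets $\cginv(A),\cginv(B)$ partition $\X':=\cginv(\X)\supseteq\Y'$ without cutting any fiber; moreover $\Y'_A:=\Y'\cap\cginv(A)$ and $\Y'_B:=\Y'\cap\cginv(B)$ are nonempty (because $\X=\cgmap(\Y')$ forces $\Y'$ to meet $\cginv(\ell)$ for every $\ell\in\X$) and satisfy $\Y'_A\cup\Y'_B=\Y'$, $\Y'_A\cap\Y'_B=\varnothing$. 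Thus $\mi(\Y'_A:\Y'_B)\in\bs{\Y'}$ and $\mi(\Y'_A:\Y'_B)\preceq\cginv(\mi)$. If $\cginv(\mi)$ were in $\pmi'$, the down‑set property of the KC‑PMI $\pmi'$ would force $\mi(\Y'_A:\Y'_B)\in\pmi'$, contradicting positivity of $\bs{\Y'}$; so $\cginv(\mi)$ is positive for $\pmi'$, as wanted. Applying $\clw$ to the resulting inclusion then gives $(\clw\circ\cgindex)(\hpp/\cgmapeq)\subseteq\pos(\pmi)$.

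For ``$\subseteq$'' I would invoke the characterization established above the theorem: $\bs{\X}$ is positive for $\pmi$ precisely when, writing $\X'=\cginv(\X)$, either (a) $\bs{\X'}$ is positive for $\pmi'$, or (b) $\bs{\X'}$ is partial for $\pmi'$ and the join \eqref{eq:cg-partial-partition}, $\Gamma_{\pmi'}(\X')\vee\cgpartrest{\X'}$, is the trivial partition of $\X'$. Case (a) is immediate: $h_{\X'}\in\hpp$ and $\cgmap(\X')=\X$ with $|\X|\geq 2$, so $h_\X$ is itself a quotient hyperedge. Case (b) is the core of the proof. Here $\Gamma_{\pmi'}(\X')$ has the form \eqref{eq:gamma-partition}, $\{\X'_1,\dots,\X'_n,\ell'_{n+1},\dots\}$ with each $\bs{\X'_i}$ positive for $\pmi'$, so $h_{\X'_i}\in\hpp$ and, whenever $|\cgmap(\X'_i)|\geq 2$, $h_{\cgmap(\X'_i)}$ is a quotient hyperedge. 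I would then translate triviality of the join into connectivity: the blocks of $\Gamma_{\pmi'}(\X')\vee\cgpartrest{\X'}$ are the connected components of the graph on $\X'$ joining two vertices lying in a common block of $\Gamma_{\pmi'}(\X')$ or in a common fiber; the singleton blocks $\ell'_j$ contribute no edges, and contracting each fiber $\cginv(\ell)$ to the party $\ell$ turns this graph into the hypergraph on $\X$ with edges $\{\cgmap(\X'_i):|\cgmap(\X'_i)|\geq 2\}$. Triviality of the join is exactly connectedness of this graph, which forces the latter hypergraph both to span $\X$ (every fiber must be linked out of itself by some $\X'_i$, which therefore reaches at least one other fiber) and to have connected intersection graph; an easy induction then shows that applying $\clw$ to a spanning hyperedge‑family with connected intersection graph produces the union of that family, so $h_\X=\bigcup_i h_{\cgmap(\X'_i)}\in\clw(\cgindex(\hpp/\cgmapeq))$, completing the inclusion.

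The hard part is case (b) of ``$\subseteq$'': one has to check that the hypergraph quotient followed by the weak union closure reproduces \emph{exactly} the ``accidentally positive'' $\bsets$ of $\pmi$ arising from partial $\bsets$ of $\pmi'$ (the phenomenon in panels (c)$\to$(f) of \Cref{fig:quotient}), neither missing any nor over‑generating. The delicate bookkeeping is the passage between the three partitions at play — $\Gamma_{\pmi'}(\X')$, the restriction $\cgpartrest{\X'}$, and the connectivity hypergraph they induce on $\X$ — and the point that needs care is that the only hyperedges the quotient contributes inside such an $\X$ are the $h_{\cgmap(\X'_i)}$ built from the maximal positive sub‑$\bsets$ of $\bs{\X'}$, so their iterated overlapping unions must land precisely on the subsystems the characterization flags positive. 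This is where the interplay between the join of partitions and $\clw$ has to be pinned down; the degenerate possibilities in \Cref{thm:gamma-partitions} (vanishing or completely essential $\bsets$) are absorbed into (a) and (b) without extra work.
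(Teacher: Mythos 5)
Your proof is correct and follows essentially the same route as the paper: a double inclusion built on the pre-theorem characterization of $\pos(\pmi)$ (positive $\bs{\cginv(\X)}$, or partial with trivial join \eqref{eq:cg-partial-partition}), with the weak union closure supplying exactly the ``accidentally positive'' $\bsets$ via the overlapping images of the $\Gamma$-blocks. The only cosmetic differences are that you prove the inclusion $(\clw\circ\cgindex)(\hpp/\cgmapeq)\subseteq\pos(\pmi)$ by a direct KC/down-set comparison of $\mi(\Y'_A:\Y'_B)\preceq\cginv(\mi)$ rather than by observing that the join for $\cginv(\cgmap(\Y'))$ is trivial, and that you package the paper's iterative $\Y'_\sim,\Z'_\sim,\ldots$ construction as connectivity of the fiber-contracted graph plus an induction along a spanning tree of the intersection graph — both equivalent to the paper's argument.
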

\begin{proof}
    Let $\gf{H}$ denote the hypergraph obtained from the right hand side of \eqref{eq:corr-hyp-cg}. We want to show that this is equal to $\hp$, the correlation hypergraph of $\pmi$ obtained from the coarse-graining. The fact that $\gf{H}$ has the correct number of vertices is trivial, and we therefore focus on the hyperedges. We first show that each hyperedge in $\hp$ is also in $\gf{H}$, and then that $\gf{H}$ does not contain any additional hyperedge.

    Consider a hyperedge $h_\X$ of $\hp$, and let $\X'=\cginv(\X)$. As discussed above, either $\bs{\X'}$ is positive, or it is partial and \eqref{eq:cg-partial-partition} is the trivial partition. We will consider these two possibilities in turn.
    Suppose that $\bs{\X'}$ is positive, and therefore there is a hyperedge $h'_{\X'}$ in $\hpp$. To see that $h_\X$ is also in $\gf{H}$, it suffices to notice that
    \begin{equation}
        h'_{\X'}= \bigcup_{\ell\in\X} \eqcl{v'}_{\ell}
    \end{equation}
    and therefore satisfies \eqref{eq:hypere-quotient-hyperg}, implying that $h_\X$ is an element of $\cgindex(\hpp/\cgmapeq)$. If instead $\bs{\X'}$ is partial and \eqref{eq:cg-partial-partition} is the trivial partition, the hyperedge $h_\X$ of $\hp$ is in $\cgindex(\hpp/\cgmapeq)$ only if there is a block of $\Gamma(\X')$ with non-empty intersection with all blocks of $\cgpartrest{\X'}$, otherwise \eqref{eq:hypere-quotient-hyperg} is not satisfied (see \Cref{fig:quotient} for an example). 
    
    We now show that in this case the hyperedge $h_\X$ is in $\gf{H}$ because of the weak union closure. Consider an element $\Y'$ of $\Gamma(\X')$, and the collection of all blocks of $\cgpartrest{\X'}$ that have non-empty intersection with $\Y'$. We denote the union of these blocks by $\Y'_\sim$. By the definition of the quotient, there is in $\gf{H}$ a hyperedge $h_{\Y_\sim}$ such that $\cginv(\Y_\sim)=\Y'_\sim$. If $\Y'_\sim\subset\X'$  strictly (which is the case we are considering), there must then be at least one element $\Z'$ of $\Gamma(\X')$ with non-empty intersection with $\Y'_\sim$, otherwise \eqref{eq:cg-partial-partition} is not the trivial partition. Repeating the same construction we obtain a subsystem $\Z'_\sim$ such that, by the quotient, there is in $\gf{H}$ a hyperedge $h_{\Z_\sim}$ where $\cginv(\Z_\sim)=\Z'_\sim$. But since $\Y'_\sim\cap\Z'_\sim \neq \varnothing$, it follows that $\Y_\sim\cap\Z_\sim \neq \varnothing$, and the hyperedge $h_{\Y_\sim\cup\Z_\sim}$ is added to $\gf{H}$ by the weak union closure. Lastly, if $\Y'_\sim\cap\Z'_\sim\subset \X'$, we can repeat this construction considering another element of $\Gamma(\X')$ with non-empty intersection with $\Y'_\sim\cap\Z'_\sim$, and proceed in this fashion until we have covered the full subsystem $\X'$ and we obtain $h_\X$ in $\gf{H}$.
    
    Having showed that all hyperedges of $\hp$ are in $\gf{H}$, to complete the proof we now show that the opposite inclusion also holds. Since any correlation hypergraph is closed with respect to the weak union closure (cf., \eqref{eq:corr-hyp-inv}), it is sufficient to show that every element of the quotient is in $\hp$. To see this, consider a hyperedge $h_{\X'}$ of $\hpp$ for some subsystem $\X'$. Similarly to before, we denote by $\X'_\sim$ the union of all blocks of $\cgpart$ (now without restrictions to a particular subsystem) which have non-empty intersection with $\X'$. The quotient then adds to $\gf{H}$ a hyperedge $h_{\X_\sim}$ where $\cginv(\X_\sim)=\X'_\sim$. But by construction, the partition \eqref{eq:cg-partial-partition} for the subsystem $\X'_\sim$ is trivial. Therefore $\bs{\X_\sim}$ is positive, and $h_{\X_\sim}$ is also a hyperedge of $\hp$, completing the proof.
\end{proof}

To conclude this paragraph, let us make a few comments about the transformation of $\Gamma$-partitions and essential $\bsets$ under coarse-grainings. If for a subsystem $\X$ in the $\N$-party system, the $\bset$ of $\X'=\cginv(\X)$ is partial, we have seen that $\bs{\X}$ is positive only if \eqref{eq:cg-partial-partition} is trivial. If it is not, and $\bs{\X}$ is partial, we have also seen that the vanishing MI instances in $\bs{\X'}$ which are mapped to vanishing ones $\bs{\X}$ are the ones that do not split any block of \eqref{eq:cg-partial-partition}. Therefore, if we simply map each block of \eqref{eq:cg-partial-partition} to the corresponding subsystem in the $\N$-party system, we obtain $\Gamma(\X)$. Accordingly, if $\bs{\X'}$ is vanishing, and $\Gamma(\X')$ is the singleton partition, so will be $\Gamma(\X)$, and $\bs{\X}$ is also vanishing, in agreement with \eqref{eq:cg-pos-van-bs}. Similarly, when $\bs{\X'}$ is positive but not essential, $\Gamma(\X')$ is trivial, and so is $\Gamma(\X)$. These observations seem to suggest that if $\bs{\X'}$ is partial, and \eqref{eq:cg-partial-partition} is trivial, $\bs{\X}$ can never be essential, since $\Gamma(\X)$ is trivial. However, this is not the case, as demonstrated by the examples in \Cref{fig:quotient}. Furthermore, notice that for any $\N'$-party KC-PMI and positive but not essential $\bset$ $\bs{\X'}$, if we specify a CG-map $\cgmap$ such that $\X'=\cginv(\X)$ and $|\X|=2$, then $\bs{\X}$ is necessarily essential, demonstrating that in general an essential $\bset$ can also result from the coarse-graining of a positive but not essential $\bset$. We leave the proper treatment of the transformation of $\Gamma$-partitions and essential $\bsets$ under coarse-grainings for future work \cite{graphoidal}.

\paragraph{Relationship between coarse-grainings and the meet:} The last goal of this subsection is to discuss how, using the correlation hypergraph, the meet in the KC-lattice can be formulated in terms coarse-grainings in complete analogy to a standard construction for density matrices and holographic graph models (see \S\ref{subsec:graph-review}). 

To make this connection clear, let us first describe a simple example in quantum mechanics. Consider two density matrices $\rho$ and $\sigma$ acting on a 2-party Hilbert space $\mathbb{H}_{12}$, with entropy vectors $\Svec_\rho,\,\Svec_\sigma$, and corresponding KC-PMIs $\pmi_\rho$ and $\pmi_\sigma$. If we introduce a copy $\mathbb{H}_{\hat{1}\hat{2}}$ of the Hilbert space and relabel the parties in $\sigma$ by $\hat{1}$ and $\hat{2}$, we can construct a 4-party quantum state acting on $\mathbb{H}_{12}\otimes\mathbb{H}_{\hat{1}\hat{2}}$ by simply taking the tensor product
\begin{equation}
    \tau_{12\hat{1}\hat{2}}=\rho_{12}\otimes\sigma_{\hat{1}\hat{2}}.
\end{equation}
If we then coarse-grain the parties according to the CG-map
\begin{equation}
    \cgmap:\quad \{\{1,\hat{1}\}_1,\{2,\hat{2}\}_2,\{0\}_0\}
\end{equation}
we rewrite $\tau_{12\hat{1}\hat{2}}$ as
\begin{equation}
    \tau_{12}=\rho_{1_12_1}\otimes\sigma_{1_22_2},
\end{equation}
where $1_1$ and $1_2$ indicate two components of the same party 1 (and similarly for 2). By additivity of the von Neumann entropy under tensor products, the entropy vector of $\tau_{12}$ is $\Svec_\tau=\Svec_\rho+\Svec_\sigma$, and its KC-PMI is (cf., \S\ref{subsec:kc-lattice})
\begin{equation}
    \pmi_\tau=\pmi_\rho\wedge\pmi_\sigma.
\end{equation}
We will now see how exactly the same construction can be applied to arbitrary KC-PMIs using the correlation hypergraph.

For any given $\N$, consider two KC-PMIs $\pmi_\rho$ and $\pmi_\sigma$, and the corresponding correlation hypergraphs $\gf{H}_\rho$ and $\gf{H}_\sigma$. We emphasize that we are no longer assuming that any of these PMIs is realizable, and that $\rho,\sigma$ are now simply labels to distinguish the PMIs and make the analogy more transparent, rather than actual density matrices. As in the example above, we now relabel the parties in $\pmi_\sigma$ as $\ell\mapsto\hat{\ell}$, and accordingly in $\gf{H}_\sigma$, denote the resulting correlation hypergraph by $\widehat{\gf{H}}_{\sigma}$. We then construct the \textit{hypergraph sum}\footnote{\,The sum of two hypergraphs is defined as the hypergraph whose vertex and hyperedge sets are the disjoint union of the corresponding sets for the two hypergraphs in the sum. In this case, since we have already labeled vertices and hyperedges by different parties, the disjoint union is simply the union.\label{ft:sum}}
\begin{equation}
    \gf{H}_{\tau}= \gf{H}_\rho\; \oplus\; \widehat{\gf{H}}_\sigma,
\end{equation}
which is a hypergraph with $2(\N+1)$ vertices. As we stressed before, an arbitrary hypergraph in general is not the correlation hypergraph of a KC-PMI, however, it follows from the analysis of \cite{He:2022bmi} that $\gf{H}_{\tau}$ \textit{is} a correlation hypergraph.\footnote{\,While \cite{He:2022bmi} did not use the correlation hypergraph, the translation is straightforward, and we leave the details as an exercise for the reader.} Intuitively, the reason is the following. Since $\gf{H}_{\tau}$ is disconnected, it must be that $\mi=\mi(01\ldots\N:\hat{0}\hat{1}\ldots\hat{\N})$ vanishes, and so does any other MI instance $\mi'\preceq\mi$ in the MI-poset, and the set of faces of SAC$_{2\N+1}$ which obeys this condition is isomorphic to the direct sum of two copies of SAC$_\N$. As long as $\gf{H}_\rho$ and $\widehat{\gf{H}}_\sigma$ are correlation hypergraphs of $\N$-party KC-PMIs (and here they are by construction), $\gf{H}_{\tau}$ is a correlation hypergraph.

Following the example in quantum mechanics, we now introduce the coarse graining map
\begin{equation}
    \cgmeet = \{\{\ell,\hat{\ell}\}_\ell \ |\ \;\ell\in \nsp\},
\end{equation}
and prove that the coarse-graining of $\gf{H}_{\tau}$ under $\cgmeet$ is precisely the correlation hypergraph of the meet of $\pmi_\rho$ and $\pmi_{\sigma}$.

\begin{thm}
\label{thm:meet-from-cg}
    For any $\N$, and any two \emph{KC-PMI}s $\pmi_\rho$ and $\pmi_\sigma$,
    \begin{equation}
    \gf{H}_{\pmi_\rho\wedge\pmi_\sigma} = \cgmeetpmi(\gf{H}_\rho \oplus \widehat{\gf{H}}_\sigma)
    \end{equation}
\end{thm}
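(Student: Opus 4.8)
The plan is to reduce the statement to the two results already established: \Cref{thm:corr-hyp-cg}, which expresses the coarse-graining of a correlation hypergraph as a vertex quotient followed by the indexing $\cgindex$ and the weak union closure $\clw$, and \Cref{thm:meet}, which computes the positive $\bsets$ of a meet as $\clw(\pos(\pmi_1)\cup\pos(\pmi_2))$. As noted in the discussion preceding the theorem, the hypergraph sum $\gf{H}_\tau \coloneq \gf{H}_\rho \oplus \widehat{\gf{H}}_\sigma$ is the correlation hypergraph of some $(2\N+1)$-party KC-PMI $\pmi_\tau$, and $\cgmeet$ is a bona fide CG-map from $\nspp$ with $\N'=2\N+1$ to $\nsp$ (it maps KC-PMIs to KC-PMIs, so $\cgmeetpmi(\pmi_\tau)$ is again a KC-PMI, and its correlation hypergraph is by definition $\cgmeetpmi(\gf{H}_\tau)$). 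Hence \Cref{thm:corr-hyp-cg} applies verbatim and gives $\cgmeetpmi(\gf{H}_\tau) = (\clw\circ\cgindex)(\gf{H}_\tau/\cgmapeq)$, where $\cgmapeq$ is now the equivalence relation induced by $\cgmeet$, i.e.\ the one that pairs $v_\ell$ with $\hat v_\ell$ for each $\ell\in\nsp$. The entire problem is therefore to identify the right-hand side of this identity with the correlation hypergraph of $\pmi_\rho\wedge\pmi_\sigma$.

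The core step is a direct computation of the quotient $\gf{H}_\tau/\cgmapeq$. First I would note that its vertex set has exactly $\N+1$ elements, the classes $[v_\ell]=\{v_\ell,\hat v_\ell\}$, which after relabeling by $\cgindex$ match the vertex set of $\gf{H}_{\pmi_\rho\wedge\pmi_\sigma}$. For the hyperedges, the key observation is that $\gf{H}_\tau$, being a hypergraph \emph{sum}, is the disjoint union of $\gf{H}_\rho$ and $\widehat{\gf{H}}_\sigma$: every hyperedge of $\gf{H}_\tau$ lies entirely within one of the two summands. Consequently the single-hyperedge witness $h$ demanded by the quotient definition \eqref{eq:hypere-quotient-hyperg} is always some $h_\X$ with $\bs{\X}$ positive in $\pmi_\rho$, or some $\widehat h_\Y$ with $\bs{\Y}$ positive in $\pmi_\sigma$. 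Since $\cgmapeq$ is injective on the support $\{v_\ell\mid\ell\in\X\}$ of such an $h_\X$ (distinct $\ell$ give distinct classes), $h_\X$ meets precisely the classes $\{[v_\ell]\mid\ell\in\X\}$ and is contained in their union; as $|\X|\geq 2$ because correlation hypergraphs carry no loops, this produces exactly one quotient hyperedge, which under $\cgindex$ is just the hyperedge $h_\X$ of the $\N$-party hypergraph — and likewise each $\widehat h_\Y$ produces $h_\Y$. Conversely, no other quotient hyperedges arise, since any one of them would require a witness and all witnesses are of the form just described. Therefore $\cgindex(\gf{H}_\tau/\cgmapeq)$ is precisely the hypergraph on $\N+1$ vertices whose hyperedge set is $\pos(\pmi_\rho)\cup\pos(\pmi_\sigma)$.

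It then remains only to apply the weak union closure and invoke \Cref{thm:meet}: $\clw\bigl(\pos(\pmi_\rho)\cup\pos(\pmi_\sigma)\bigr)=\pos(\pmi_\rho\wedge\pmi_\sigma)$, which is exactly the hyperedge set of $\gf{H}_{\pmi_\rho\wedge\pmi_\sigma}$; combining this with the vertex count closes the argument. I expect the only step needing real care to be the quotient computation in the second paragraph, and within it the verification that \emph{no spurious hyperedges} appear: this uses both the absence of loops (so $|h_\X^\sim|=|\X|\geq 2$, as the quotient definition requires) and, crucially, the disjointness of the two summands of $\gf{H}_\tau$, which is what prevents a quotient hyperedge from mixing parties of $\pmi_\rho$ with parties of $\pmi_\sigma$ without passing through $\clw$ — precisely the cross terms that \Cref{thm:meet} attributes to the closure. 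Everything else (the vertex count, the application of \Cref{thm:corr-hyp-cg}, and the invocation of \Cref{thm:meet}) is bookkeeping.
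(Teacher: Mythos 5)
Your proof is correct and follows essentially the same route as the paper's: the paper's own argument is precisely to invoke \Cref{thm:corr-hyp-cg} and \Cref{thm:meet} after observing that the hyperedge set of $\cgindex\bigl((\gf{H}_\rho\oplus\widehat{\gf{H}}_\sigma)/\sim_{\cgmeet}\bigr)$ is the union of the hyperedge sets of $\gf{H}_\rho$ and $\gf{H}_\sigma$. You simply spell out the quotient computation (disjointness of the summands, injectivity of the class map on each summand's support) that the paper leaves as an immediate observation.
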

\begin{proof}
    The statement follows immediately from \Cref{thm:meet} and \Cref{thm:corr-hyp-cg} by noticing that the set of hyperedges of the hypergraph 
    \begin{equation}
        \chi_{\cgmeet}\left( (\gf{H}_\rho \oplus \widehat{\gf{H}}_\sigma)/ \sim_{_{\cgmeet}} \right)
    \end{equation}
    is the union of the sets of hyperedges of $\gf{H}_\rho$ and $\gf{H}_\sigma$.
\end{proof}

\Cref{thm:meet-from-cg} can immediately be generalized to the meet of an arbitrary number of KC-PMIs. Furthermore, the reader familiar with holographic graph model will already recognize the analogy with the same operation in that context, which we will review in \S\ref{subsec:graph-review}.

\subsubsection{Fine-grainings}
\label{subsubsec:fg}

Having examined coarse-grainings in detail, we now focus on the opposite operation, where one or more parties in an $\N$-party system are mapped to collections of parties in a larger system. Physically, we can imagine a situation where one or more of the parties has a finer internal structure that we have initially ignored (for example, a collection of qubits), and we now want to reconsider.\footnote{\,In the context of holography this is of course very natural, since for any boundary spacial region we can always consider a refinement into subregions.} As we shall soon see, this transformation is defined in terms of the pre-image of a given coarse-graining map, and is thus, in principle, straightforward given the framework we established for coarse-grainings. Consequently, our primary focus will be on KC-PMIs within the language of the correlation hypergraph, with only a brief analysis of fine-grainings of entropy vectors provided at the conclusion of this discussion.

Consider then an arbitrary KC-PMI $\pmi$ in $\lat{KC}^\N$ and suppose, for simplicity, that we want to replace a single party $\ell$ with a collection of $k$ parties $\ell_1,\ldots\ell_k$. We could then define a \textit{fine-graining} of $\pmi$ as \textit{any} KC-PMI $\pmi'$ in $\lat{KC}^{\N+k-1}$ such that there exists a CG-map mapping $\pmi'$ to $\pmi$. It should be intuitively clear that typically this choice of $\pmi'$ is highly non unique, implying that there is no proper fine-graining map (unless we make a ``canonical'' choice, we will come back to this point below). Furthermore, part of this non-uniqueness is trivial, since it originates from the fact that we can freely permute the parties in the $\N'$-party system, and then ``compensate'' by redefining the coarse-graining map. For these reasons, we define \textit{a} fine-graining of $\pmi$ under a fixed coarse-graining map, as any element of the \textit{pre-image} of $\pmi$ under this map.\footnote{\,The following definition, as well as others below, have an obvious generalization to arbitrary PMIs, but we will ignore such generalizations in what follows.} 

\begin{defi}[Fine-grainings of a KC-PMI for fixed CG-map]
    For any $\N,\N'$, any $\N$-party \emph{KC-PMI} $\pmi$, and any fixed \emph{CG}-map $\cgmap$, the fine-grainings of $\pmi$ \emph{with respect to} $\cgmap$ are the elements of the set $\cgpmiinv(\pmi)$.
\end{defi}

At this stage, according to this definition, the set of fine-grainings of a KC-PMI under a fixed coarse-graining map is simply a set, with no additional internal structure. However, given that this set is a subset of lattice, and therefore it is naturally at least a poset, it is interesting to ask if it has additional structure. The answer to this question is intimately related to the general properties of a CG-map, which we now briefly explore.

It will be useful to first consider a particularly simple example of fine-grainings which exist for any $\N$-party PMI $\pmi$, any $\N'$, and any CG-map $\cgmap$. These fine-grainings, which in this work we will call \textit{canonical fine-graining}, were already introduced in \cite{He:2022bmi}, for arbitrary PMIs.\footnote{\,They were called ``canonical embeddings'' in \cite{He:2022bmi} and often ``canonical lifts'' in other works.} Here instead, since we are focusing on KC-PMIs,  we introduce the definition using correlation hypergraphs. 

\begin{defi}[Canonical fine-graining of a KC-PMI]
\label{defi:canonical-fg}
For any $\N$-party \emph{KC-PMI} $\pmi$, and any \emph{CG}-map $\cgmap$ from an $\N'$-party system, the correlation hypergraph of a canonical fine-graining of $\pmi$ with respect to $\cgmap$ is obtained from $\hp$ by first relabeling each vertex $v_l$ by a choice of single party $\ell'\in\cginv(\ell)$, and then adding isolated vertices for the remaining $\N'-\N$ parties.
\end{defi}

\begin{figure}[tbp]
    \centering
     \begin{subfigure}{0.3\textwidth}
    \centering
    \begin{tikzpicture}

    \draw[blue!70, rounded corners, very thick, dashed] (-1.5,-0.5) rectangle (1.5,0.5);
    \draw[blue!70, rounded corners, very thick, dashed] (-0.5,1) rectangle (0.5,2);
    
    \draw[Mcol!80!black, very thick] (-1,0) -- (0,1.5);
    
    \filldraw (-1,0) circle (2pt);
    \filldraw (1,0) circle (2pt);
    \filldraw (0,1.5) circle (2pt);
    
    \node[] () at (-1.2,-0.2) {{\scriptsize $v_1$}};
    \node[] () at (1.2,-0.2) {{\scriptsize $v_2$}};
    \node[] () at (0,1.8) {{\scriptsize $v_0$}};

    \end{tikzpicture}
    \subcaption[]{}
    \end{subfigure}
    \begin{subfigure}{0.3\textwidth}
    \centering
    \begin{tikzpicture}

    \draw[blue!70, rounded corners, very thick, dashed] (-1.5,-0.5) rectangle (1.5,0.5);
    \draw[blue!70, rounded corners, very thick, dashed] (-0.5,1) rectangle (0.5,2);
    
    \draw[Mcol!80!black, very thick] (1,0) -- (0,1.5);
    
    \filldraw (-1,0) circle (2pt);
    \filldraw (1,0) circle (2pt);
    \filldraw (0,1.5) circle (2pt);
    
    \node[] () at (-1.2,-0.2) {{\scriptsize $v_1$}};
    \node[] () at (1.2,-0.2) {{\scriptsize $v_2$}};
    \node[] () at (0,1.8) {{\scriptsize $v_0$}};

    \end{tikzpicture}
    \subcaption[]{}
    \end{subfigure}
    \begin{subfigure}{0.3\textwidth}
    \centering
    \begin{tikzpicture}
    
    \draw[Mcol!80!black, very thick] (0,0) -- (0,1.5);
    
    \filldraw[blue] (0,0) circle (2pt);
    \filldraw (0,1.5) circle (2pt);
    
    \node[] () at (0,-0.3) {{\scriptsize $\eqcl{v_1}$}};
    \node[] () at (0,1.8) {{\scriptsize $\eqcl{v_0}$}};

    \end{tikzpicture}
    \subcaption[]{}
    \end{subfigure}
    \caption{
    A simple example showing that the join in the KC-lattice is not preserved under coarse-grainings (as defined below), demonstrating the second statement of (iii) in \Cref{thm:cg-map-properties}. (a) and (b) show the correlation hypergraphs of two $\N=2$ KC-PMIs $\pmi_1'$ and $\pmi_2'$, corresponding to two extreme rays of SAC$_2$ generated by the entropy vectors $\Svec_1=(1,0,1)$ and $\Svec_2=(0,1,1)$ (respectively). The partition of the vertex set by the dashed blue line indicates the fixed CG-map, and the correlation hypergraph of the resulting $\N=1$ KC-PMI $\pmi_\bot$, which is the same for both PMIs and is the bottom of $\lat{KC}^1$, is shown in (c). Since $\pmi_1'$ and $\pmi_2'$ correspond to ERs of SAC$_2$, their join is the 0-dimensional KC-PMI $\pmi'_\top$ (the top of $\lat{KC}^2$), and its coarse-graining is $\pmi_\top$, the top of $\lat{KC}^1$. On the other hand, since the coarse-graining of $\pmi_1'$ and $\pmi_2'$ is the same, $\pmi_\bot\vee\pmi_\bot=\pmi_\bot\neq\pmi_\top$, and the join is therefore not preserved. Finally, notice that (a) and (b) can be viewed as two distinct canonical fine-grainings of (c) with respect to the same CG-map.}
    \label{fig:join-not-preserved}
\end{figure}
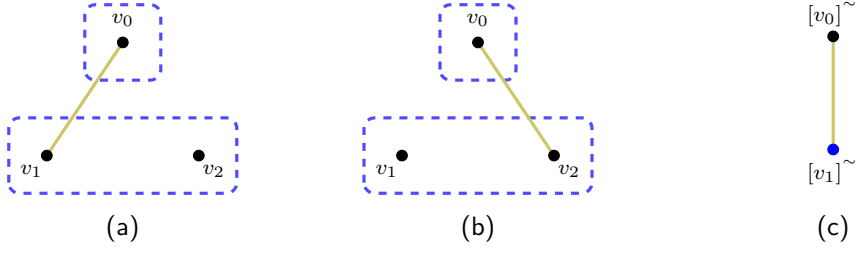

Notice that in general a canonical fine-graining is non-unique \textit{only} because we have some freedom to permute the parties in the $\N'$ party system, and the number of canonical fine-grainings is therefore 
\begin{equation}
\label{eq:n-can-fg}
    \phi(\cgmap) = \prod_{\ell\in\nsp} |\cginv(\ell)|.
\end{equation}
Up to these permutations however, canonical fine-grainings are unique. The non-uniqueness of canonical fine-grainings is illustrated by the simple example in \Cref{fig:join-not-preserved}.

We will also need a few basic definitions from the theory of partially ordered sets and lattices. Given two elements $x,y$ of a poset $\A$ with $x\preceq y$, the \textit{interval} between $x$ and $y$ is 
\begin{equation}
    [x,y]=\; \uparrow \!x\; \cap \downarrow \!y = \{z\in \A|\; x\preceq z\preceq y\}.
\end{equation}
A map $\phi$ from a poset to $\A$ to another poset $\B$ is said to be \textit{order preserving} if $x\preceq y$ in $\A$ implies $\phi(x)\preceq\phi(y)$ in $\B$. If $\A$ and $\B$ are lattices, $\phi$ is said to \textit{preserve the meet} if 
\begin{equation}
\label{eq:meet-preserved}
    \phi(x\wedge y) = \phi(x) \wedge \phi(y),
\end{equation}
and similarly for the join. We now prove a few main properties of the restriction of an arbitrary CG-map to the KC-lattice.
\begin{thm}
\label{thm:cg-map-properties}
    For any $\N,\N'$, and any \emph{CG}-map $\cgmap$, the restriction of the map $\cgpmi$ to the \emph{KC}-lattice:
    \begin{enumerate}[itemsep=0mm,label={\footnotesize \emph{\roman*)}}]
    \item is surjective, but not injective 
    \item is order-preserving
    \item preserves the meet, but not the join
    \end{enumerate}
\end{thm}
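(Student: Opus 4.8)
The plan is to peel off the formal items first. Part (ii) and the meet-half of (iii) are immediate from the explicit formula \eqref{eq:pmi-cg}, which realizes $\cgpmi$ as the pull-back of a set of MI instances along $\mi\mapsto\cginv(\mi)$: because $\cgpmi(\pmi')=\{\mi\in\mgs_\N\mid\cginv(\mi)\in\pmi'\}$, a larger $\pmi'$ yields a larger $\cgpmi(\pmi')$, which is order-preservation; and since the meet in both KC-lattices is intersection (cf.\ \eqref{eq:kc-meet}) while pre-images commute with intersection, $\cgpmi(\pmi_1'\cap\pmi_2')=\cgpmi(\pmi_1')\cap\cgpmi(\pmi_2')$, which is meet-preservation. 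That the output of $\cgpmi$ on a KC-PMI is again a KC-PMI was already noted in \S\ref{subsubsec:cg}, so nothing more is needed there.

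For surjectivity in (i) I would invoke canonical fine-grainings. Given an $\N$-party KC-PMI $\pmi$ and the fixed CG-map $\cgmap$, Definition \ref{defi:canonical-fg} produces an $\N'$-party KC-PMI $\pmi'$ — choose a representative $r(\ell)\in\cginv(\ell)$ for each $\ell$, relabel the vertices of $\hp$ accordingly, and adjoin $\N'-\N$ isolated vertices — and one checks that $\cgpmi(\pmi')=\pmi$. This is from \cite{He:2022bmi}, but it can also be read off from \Cref{thm:corr-hyp-cg}: in the hypergraph picture every non-representative vertex is isolated, so the quotient by $\cgmapeq$ creates no new overlapping hyperedges, and re-indexing followed by the weak union closure returns $\hp$ verbatim. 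For non-injectivity, note that since $\cgmap$ is surjective and $\N'>\N$, some block of $\cgpart$ has at least two elements, so the count \eqref{eq:n-can-fg} satisfies $\phi(\cgmap)\geq 2$; hence every $\N$-party KC-PMI has at least two distinct canonical fine-grainings under $\cgmap$, all of which $\cgpmi$ sends back to it.

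For the failure of join-preservation in (iii) the plan is to exhibit one counterexample rather than argue in general, and \Cref{fig:join-not-preserved} is built for exactly this. Take two distinct KC-PMIs $\pmi_1',\pmi_2'\in\lat{KC}^{\N'}$ that are PMIs of extreme rays of $\text{SAC}_{\N'}$; since an extreme ray is covered by the tip, each is a coatom of $\lat{KC}^{\N'}$, so $\pmi_1'\vee\pmi_2'=\pmi'_\top$. Arrange, as in the figure, that they have a common image with $\cgpmi(\pmi_1')=\cgpmi(\pmi_2')\neq\pmi_\top$. Because $\cginv(\mi)$ is always a genuine MI instance, $\cgpmi$ maps $\pmi'_\top=\mgs_{\N'}$ to $\pmi_\top=\mgs_\N$, so $\cgpmi(\pmi_1'\vee\pmi_2')=\cgpmi(\pmi'_\top)=\pmi_\top$, whereas $\cgpmi(\pmi_1')\vee\cgpmi(\pmi_2')$ is the common image, which is strictly below $\pmi_\top$. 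The explicit instance in \Cref{fig:join-not-preserved} — $\N'=2$, $\N=1$, the two extreme rays generated by $(1,0,1)$ and $(0,1,1)$, both coarse-grained to $\pmi_\bot$ — then finishes the proof.

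I do not expect a real obstacle here: (ii) and the meet-half of (iii) are purely formal, and surjectivity, non-injectivity and non-preservation of the join all rest on two inputs that are essentially bookkeeping — that canonical fine-grainings are legitimate elements of $\cgpmiinv(\pmi)$ (borrowed from \cite{He:2022bmi}, or re-derived via \Cref{thm:corr-hyp-cg} as above), and the small check, for the example in \Cref{fig:join-not-preserved}, that the PMI of the extreme ray in question is indeed a coatom of the KC-lattice, which holds because adjoining any further vanishing MI instance to it collapses the associated linear subspace of entropy space to the tip. The mildly delicate point, if anything, is simply ensuring the CG-map in the counterexample sends two \emph{distinct} coatoms to the \emph{same} non-top element, which is what forces the join to fail.
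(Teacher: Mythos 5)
Your proposal is correct and establishes all three items, but it takes a genuinely different route from the paper in two places, so a comparison is worth recording. For meet preservation, you observe that $\cgpmi$ is literally the preimage map $f^{-1}$ for $f:\mi\mapsto\cginv(\mi)$, that preimages commute with intersections, and that the meet is intersection in both lattices by \eqref{eq:kc-meet}; this is a purely set-theoretic one-liner. The paper instead routes the argument through entropy vectors, using \eqref{eq:pmi-cg-ent-vec}, the linearity of $\cgsvec$, and the identity $\pmimap(\Svec_1+\Svec_2)=\pmimap(\Svec_1)\wedge\pmimap(\Svec_2)$. Your version is more elementary and buys independence from the realization of PMIs by vectors, at the cost of leaning entirely on \eqref{eq:kc-meet} (which is itself proved via entropy vectors in the cited reference); the paper's version makes the compatibility with $\cgsvec$ explicit, which is the structural point it wants to emphasize. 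For non-injectivity, you exhibit two distinct canonical fine-grainings of a single KC-PMI, whereas the paper uses the blunter cardinality count $|\lat{KC}^\N|<|\lat{KC}^{\N'}|$; your argument is more constructive and identifies an explicit collided pair. The remaining items coincide with the paper: surjectivity via canonical fine-grainings, order preservation from the definition, and failure of the join via the example of \Cref{fig:join-not-preserved} (where your coatom justification correctly fills in what the paper delegates to the figure caption).

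One small overstatement to repair: it is not true that \emph{every} $\N$-party KC-PMI has at least two distinct canonical fine-grainings whenever $\phi(\cgmap)\geq 2$. For $\pmi_\top$ the correlation hypergraph has no hyperedges, so all $\phi(\cgmap)$ relabelings produce the same $\N'$-party KC-PMI, namely $\pmi'_\top$. This does not damage the conclusion — non-injectivity needs only one witness, and $\pmi_\bot$ (complete correlation hypergraph) always furnishes two genuinely distinct canonical fine-grainings — but the quantifier should be weakened accordingly.
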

\begin{proof}
    (i) Surjectivity follows from the fact at least one canonical fine-graining exists for any KC-PMI and any CG-map. Since for any $\N,\N'$ we have $|\lat{KC}^\N|<|\lat{KC}^{\N'}|$ any CG-map is not injective.

    (ii) Follows immediately from the definition of coarse-grainings for PMIs and the fact that the partial order in the KC-lattice corresponds to  inclusion.

    (iii) To prove that the join in general is not preserved, it is sufficient to find an example, and we present one in \Cref{fig:join-not-preserved}. On the other hand, to show that the meet is preserved, we need to show that for any $\N,\N'$, any CG-map $\cgmap$, and any two $\N'$-party KC-PMIs $\pmi'_1,\pmi_2'$ we have
    \begin{equation}
        \cgpmi\left(\pmi'_1\wedge\pmi'_2\right) = \cgpmi(\pmi'_1) \wedge \cgpmi(\pmi'_2).
    \end{equation}
    To see this, it is convenient to use entropy vectors. Consider two arbitrary $\N'$-party entropy vectors $\Svecp_1$ and $\Svecp_2$ such that $\pmimap(\Svecp_1)=\pmi'_1$ and $\pmimap(\Svecp_2)=\pmi'_2$. We then have
    \begin{align}
        \cgpmi\left(\pmi'_1\wedge\pmi'_2\right) & = \pmimap\left(\cgsvec(\Svec'_1+\Svec'_2)\right)\nonumber\\
        & = \pmimap\left(\cgsvec(\Svec'_1) + \cgsvec(\Svec'_2)\right)\nonumber\\
        & = \pmimap(\cgsvec(\Svec'_1)) \wedge \pmimap(\cgsvec(\Svec'_2))\nonumber\\
        & = \cgpmi(\pmi'_1) \wedge \cgpmi(\pmi'_2)
    \end{align}
    where in the first line we used \eqref{eq:pmi-cg-ent-vec} and \eqref{eq:pmi-meet-svec-sum}, the second line follows from the fact that $\cgsvec$ is a linear map, and in the third line we used \eqref{eq:pmi-meet-svec-sum} again.
\end{proof}

Notice that the fact that $\cgpmi$ is order preserving implies that given three KC-PMIs in $\lat{KC}^{\N'}$ such that $\pmi_1'\preceq\pmi_2'\preceq\pmi'_3$ 
we have
\begin{equation}
    \cgpmi(\pmi_1')=\cgpmi(\pmi_3') \implies \cgpmi(\pmi_2')=\cgpmi(\pmi_1').
\end{equation}
Therefore, for any $\pmi\in\lat{KC}^{\N}$, its set of fine-grainings with respect to a CG-map $\cgmap$ is a union of intervals of $\lat{KC}^{\N'}$. Furthermore, the fact that any CG-map preserves the meet immediately implies that the poset $\cgpmiinv(\pmi)$ has a minimum. Indeed,
\begin{equation}
\label{eq:minimum}
    \cgmap\left(\bigwedge \cginv(\pmi)\right) = \pmi \wedge \pmi \wedge
     \ldots\wedge \pmi =\pmi.
\end{equation}
Therefore, $\cgpmiinv(\pmi)$ is a union of intervals with the same bottom element. 

The next result describes this \textit{minimum fine-graining} (as usual, with respect to a fixed CG-map) explicitly. Essentially, it says that the minimum fine-graining is simply the down-set of all MI instances $\cginv (\mi)$ for all $\mi$ in the original KC-PMI,\footnote{\,This is intuitively clear, but from this formulation it is not obvious that such down-set is a PMI, i.e., that it is compatible with linear dependence among the $\N'$-party MI instances and SA.} and that the corresponding hypergraph contains all possible hyperedges compatible with the absence of correlations in the lower party system. An example of the correlation hypergraph of a minimal fine-graining is shown in \Cref{fig:min-fg-hp}. To simplify the expressions in the following theorem, for given $\ell\in\nsp$ we introduce the notation
\begin{equation}
    V'_{\ell} = \{v_{\ell'}|\; \ell'\in\cginv(\ell)\},\qquad 
    V'_{\X} = \bigcup_{\ell\in\X} V'_\ell. 
\end{equation}

\begin{figure}[tbp]
    \centering
     \begin{subfigure}{0.18\textwidth}
    \centering
    \begin{tikzpicture}
    
    \draw[Mcol!80!black, very thick] (0,0) -- (0,1.5);
    
    \filldraw (0,0) circle (2pt);
    \filldraw (1,0.75) circle (2pt);
    \filldraw (0,1.5) circle (2pt);
    
    \node[] () at (0,-0.3) {{\scriptsize $v_1$}};
    \node[] () at (0,1.8) {{\scriptsize $v_2$}};
    \node[] () at (1.3,0.75) {{\scriptsize $v_0$}};

    \node[] () at (0,-0.5) {};

    \end{tikzpicture}
    \subcaption[]{}
    \end{subfigure}
    \begin{subfigure}{0.18\textwidth}
    \centering
    \begin{tikzpicture}
    
    \draw[Mcol!80!black, very thick] (0,0) -- (0,1.5);
    
    \filldraw (0,0) circle (2pt);
    \filldraw (1,1.5) circle (2pt);
    \filldraw (1,0) circle (2pt);
    \filldraw (0,1.5) circle (2pt);
    
    \node[] () at (0,-0.3) {{\scriptsize $v_1$}};
    \node[] () at (0,1.8) {{\scriptsize $v_2$}};
    \node[] () at (1,1.8) {{\scriptsize $v_0$}};
    \node[] () at (1,-0.3) {{\scriptsize $v_3$}};

    \node[] () at (0,-0.5) {};

    \end{tikzpicture}
    \subcaption[]{}
    \end{subfigure}
    \begin{subfigure}{0.18\textwidth}
    \centering
    \begin{tikzpicture}
    
    \draw[Mcol!80!black, very thick] (0,0) -- (0,1.5);
    
    \filldraw (0,0) circle (2pt);
    \filldraw (1,1.5) circle (2pt);
    \filldraw (1,0) circle (2pt);
    \filldraw (0,1.5) circle (2pt);
    
    \node[] () at (0,-0.3) {{\scriptsize $v_3$}};
    \node[] () at (0,1.8) {{\scriptsize $v_2$}};
    \node[] () at (1,1.8) {{\scriptsize $v_0$}};
    \node[] () at (1,-0.3) {{\scriptsize $v_1$}};

    \node[] () at (0,-0.5) {};

    \end{tikzpicture}
    \subcaption[]{}
    \end{subfigure}
    \begin{subfigure}{0.18\textwidth}
    \centering
    \begin{tikzpicture}
    
    \draw[Mcol!80!black, very thick] (0,0) -- (0,1.5);
    
    \filldraw (0,0) circle (2pt);
    \filldraw (0,1.5) circle (2pt);
    
    \node[] () at (0,-0.3) {{\scriptsize $v_3$}};
    \node[] () at (0,1.8) {{\scriptsize $v_1$}};

    \node[] () at (0,-0.5) {};

    \end{tikzpicture}
    \subcaption[]{}
    \end{subfigure}
    \begin{subfigure}{0.24\textwidth}
    \centering
    \begin{tikzpicture}

    \draw[PminusEcol, rounded corners, very thick] (-1.5,-0.5) rectangle (1.5,2);
    
    \draw[Mcol!80!black, very thick] (-1,0) -- (0,1.5);
    \draw[Mcol!80!black, very thick] (-1,0) -- (1,0);
    \draw[Mcol!80!black, very thick] (1,0) -- (0,1.5);
    
    \filldraw (-1,0) circle (2pt);
    \filldraw (1,0) circle (2pt);
    \filldraw (0,1.5) circle (2pt);
    \filldraw (0,-1) circle (2pt);
    
    \node[] () at (-1.2,-0.2) {{\scriptsize $v_1$}};
    \node[] () at (1.2,-0.2) {{\scriptsize $v_2$}};
    \node[] () at (0,1.8) {{\scriptsize $v_3$}};
    \node[] () at (0.3,-1) {{\scriptsize $v_0$}};

    \end{tikzpicture}
    \subcaption[]{}
    \end{subfigure}
    \caption{An example of a minimum fine-graining from the perspective of the correlation hypergraph as described by \Cref{thm:min-fg} and \Cref{cor:min-fg}: (a) shows the correlation hypergraph of an $\N=2$ KC-PMI $\pmi$, (b) and (c) show the canonical fine-grainings of $\pmi$ with respect to the CG-map $\cgmap$ specified by $\{\{1,3\}_1,\{2\}_2,\{0\}_0\}$, (d) is the complete hypergraph $\gf{H}^{(1)}_{\circledast}$ for the parties $1$ and $3$, and (e) is the correlation hypergraph of the minimal fine-graining $\pmi'$ of $\pmi$ with respect to $\cgmap$. Notice that according to \Cref{cor:min-fg}, (e) is the meet (cf., \Cref{thm:meet}) of (b), (c) and the canonical fine-graining of (d) obtained by adding isolated vertices for parties 0 and 2 (cf., \eqref{eq:bott-lifts}).}
    \label{fig:min-fg-hp}
\end{figure}

\begin{thm}[Minimum fine-graining]
\label{thm:min-fg}
    For any $\N,\N'$, $\N$-party \emph{KC-PMI} $\pmi$, and \emph{CG}-map $\cgmap$ from $\N'$ to $\N$, the minimum fine-graining of $\pmi$ with respect to $\cgmap$ is the $\N'$-party \emph{KC-PMI} $\pmi'$ given by
    \begin{equation}
    \label{eq:min-fg-pmi}
        \pmi'=\bigcup_{\mi\in\pmi} \downarrow \cginv (\mi).
    \end{equation}
    Its correlation hypergraph $\hpp$ is obtained via a ``blow-up'' of the vertices and hyperedges of $\hp$ as follows. Each vertex $v_l$ is replaced by the complete hypergraph with vertices $v_{\ell'}$ for all $\ell'\in\cginv(\ell)$, resulting in the hyperedges
    \begin{equation}
    \label{eq:min-fg-hyperedges-1}
        \bigcup_{\ell\in\nsp} \{h'\ |\; h'\subseteq V'_{\ell}, \  |\cginv(\ell)|\ge2\}.
    \end{equation}
    Each hyperedge $h_\X$ is replaced by the collection of all possible hyperedges which are subsets of $V'_\X$ and contain at least one vertex $v_{\ell'}$ with $\ell'\in\cginv(\ell)$ for each $v_\ell\in h_\X$, resulting in the hyperedges
    \begin{equation}
    \label{eq:min-fg-hyperedges-2}
       \bigcup_{h_\X\,\text{of}\;\hp} \!\{h'\subseteq V'_{\X} |\; h'\cap V'_\ell\neq\varnothing,\; \forall\, \ell\in\X \}.
    \end{equation}
\end{thm}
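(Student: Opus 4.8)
The plan is to prove that the down-set $\pmi'$ defined by \eqref{eq:min-fg-pmi} is exactly the minimum of the poset $\cgpmiinv(\pmi)$ of fine-grainings of $\pmi$ with respect to $\cgmap$. We already know this minimum exists: by \eqref{eq:minimum} and \Cref{thm:cg-map-properties} it equals $\bigwedge\cgpmiinv(\pmi)=\bigcap\cgpmiinv(\pmi)$, which is a bona fide KC-PMI. Identifying it with $\pmi'$ then also disposes of the subtlety noted under \eqref{eq:min-fg-pmi}---that $\pmi'$ is a PMI at all, i.e.\ closed under linear dependences and compatible with SA---since $\pmi'$ thereby inherits this property. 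One inclusion is immediate: for any $\pmi''\in\cgpmiinv(\pmi)$, the identity $\cgpmi(\pmi'')=\pmi$ together with \eqref{eq:pmi-cg} gives $\cginv(\mi)\in\pmi''$ for every $\mi\in\pmi$, and since $\pmi''$ is a down-set, $\downarrow\cginv(\mi)\subseteq\pmi''$; taking the union over $\mi\in\pmi$ yields $\pmi'\subseteq\pmi''$, hence $\pmi'\subseteq\bigcap\cgpmiinv(\pmi)$. Granting the reverse inclusion, the hyperedge formulas \eqref{eq:min-fg-hyperedges-1}--\eqref{eq:min-fg-hyperedges-2} will follow from a direct computation of the positive $\bsets$ of $\pmi'$.

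The substantive point is thus $\bigcap\cgpmiinv(\pmi)\subseteq\pmi'$; since the left side is the intersection of all fine-grainings, it suffices to show that every $\mi'=\mi(\X':\Y')\notin\pmi'$ is omitted from at least one fine-graining of $\pmi$, and I would split this into two cases. Suppose first that $\mi'$ ``splits a block'' of $\cgpart$, i.e.\ $\cgmap(\X')\cap\cgmap(\Y')\neq\varnothing$; choose $\ell$ in this intersection and sub-parties $\ell'_1\in\X'$, $\ell'_2\in\Y'$ with $\cgmap(\ell'_1)=\cgmap(\ell'_2)=\ell$. Let $\pmi^{(\ell)}$ be the $\N'$-party KC-PMI whose correlation hypergraph is the complete hypergraph on $V'_\ell$ together with isolated vertices for all other parties; that this is a KC-PMI is easily checked (e.g.\ it is realized by a GHZ-type state on the sub-parties in $V'_\ell$ tensored with trivial states elsewhere, and it is an instance of a canonical fine-graining). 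A short computation with \eqref{eq:reconstruction} shows $\cgpmi(\pmi^{(\ell)})=\mgs_\N$, so by meet-preservation (\Cref{thm:cg-map-properties}) the KC-PMI $\pmi''_{\mathrm{can}}\wedge\pmi^{(\ell)}$, with $\pmi''_{\mathrm{can}}$ any canonical fine-graining of $\pmi$ with respect to $\cgmap$, coarse-grains to $\pmi\wedge\mgs_\N=\pmi$ and is therefore a fine-graining of $\pmi$; it omits $\mi'$ because $\mi(\ell'_1:\ell'_2)\notin\pmi^{(\ell)}$ and $\mi(\ell'_1:\ell'_2)\preceq\mi'$, while $\pmi''_{\mathrm{can}}\wedge\pmi^{(\ell)}$ is a down-set.

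In the remaining case $\cgmap(\X')\cap\cgmap(\Y')=\varnothing$, set $\I=\cgmap(\X')$ and $\J=\cgmap(\Y')$. Since $\X'\subseteq\cginv(\I)$ and $\Y'\subseteq\cginv(\J)$, having $\mi(\I:\J)\in\pmi$ would force $\mi'\preceq\cginv(\mi(\I:\J))$ to lie in $\pmi'$, contrary to assumption; hence $\mi(\I:\J)\in\cpmi$, and the reconstruction formula \eqref{eq:reconstruction} supplies a positive $\bset$ $\bs{\K}$ of $\pmi$ and an instance $\mi(\K_1:\K_2)\in\bs{\K}$ (with $\K_1\cup\K_2=\K$) satisfying $\mi(\K_1:\K_2)\preceq\mi(\I:\J)$, i.e.\ $\K_1\subseteq\I$ and $\K_2\subseteq\J$ after possibly swapping $\X'$ and $\Y'$. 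I would then choose the canonical fine-graining $\pmi''_{\mathrm{can}}$ of $\pmi$ with representatives satisfying $\mathrm{rep}(\ell)\in\X'$ for $\ell\in\K_1$ and $\mathrm{rep}(\ell)\in\Y'$ for $\ell\in\K_2$ (possible because $\K_1\subseteq\cgmap(\X')$ and $\K_2\subseteq\cgmap(\Y')$), and arbitrarily on the remaining parties. Then $\bs{\mathrm{rep}(\K)}$ is a positive $\bset$ of $\pmi''_{\mathrm{can}}$ containing $\mi(\mathrm{rep}(\K_1):\mathrm{rep}(\K_2))\preceq\mi'$, so \eqref{eq:reconstruction} applied to $\pmi''_{\mathrm{can}}$ shows $\mi'\notin\pmi''_{\mathrm{can}}$. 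Together with the previous case this gives $\pmi'=\bigcap\cgpmiinv(\pmi)$, the minimum fine-graining.

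Finally, for the description of $\hpp$, with $\pmi'$ now established to be a KC-PMI, I would compute which $\bsets$ are positive straight from \eqref{eq:min-fg-pmi}: $\bs{\X'}$ is positive iff no bipartition $\{\Y',\Z'\}$ of $\X'$ lies in $\pmi'$. The two facts that do the work are that $\mi(\Y':\Z')\in\pmi'$ forces $\mi(\cgmap(\Y'):\cgmap(\Z'))$ to vanish in $\pmi$ (an instance of $\bs{\cgmap(\X')}$ when $|\cgmap(\X')|\ge2$), and conversely that any vanishing bipartition $\{\I,\J\}$ of $\cgmap(\X')$ in $\pmi$ pulls back along $\X'\cap\cginv(\I)$, $\X'\cap\cginv(\J)$ to a vanishing bipartition of $\X'$ in $\pmi'$. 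Combining these, $\bs{\X'}$ is positive exactly when either $\cgmap(\X')$ is a single party $\ell$ with $|\cginv(\ell)|\ge2$ (so $\X'$ is any $\ge2$-element subset of $V'_\ell$), which enumerates the hyperedges \eqref{eq:min-fg-hyperedges-1}, or $|\cgmap(\X')|\ge2$ with $\bs{\cgmap(\X')}$ positive for $\pmi$ (so $\X'\subseteq V'_{\cgmap(\X')}$ and $\X'$ meets each $V'_\ell$, $\ell\in\cgmap(\X')$), which enumerates \eqref{eq:min-fg-hyperedges-2}; this is the stated ``blow-up''. I expect the main obstacle to be the second case above: a fine-graining that coarse-grains exactly to $\pmi$ and keeps the given $\mi'$ positive does exist, but its representatives must be chosen using the reconstruction of $\pmi$ itself, so a generic canonical fine-graining will in general fail to omit $\mi'$.
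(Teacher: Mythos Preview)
Your proof is correct, and it takes a route that differs in organization from the paper's, though the same building blocks appear. The paper proceeds hypergraph-first: it (i) checks directly that the down-set $\pmi'$ of \eqref{eq:min-fg-pmi} has precisely the positive $\bsets$ listed in \eqref{eq:min-fg-hyperedges-1}--\eqref{eq:min-fg-hyperedges-2}, and then (ii) shows that this hypergraph is a genuine correlation hypergraph by exhibiting it as the meet of the canonical fine-grainings of $\pmi$ with the fiber-complete KC-PMIs $\pmi^{(\ell)}$ (whose correlation hypergraph is the complete hypergraph on $V'_\ell$ plus isolated vertices), so that \Cref{cor:min-fg} drops out immediately. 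Only then does it observe that $\pmi'$ is a fine-graining and is contained in every other one. You instead proceed lattice-first: since $\bigcap\cgpmiinv(\pmi)$ is already known to be a KC-PMI, you identify $\pmi'$ with it by exhibiting, for each $\mi'\notin\pmi'$, an explicit fine-graining omitting it---a carefully chosen canonical fine-graining when $\mi'$ does not split a block, and the meet $\pmi''_{\mathrm{can}}\wedge\pmi^{(\ell)}$ when it does. The hypergraph description then reduces to a clean bookkeeping computation. Your Case 1/Case 2 dichotomy carries essentially the same content as the paper's meet decomposition (your $\pmi^{(\ell)}$ is exactly the KC-PMI of \eqref{eq:bott-lifts}), and a one-line addendum---that $\pmi'\subseteq\pmi^{(\ell)}$ because every $\cginv(\mi)$ has arguments in distinct fibers, so at most one meets $V'_\ell$---would recover \Cref{cor:min-fg} from your argument as well. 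What your organization buys is that the delicate point (the down-set $\pmi'$ being a PMI) is handled for free by the pre-existing minimum, whereas the paper must establish it via the meet construction; what the paper's organization buys is that \Cref{cor:min-fg} is an immediate corollary rather than a further observation.
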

\begin{proof}
    The core of the proof is organized into two parts, indicated by (i) and (ii) below. In (i), we show that the \textit{down-set} $\ds'$ specified on the r.h.s. of \eqref{eq:min-fg-pmi}---which we have to prove to be a PMI---is the complement of the up-set obtained via the reconstruction formula \eqref{eq:reconstruction} from the hypergraph $\gf{H}'$ defined by the set of hyperedges listed in \eqref{eq:min-fg-hyperedges-1} and \eqref{eq:min-fg-hyperedges-2}. Next, in (ii), we show that $\gf{H}'$ is indeed the correlation hypergraph of a KC-PMI $\pmi'$. The statement of the theorem then follows from two additional simple observations. First, the fact that the coarse-graining of $\pmi'$ via $\cgmap$ is indeed $\pmi$, implying that $\pmi'$ is indeed a fine-graining of $\pmi$ with respect to $\cgmap$. This is clear from the fact that for each hyperedge $h_\X$ of $\hp$, all hyperedges $h'$ in the set in \eqref{eq:min-fg-hyperedges-2} (for fixed $\X$) are mapped to $h_\X$ by the quotient, while the hyperedges in \eqref{eq:min-fg-hyperedges-1} do not contribute new hyperedges of $\hp$ (they are mapped to single vertices). Second, the fact that any other fine-graining of $\pmi$ (with respect to $\cgmap$) must contain $\pmi'$ (cf., \eqref{eq:pmi-cg}), and  $\pmi'$ is therefore a minimal element of $\cgpmiinv(\pmi)$ by \Cref{thm:partial-order-h}, and the minimum by the uniqueness of minimal elements (cf., \eqref{eq:minimum}).

    (i) We need to show that the positive $\bsets$ associated to $\ds'$ precisely correspond to the hyperedges of $\gf{H}'$. We begin by organizing the $\bsets$ in the $\N'$-party system into families labeled by the subsystems $\X$ of the $\N$-party system. Specifically, for an arbitrary $\varnothing\neq\X\subseteq\nsp$, we denote by $\mathfrak{B}'_\X$ the following family of $\bsets$
    \begin{equation}
    \label{eq:bigB-def}
        \mathfrak{B}'_\X = \{\bs{\Y'}|\, \Y'\subseteq \cginv(\X)\; \text{and}\; \Y'\cap \cginv(\ell)\neq\varnothing\; \forall \ell\in\X\}.
    \end{equation}
    Note that the collection of all $\mathfrak{B}'_\X$, for all $\X$ such that $\varnothing\neq\X\subseteq\nsp$, 
   is indeed a partition of the set of all $\bsets$ of the $\N'$-party system. 
   
   We now show the following implications:
    \begin{align}
        & |\X|\geq 2\; \text{and}\; \bs{\X}\in\pos \implies \bs{\Y'}\in\pos\quad \forall\, \bs{\Y'}\in \mathfrak{B}'_\X, \label{eq:min1}\\
        & |\X|\geq 2\; \text{and}\; \bs{\X}\notin\pos \implies \bs{\Y'}\notin\pos\quad \forall\, \bs{\Y'}\in \mathfrak{B}'_\X, \label{eq:min2}\\
        & |\X|=1\; \implies \bs{\Y'}\in\pos\quad \forall\, \bs{\Y'}\in \mathfrak{B}'_\X.\label{eq:min3}
    \end{align}
   The desired result then follows from observing that the positive $\bsets$ $\bs{\Y'}$ resulting from \eqref{eq:min1} correspond to the hyperedges in \eqref{eq:min-fg-hyperedges-2}, while the $\bsets$ in \eqref{eq:min3} correspond to the ones in \eqref{eq:min-fg-hyperedges-1}.

   The key to show the implications above is that an arbitrary $\N'$-party MI instance $\mi'_1(\W'_1:\Z'_1)$ is in $\ds'$ if and only if the MI instance $\mi'_2(\W'_2:\Z'_2)\succeq\mi'_1(\W'_1:\Z'_1)$, where 
    \begin{equation}
    \label{eq:subs-i2}
        \W'_2 = \bigcup\, \{\eqcl{\ell'}|\, \ell'\in\W'_1\}   \qquad   \Z'_2 = \bigcup\, \{\eqcl{\ell'}|\, \ell'\in\Z'_1\},
    \end{equation}
   is also in $\ds'$. To see this, first notice that, by construction (cf., \eqref{eq:min-fg-pmi}), $\mi'_1$ is in $\ds'$ if and only if there is an MI instance $\mi'_3(\W'_3:\Z'_3)$ such that $\mi'_3\succeq\mi'_1$, where $\mi'_3$ takes the form $\mi'_3=\cginv(\mi_3)$ for some $\N$-party MI instance $\mi_3$, and $\mi_3\in\pmi$. Moreover, any $\mi'_3$ with these properties necessarily satisfies $\mi'_3\succeq\mi'_2$. Therefore, since $\ds'$ is a down-set, and by construction there is $\mi_2$ such that $\mi'_2=\cginv(\mi'_2)$, it follows that $\mi'_1$ is in $\ds'$ if and only if $\mi_2$ is in $\pmi$. 
   
   This observation immediately implies \eqref{eq:min1} and \eqref{eq:min3}. First notice that for an arbitrary $\Y'$ with $|\Y'|\geq 2$, and arbitrary $\mi'_1\in\bs{\Y'}$, the corresponding $\mi'_2$, as defined in \eqref{eq:subs-i2}, belongs to $\bs{\X'}$, where $\X'=\cginv(\X)$ and $\X$ is label of the set $\mathfrak{B}'_\X$ that contains $\bs{\Y'}$. Therefore, given an arbitrary subsystem $\X$, if $|\X|\geq 2$ and $\bs{\X}$ is positive, by construction $\bs{\cginv(\X)}$ is positive, and for any $\mi'_1\in\bs{\Y'}$, with $\bs{\Y'}\in \mathfrak{B}'_\X$, there is no $\mi'_2$ which belongs to $\ds'$. The same conclusion also holds if $|\X|=1$, since in this case $\mi'_2$ does not exist for any choice of $\mi'_1$.

   Lastly, to show \eqref{eq:min2}, consider a subsystem $\X$ with $|\X|\geq 2$ such that $\bs{\X}$ is not positive. It is sufficient to show that any $\bset$ in $\mathfrak{B}'_\X$ contains an MI instance $\mi'_1\preceq\cginv(\mi_2)$, for some $\mi_2\in\bs{\X}$, which easily follows from the condition that specifies which $\bs{\Y'}$ belong to $\mathfrak{B}'_\X$. Indeed, given an MI instance $\mi_2(\W_2:\Z_2)\in\bs{\X}$ in $\pmi$, consider $\mi'_2(\W'_2:\Z'_2)$, where $\W_2'=\cginv(\W_2)$ and $\Z'_2=\cginv(\Z_2)$, which by construction is in $\ds'$. For any $\Y'$ such that $\Y'\subseteq \cginv(\X)$, the condition $\Y'\cap \cginv(\ell)\neq\varnothing$ for all $\ell\in\X$ (cf., \eqref{eq:bigB-def}) implies that $\W'_1=\W'_2\cap\Y'\neq\varnothing$ and similarly for $\Z'_1$. Therefore $\bs{\Y'}$ contains the MI instance $\mi'_1(\W'_1:\Z'_1)$, and by construction, $\mi'_1\preceq\mi'_2$.
       
   (ii) To show that $\gf{H}'$ is indeed the correlation hypergraph of a KC-PMI, we will show that it can be obtained from the meet of a collection of $\N'$-party hypergraphs which are indeed correlation hypergraphs of $\N'$-party KC-PMIs. Consider a party $\ell\in\nsp$, and let $\gf{H}^{(\ell)}_{\circledast}$ be the complete hypergraph with vertices $v_{\ell'}$ for all $\ell'\in\cginv(\ell)$ (cf., \eqref{eq:min-fg-hyperedges-1}). The hypergraph 
    \begin{equation}
    \label{eq:full-hyp-sum}
        \gf{H}_{\circledast}^{\text{all}}=\bigoplus_{\ell\in\nsp}\, \gf{H}^{(\ell)}_{\circledast}
    \end{equation}
   then is a hypergraph with $\N'+1$ vertices and it is a correlation hypergraph, since each $\gf{H}^{(\ell)}_{\circledast}$ is the correlation hypergraph of the bottom of the KC-lattice for $|\cginv(\ell)|-1$ parties,\footnote{\,Up to a trivial relabeling of the parties from $[\![|\cginv(\ell)|-1]\!]$ to the elements of $\cginv(\ell)$.} and the sum is well defined because each $\gf{H}^{(\ell)}_{\circledast}$ corresponds to a different set of parties.\footnote{\,The KC-PMI with hypergraph given by \eqref{eq:full-hyp-sum} is realized by the tensor product of density matrices $\rho_{\cginv(\ell)}$ with the property that none of their marginals factorizes.} Starting from $\hp$ now, we denote by $\hp^{(i)}$ the correlation hypergraph of an arbitrary canonical fine-graining, where $i\in[\phi(\cgmap)]$ (cf., \eqref{eq:n-can-fg}). By \Cref{thm:meet-from-cg}, the correlation hypergraph of the meet of the KC-PMIs with correlation hypergraphs $\gf{H}_{\circledast}^{\text{all}}$ and $\hp^{(i)}$ for all $i\in[\phi(\cgmap)]$ has all hyperedges in \eqref{eq:min-fg-hyperedges-1} (from $\gf{H}_{\circledast}$). Moreover, the hyperedges of the hypergraphs $\hp^{(i)}$ corresponds to the hyperedges in \eqref{eq:min-fg-hyperedges-2} with $|V'_\ell\cap h'|=1$, while the other hyperedges in \eqref{eq:min-fg-hyperedges-2} are added by the weak union closure with the hyperedges of $\gf{H}_{\circledast}^{\text{all}}$; we leave further details for the reader. 
\end{proof}

The proof of \Cref{thm:min-fg} also implies the following result describing the minimum fine-graining in terms of the meet in the $\N'$-party KC-lattice.

\begin{cor}[Minimum fine-graining]
\label{cor:min-fg}
    For any $\N,\N'$, $\N$-party \emph{KC-PMI} $\pmi$, and \emph{CG}-map $\cgmap$ from $\N'$ to $\N$, the minimum fine-graining of $\pmi$ with respect to $\cgmap$ is the meet, in the $\N'$-party \emph{KC}-lattice, of all canonical fine-grainings of $\pmi$, and all $\N'$-party \emph{KC-PMIs} with correlation hypergraphs
    \begin{equation}
    \label{eq:bott-lifts}
        \gf{H}^{(\ell)}_{\circledast} \oplus v_{\ell_1} \oplus \cdots \oplus v_{\ell_k} \quad \forall\,\ell\in \nsp,
    \end{equation}
    where $\gf{H}^{(\ell)}_{\circledast}$ is defined in the proof of \Cref{thm:min-fg}, and the isolated vertices $v_{\ell_1}, \ldots,v_{\ell_k}$ correspond, for each $\ell$, to the parties in $\nspp\setminus\cginv(\ell)$.
\end{cor}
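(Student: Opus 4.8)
The plan is to read the statement off the proof of \Cref{thm:min-fg}, inside which the minimum fine-graining already appears as a meet. Recall that part (ii) of that proof shows that the correlation hypergraph $\hpp$ of $\pmi'$, whose hyperedges are exactly those in \eqref{eq:min-fg-hyperedges-1} together with those in \eqref{eq:min-fg-hyperedges-2}, coincides with the correlation hypergraph of the meet, in $\lat{KC}^{\N'}$, of the KC-PMI with correlation hypergraph $\gf{H}_{\circledast}^{\text{all}}=\bigoplus_{\ell\in\nsp}\gf{H}^{(\ell)}_{\circledast}$ and all canonical fine-grainings $\hp^{(i)}$, $i\in[\phi(\cgmap)]$, of $\pmi$ with respect to $\cgmap$. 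Because the meet in a lattice is associative and commutative, it is then enough to prove that the KC-PMI with correlation hypergraph $\gf{H}_{\circledast}^{\text{all}}$ is itself the meet of the KC-PMIs whose correlation hypergraphs are the ``blocks plus isolated vertices'' listed in \eqref{eq:bott-lifts}; the corollary follows by substituting the latter for the former inside the meet.

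First I would check that each hypergraph in \eqref{eq:bott-lifts} is genuinely the correlation hypergraph of an $\N'$-party KC-PMI, so that the statement is well-posed. This is the same realizability observation used in the proof of \Cref{thm:min-fg}: $\gf{H}^{(\ell)}_{\circledast}$ is the correlation hypergraph of the bottom of the KC-lattice on the parties $\cginv(\ell)$ (the trivial one-vertex hypergraph when $|\cginv(\ell)|=1$), and adjoining isolated vertices for the remaining parties is a hypergraph sum with single-vertex (pure-state) hypergraphs, which — as recalled in the discussion preceding \Cref{thm:meet-from-cg} — again yields a correlation hypergraph; concretely it is realized by $\rho_{\cginv(\ell)}\otimes(\text{pure states on }\nspp\setminus\cginv(\ell))$ with $\rho_{\cginv(\ell)}$ having no factorizing marginal.

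Next I would apply \Cref{thm:meet}, iterated over this finite family, using $\pos(\pmi_1\wedge\pmi_2)=\clw(\pos(\pmi_1)\cup\pos(\pmi_2))$. The positive $\bsets$ of the $\ell$-th member of \eqref{eq:bott-lifts} are exactly the hyperedges of $\gf{H}^{(\ell)}_{\circledast}$, namely all subsets of $V'_\ell$ of cardinality at least two, so the union over $\ell\in\nsp$ of these positive $\bsets$ is precisely the hyperedge set \eqref{eq:min-fg-hyperedges-1}. I would then note that the weak union closure is inert on this union: the blocks $V'_\ell$ are pairwise disjoint, so hyperedges coming from different blocks never overlap, while two overlapping hyperedges inside one $V'_\ell$ have union again a subset of $V'_\ell$, hence already present. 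Therefore $\pos$ of the meet equals \eqref{eq:min-fg-hyperedges-1}, which is the hyperedge set of $\gf{H}_{\circledast}^{\text{all}}$; since a KC-PMI is determined by its positive $\bsets$ through the reconstruction formula \eqref{eq:reconstruction}, the meet of the family \eqref{eq:bott-lifts} is the KC-PMI with correlation hypergraph $\gf{H}_{\circledast}^{\text{all}}$. Combining this with the first paragraph proves the corollary.

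I do not expect a genuine obstacle: the content is essentially a regrouping of the meet already computed within the proof of \Cref{thm:min-fg}. The only points needing a little care are the well-definedness check in the second paragraph — and, relatedly, keeping track of which block $\cginv(\ell)$ contains the $\N'$-party purifier — together with the one-line verification that $\clw$ does nothing to the pairwise-disjoint family \eqref{eq:min-fg-hyperedges-1}; both are routine.
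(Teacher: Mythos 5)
Your proposal is correct and follows essentially the same route as the paper, which simply points back to part (ii) of the proof of Theorem~\ref{thm:min-fg} (where the minimum fine-graining is already exhibited as the meet of $\gf{H}_{\circledast}^{\text{all}}$ with all canonical fine-grainings) and to the meet machinery. Your explicit verification that $\gf{H}_{\circledast}^{\text{all}}$ is itself the meet of the family \eqref{eq:bott-lifts} — via Theorem~\ref{thm:meet} and the observation that $\clw$ is inert on hyperedges supported in pairwise disjoint blocks — supplies exactly the regrouping the paper leaves implicit.
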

\begin{proof}
    See the proof of \Cref{thm:min-fg} and \Cref{thm:meet-from-cg}.
\end{proof}
In \Cref{cor:min-fg}, notice that each correlation hypergraph in \eqref{eq:bott-lifts} can also be seen as a canonical fine-graining of the KC-PMI at the bottom of the $\N_\ell$-party KC-lattice, where $\N_\ell=|\cginv(\ell)|-1$, and the $\N'+1$ parties have been permuted such that the vertices in the non-trivial component of this hypergraph are precisely the ones in $\cginv(\ell)$.

We conclude the analysis of fine-grainings of KC-PMIs by showing that any canonical fine-graining from \Cref{defi:canonical-fg} is a \textit{maximal} fine-graining, i.e., it is a maximal element of the poset $\cgpmiinv(\pmi)$.

\begin{thm}
    The canonical fine-graining is a maximal fine-graining.
\end{thm}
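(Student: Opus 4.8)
The strategy is a proof by contradiction built on the coarse-graining formula \eqref{eq:pmi-cg}: if $\pmi''$ is an $\N'$-party \emph{KC-PMI} with $\cgpmi(\pmi'') = \pmi$ and $\pmi' \subsetneq \pmi''$ (recall $\pmi'\preceq\pmi''\iff\pmi'\subseteq\pmi''$ by \eqref{eq:kc-order}), I will exhibit an $\N$-party \emph{MI} instance $\mi_0\notin\pmi$ with $\cginv(\mi_0)\in\pmi''$, which is impossible. Fix the $\cgmap$-representatives $r_\ell\in\cginv(\ell)$ defining the canonical fine-graining $\pmi'$ (\Cref{defi:canonical-fg}); for $\X\subseteq\nsp$ write $r(\X)=\{r_\ell|\ \ell\in\X\}$, and for $\Y'\subseteq\nspp$ write $\hat\Y=\{\ell\in\nsp|\ r_\ell\in\Y'\}$. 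The first preliminary step is an explicit description of $\pmi'$: by \Cref{defi:canonical-fg} the positive $\bsets$ of $\pmi'$ are exactly the $\bs{r(\X)}$ with $\bs{\X}$ positive in $\pmi$, so feeding these into the reconstruction formula \eqref{eq:reconstruction} (and using that for a \emph{KC-PMI} every element of the generating antichain lies in a positive $\bset$) yields
\begin{equation*}
    \mi(\Y':\Z')\in\cpmi'\ \iff\ \hat\Y\neq\varnothing,\ \hat\Z\neq\varnothing,\ \text{and}\ \mi(\hat\Y:\hat\Z)\in\cpmi .
\end{equation*}
In words, an \emph{MI} instance vanishes in $\pmi'$ exactly when one of its arguments contains no representative, or its ``projection'' $\mi(\hat\Y:\hat\Z)$ vanishes in $\pmi$.

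Next I would reduce to a particularly simple witness by choosing $\mi(\Y':\Z')$ minimal, in the \emph{MI}-poset order, inside the nonempty set $\pmi''\setminus\pmi'$; then every $\mi'''\prec\mi(\Y':\Z')$ lies in $\pmi'$, since it lies in $\pmi''$ ($\pmi''$ is a down-set) but cannot lie in $\pmi''\setminus\pmi'$. Because $\mi(\Y':\Z')\notin\pmi'$, the description above gives $\hat\Y,\hat\Z\neq\varnothing$ and $\mi(\hat\Y:\hat\Z)\in\cpmi$. If $\Y'$ contained a non-representative vertex $v'$, then deleting $v'$ would leave $\hat\Y$ (and $\hat\Z$) unchanged, so $\mi(\Y'\setminus v':\Z')$ would again be outside $\pmi'$; but it is a strict lower bound of $\mi(\Y':\Z')$ lying in $\pmi''$ (and $\Y'\setminus v'\neq\varnothing$, since $\hat\Y\neq\varnothing$ forces $\Y'$ to contain a representative), contradicting minimality. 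Hence $\Y'=r(\hat\Y)$ and $\Z'=r(\hat\Z)$: the minimal witness is supported on representatives. Since $\mi(\hat\Y:\hat\Z)\in\cpmi$, i.e.\ $\mi(\hat\Y:\hat\Z)\notin\pmi$, the hypothesis $\cgpmi(\pmi'')=\pmi$ gives $\cginv(\mi(\hat\Y:\hat\Z))=\mi(\cginv(\hat\Y):\cginv(\hat\Z))\in\cpmi''$. It therefore suffices to prove that, nonetheless, $\mi(\cginv(\hat\Y):\cginv(\hat\Z))\in\pmi''$.

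The core of the argument is a $\bset$ and $\Gamma$-partition analysis. Set $\W'=\Y'\cup\Z'=r(\hat\Y\cup\hat\Z)$ and $\W=\cginv(\hat\Y)\cup\cginv(\hat\Z)=\cginv(\hat\Y\cup\hat\Z)$. If $\bs{\W'}$ is positive in $\pmi''$ then $\mi(\Y':\Z')\in\bs{\W'}\subseteq\cpmi''$ contradicts $\mi(\Y':\Z')\in\pmi''$ directly, so assume $\bs{\W'}$ is not positive in $\pmi''$. From $\pmi''\supseteq\pmi'$ we have $\pos(\pmi'')\subseteq\pos(\pmi')$ (\Cref{thm:partial-order-h}), so every hyperedge of $\hpp$ is of the form $r(\X)$; consequently $\bs{\W}$ is not positive in $\pmi''$ either, unless every class $\cginv(\ell)$ with $\ell\in\hat\Y\cup\hat\Z$ is a singleton, in which case $\W=\W'$, $\cginv(\hat\Y)=\Y'$, $\cginv(\hat\Z)=\Z'$ and the contradiction $\mi(\Y':\Z')\in\pmi''\cap\cpmi''$ is again immediate. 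Assuming then that $\bs{\W}$ and $\bs{\W'}$ are both non-positive in $\pmi''$, one checks that the nontrivial blocks of $\Gamma_{\pmi''}(\W')$ and of $\Gamma_{\pmi''}(\W)$ coincide: both are the sets $r(\X_j)$, where $\X_j$ ranges over the maximal subsystems of $\hat\Y\cup\hat\Z$ whose $\bset$ is positive in $\pmi''$ (this uses that the hyperedges of $\hpp$ contained in $\W$ are the same as those contained in $\W'$, since all such hyperedges are supported on representatives); the only difference is that $\Gamma_{\pmi''}(\W)$ additionally contains a singleton block $\{v'\}$ for every non-representative vertex $v'$ of $\W$. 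Now $\mi(\Y':\Z')\in\pmi''$ implies, via \Cref{thm:gamma-partitions}(iii) (or trivially if $\bs{\W'}$ is vanishing), that $\mi(\Y':\Z')$ splits no block of $\Gamma_{\pmi''}(\W')$, i.e.\ each $\X_j$ is contained in $\hat\Y$ or in $\hat\Z$. But then every block of $\Gamma_{\pmi''}(\W)$---each $r(\X_j)$ and each singleton $\{v'\}$ with $v'\in\cginv(\hat\Y)$ or $v'\in\cginv(\hat\Z)$---is contained in $\cginv(\hat\Y)$ or in $\cginv(\hat\Z)$, so $\mi(\cginv(\hat\Y):\cginv(\hat\Z))$ splits no block of $\Gamma_{\pmi''}(\W)$ and hence belongs to $\pmi''$ by \Cref{thm:gamma-partitions}(iii). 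This contradiction shows $\pmi''=\pmi'$, so $\pmi'$ is maximal; as nothing used the particular choice of representatives, every canonical fine-graining is maximal.

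The main obstacle is the mismatch of directions between the down-set $\pmi''$ and the up-set $\cpmi''$: one wants to ``lift'' a vanishing instance of $\pmi''$ to the inflated instance $\cginv(\mi(\hat\Y:\hat\Z))$, but the down-set property only closes downwards, so the proof must instead argue that the inflated instance vanishes for structural reasons. The two devices that make this work are (a) reducing the witness to one supported on representatives, so that inflating its two arguments really does replace each party by its full $\cgmap$-fibre, and (b) the observation that passing from $\W'$ to $\W$ enlarges the $\Gamma$-partition only by singleton blocks, which no \emph{MI} instance can split. A secondary point needing care is the preliminary description of $\cpmi'$ together with the fact that $\hpp$ is genuinely the correlation hypergraph of a \emph{KC-PMI} (so that \Cref{thm:partial-order-h} and \Cref{thm:gamma-partitions} apply to $\pmi'$ and $\pmi''$); for these one invokes \Cref{defi:canonical-fg}, the reconstruction formula, and the results of \cite{He:2022bmi}.
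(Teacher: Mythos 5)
Your proof is correct, and it takes a genuinely different route from the paper's. The paper argues at the level of correlation hypergraphs: any KC-PMI $\pmi''$ covering the canonical fine-graining $\pmi'$ has, by \Cref{thm:partial-order-h}, a strictly smaller hyperedge set, still supported on the chosen representatives together with isolated vertices; since that hyperedge set is already closed under $\clw$ and the quotient acts on it as a mere relabeling, the coarse-grained hypergraph has strictly fewer hyperedges than $\hp$, so $\cgpmi(\pmi'')\succ\pmi$ and (using order-preservation) nothing strictly above $\pmi'$ can be a fine-graining. You instead work at the level of individual MI instances: you characterize $\cpmi'$ via the reconstruction formula \eqref{eq:reconstruction}, pick a minimal element of $\pmi''\setminus\pmi'$, prune it down to a witness supported on representatives, and then use the $\Gamma$-partition machinery of \Cref{thm:gamma-partitions} (with the key observations that all positive $\bsets$ of $\pmi''$ are of the form $\bs{r(\X)}$ and that passing from $\W'$ to $\W$ only adds singleton blocks, which no MI instance can split) to force the inflated instance $\cginv(\mi(\hat{\Y}:\hat{\Z}))$ into $\pmi''$, contradicting $\cgpmi(\pmi'')=\pmi$ via \eqref{eq:pmi-cg}. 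I checked the delicate steps — the equivalence describing $\cpmi'$, the minimality/pruning argument, the disjointness $\hat{\Y}\cap\hat{\Z}=\varnothing$, and the case split on $\bs{\W'}$ and $\bs{\W}$ being partial versus vanishing (where \Cref{thm:gamma-partitions}(iii) versus triviality applies) — and they all go through. The paper's argument is considerably shorter and exposes the structural reason (strict hyperedge inclusion survives the quotient) at a glance; yours is more laborious but entirely elementary at the level of the MI-poset, directly rules out any fine-graining strictly above $\pmi'$ without invoking the cover/interval structure of $\cgpmiinv(\pmi)$, and gives a concrete picture of exactly which new vanishing instance would be forced to propagate. One cosmetic remark: you overload $\hpp$ (which the paper reserves for the correlation hypergraph of $\pmi'$) to also denote that of $\pmi''$; the statement you need holds for both, so this is harmless, but worth cleaning up.
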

\begin{proof}
    For given $\N,\N'$, $\N$-party KC-PMI $\pmi_1$, and CG-map $\cgmap$, let $\pmi'_1$ be an arbitrary canonical fine-graining of $\pmi_1$ with respect to $\cgmap$. Consider an arbitrary $\N'$-party KC-PMI $\pmi'_2$ that \textit{covers} $\pmi'_1$, i.e., such that there is no other element of $\lat{KC}^{\N'}$ in the interval $[\pmi'_1,\pmi'_2]$ distinct from $\pmi'_1$ and $\pmi'_2$. We write the correlation hypergraph of $\pmi'_1$ as
    \begin{equation}
    \label{eq:can-fine-grain-1}
        \gf{H}_{\pmi'_1} = \gf{H}_1 \oplus v_{\ell_1} \oplus \cdots \oplus v_{\ell_k},
    \end{equation}
    where $\gf{H}_1$ is a hypergraph obtained from $\gf{H}_{\pmi_1}$ by simply relabeling the parties according to $\cgmap$ (and is therefore isomorphic to $\hp$), and other terms are isolated vertices for the additional $\N'-\N$ parties. Since $\pmi'_2\succ\pmi'_1$, by \Cref{thm:partial-order-h} we have $E'_2\subset E'_1$, where $E'_2$ and $E'_1$ denote (respectively) the sets of hyperedges of $\gf{H}_{\pmi'_2}$ and $\gf{H}_{\pmi'_1}$. Because of \eqref{eq:can-fine-grain-1}, we can then write $\gf{H}_{\pmi'_2}$ as
    \begin{equation}
    \label{eq:can-fine-grain-2}
        \gf{H}_{\pmi'_2} = \gf{H}_2 \oplus v_{\ell_1} \oplus \cdots \oplus v_{\ell_k},
    \end{equation}
    and the set of hyperedges of $\gf{H}_2$ is a subset of the set of hyperedges of $\gf{H}_1$. Therefore, the coarse-graining of $\gf{H}_{\pmi'_2}$ by $\cgmap$ is the correlation hypergraph of an $\N$-party KC-PMI $\pmi_2$ such that $\pmi_2\succ\pmi_1$ (again by \Cref{thm:partial-order-h}). This means that $\pmi'_2$ is not a fine-graining of $\pmi_1$ with respect to $\cgmap$, and since it covers $\pmi'_1$, the canonical fine-graining $\pmi'_1$ is maximal.
\end{proof}

Fine-grainings will play a key role in \S\ref{sec:strategies} (particularly, \S\ref{subsec:non-chordal-case}), where we discuss the construction of graph models, and it is convenient to clarify a few aspects about the pre-image of a specific entropy vector under a CG-map. We have seen in the previous subsection that given a CG-map $\cgmap$, the transformation $\cgsvec$ of an entropy vector is a linear map which simply ``projects out'' certain components (cf., \eqref{eq:ent-vec-cg}). Given an $\N$-party entropy vector $\Svec$ and a CG-map $\cgmap$ then, we could define the set of its fine-grainings of $\Svec$ with respect to the CG-map $\cgmap$ as the affine space $\mathbb{A}_\cgmap(\Svec)$ of $\N'$-party entropy space specified by the set of equations \eqref{eq:ent-vec-cg}, for all subsystems $\X$. 

Such a definition, however, despite its simplicity, is clearly too general, as it even includes entropy vectors with negative components. A more natural approach, considering our focus on PMIs, is to impose subadditivity and define the fine-grainings of $\Svec$ with respect to the CG-map $\cgmap$ as the intersection of $\mathbb{A}_\cgmap(\Svec)$ with the $\N'$-party SAC 
\begin{equation}
    \mathscr{A}_\cgmap(\Svec) = \mathbb{A}_\cgmap(\Svec)\,  \cap\, \text{SAC}_{\N'}\,\,.
\end{equation}
Geometrically, this set is an unbounded polyhedron, as some components of an element $\Svecp$ can take arbitrarily large values. However, it is not a cone, since it is contained within an affine subspace. Furthermore, note that, by construction, this set has a natural partition according to the faces of the SAC$_{\N'}$, although it never contains an entire face.\footnote{\,Except for the trivial case where $\Svec$ is the null vector.}

Finally, notice that, in general, this definition implies that for a given $\Svec$ and $\cgmap$, the PMI of an entropy vector $\Svecp$ in $\mathscr{A}_\cgmap(\Svec)$ is not necessarily a KC-PMI. The ``KC fine-grainings'' of $\Svec$, i.e., the entropy vectors $\Svecp$ whose PMI \textit{is} a KC-PMI, are the elements of $\mathscr{A}_\cgmap(\Svec)$ contained in a ``KC-face'' of the SAC$_{\N'}$. However, while KC is a useful condition for the analysis of PMIs, when working with entropy vectors, it is typically more natural to directly impose strong subadditivity. Accordingly, we define the SSA-compatible fine-grainings of $\Svecp$ with respect to the CG-map $\cgmap$ as
\begin{equation}
    \mathscr{A}^{\Sigma}_\cgmap(\Svec) = \mathbb{A}_\cgmap(\Svec)\,  \cap\, \Sigma_{\N'}\,\,,
\end{equation}
where $\Sigma_{\N'}$ is the polyhedral cone specified by all instances of SA and SSA for $\N'$ parties.\footnote{\,Similarly to the definition of the SAC, the set of instances of SSA also includes all instances of weak monotonicity, which can simply be derived from SSA using purifications.}

\subsection{Review of holographic graph models}
\label{subsec:graph-review}

To make this work self contained, in this subsection we briefly review the basic definitions about holographic graph models. For further details, we refer the reader to the original literature, starting from \cite{Bao:2015bfa}, where these models were first introduced. 

A \textit{holographic graph model}, or simply \textit{graph model}, is a weighted graph $\gf{G}=(V,E)$, with non-negative edge weights, and a labeling of a subset of vertices $\partial V\subseteq V$ called \textit{boundary vertices}. For a fixed number of parties $\N$, the subset $\partial V$ in an $\N$-party graph model needs to satisfy $|\partial V|\geq \N+1$, each $v\in\partial V$ must be labeled by \textit{exactly one} element of $\nsp$, and each $\ell\in\nsp$ must label \textit{at least one} boundary vertex. The vertices in $V\setminus\partial V$ are referred to as \textit{bulk vertices}.

An arbitrary subset $C\subseteq V$ is called a \textit{cut}, and an edge $e\in E$ is a \textit{cut edge} if and only if one of its endpoints is in $C$ and the other is in $\comp{C}$. Any cut $C$ has a \textit{cost}, which is defined as the sum of the weights of all cut edges. A cut $C$ is said to be a $\X$-\textit{cut} (denoted by $C_\X$) if the labels of the vertices in $C\cap\partial V$ are precisely the parties in $\X$. For a fixed subsystem $\X$, a cut with minimum cost among all $\X$-cuts is called a \textit{min-cut for $\X$}. In general, min-cuts are not unique, but it was shown in \cite{Avis:2021xnz} that if one introduces the partial order among the min-cuts given by inclusion, there is a unique \textit{minimal min-cut} (denoted by $\mmC_{\X}$). Given $\X$, the ``entropy'' of $\X$ is defined as the cost of a min-cut for $\X$, and via this prescription, one obtains an entropy vector for any given graph model.\footnote{\,It was shown in \cite{hayden2016holographic} that any entropy vector obtained in this way is realizable by a density matrix. In fact, stabilizer states are sufficient.} 

An $\N$-party graph model is said to be \textit{simple} if it has precisely $\N+1$ boundary vertices (so that each party $\ell$ corresponds to precisely one vertex $v_\ell\in \partial V$). Given any graph model, it is always possible to obtain a simple one realizing the same entropy vector by identifying boundary vertices labeled by the same party. Simple graph models with tree topology, or \textit{simple trees} \cite{Hernandez-Cuenca:2022pst}, will be at the core of the next section. Notice that the tree topology is typically not preserved under the transformation of a graph model to its simple version, since the identification of vertices often introduces cycles in the graph.

Given two graph models, their \textit{sum} is defined as the graph whose vertex (respectively, edge) set is the disjoint union of the vertex (respectively, edge) sets of the two graphs. Notice that this definition is very similar to the one we introduced above for correlation hypergraphs, but with one key difference. By definition, the correlation hypergraph has precisely one vertex for each party, and the sum of two correlation hypergraphs is not well defined if they have any party in common. Nevertheless, it is easy to see what the correlation hypergraph of the sum of two graph models is. Specifically, since the entropy vector of the sum of two graph models is the sum of the corresponding entropy vectors,\footnote{\,For simplicity, here we are assuming that the two graph models in the sum correspond to the same set of parties.} the correlation hypergraph of the sum is the correlation hypergraph of the meet of the KC-PMIs of these vectors (cf., \eqref{eq:pmi-meet-svec-sum}).

Finally, below we will often consider coarse-grainings and fine-grainings of graph models \cite{Hernandez-Cuenca:2022pst}. Given an arbitrary $\N'$-party graph model $\gf{G}_{\N'}$, and an arbitrary coarse-graining map $\cgmap$, it is straightforward to verify that the entropy vector after the coarse-graining is realized by the graph model $\gf{G}_{\N}$ obtained from $\gf{G}_{\N'}$ by simply relabeling each boundary vertex $v_{\ell'}$ by $\cgmap(\ell')$. Conversely, if an $\N$-party graph model $\gf{G}_{\N}$ is not simple, we can straightforwardly obtain a simple graph $\gf{G}_{\N'}$ by a fine-graining map which relabels boundary vertices labeled by the same party with distinct parties (and possibly permuting the parties). Finally, in analogy to correlation hypergraphs, the canonical fine-grainings of a given graph model are obtained by adding isolated boundary vertices for the new parties, and by permuting all parties as necessary.

\section{Strategies for graph constructions}
\label{sec:strategies}

In this section, we initiate the program of developing systematic strategies and techniques to construct, whenever possible, a holographic graph model realization of a given entropy vector. Importantly, we emphasize that we will not assume knowledge of any holographic entropy inequality, requiring only that the entropy vector satisfy subadditivity and strong subadditivity. The motivation for this approach is twofold. On the one hand, while complete knowledge of holographic entropy inequalities would allow one to determine conclusively whether an entropy vector is realizable, it does not provide a constructive method for obtaining a corresponding graph model. On the other hand, we hope that this program will lead to a deeper understanding and interpretation of the inequalities themselves. Accordingly, our goals are to develop methods for constructing holographic graph model realizations and to detect unrealizability independently from the inequalities. Although we do not fully solve either problem here, this section presents first steps in both directions. 

We begin in \S\ref{subsec:preliminary-steps} by outlining a series of preliminary steps designed primarily to simplify the problem and isolate the most genuinely interesting cases. This part concludes with a ``chordality test''—a yes-or-no criterion that determines whether the entropy vector could be realized by a simple tree graph model. If the entropy vector passes this test, \S\ref{subsec:chordal-case} introduces an efficient algorithm to construct a candidate simple tree graph model realizing $\Svec$. Although we will not prove it here, we believe that this algorithm will always succeed, and we plan to report on this in future work \cite{sufficiency}. If, on the other hand, the chordality test fails, the construction becomes more involved, and we do not present a complete algorithm in this paper. Nevertheless, \S\ref{subsec:non-chordal-case} explains how the procedure developed in \S\ref{subsec:chordal-case} for the chordal case could still be applied by first fine-graining the system to a larger number of parties, $\N' > \N$. We will also comment on how one might use this approach to establish the unrealizability of $\Svec$ by arbitrary (not necessarily simple) tree graph models. We return to the broader question of unrealizability by \textit{any} graph model in the discussion \S\ref{sec:discussion}.

\subsection{Preliminary analysis and reduction}
\label{subsec:preliminary-steps}

For some fixed $\N$, consider an arbitrary entropy vector $\Svec$ that obeys all instances of subadditivity and strong subadditivity, and let $\pmi$ be its PMI. We emphasize that we do not make any other assumption about $\Svec$. As reviewed in \S\ref{subsec:review}, Klein's condition is strictly weaker than strong subadditivity, therefore $\pmi$ is a KC-PMI, and we can construct its correlation hypergraph $\hp$. For simplicity, in what follows we will often call $\hp$ the correlation hypergraph ``of'' $\Svec$.

Notice that in general $\Svec$ might have some vanishing components. The first paragraph below will focus on this possibility, and clarify how in this case the problem of finding a graph model for $\Svec$ can be reduced to the analogous problem for entropy vectors corresponding to systems with fewer parties.

A central role in this subsection will be played by the graph $\lhp$, called the \textit{line graph} of $\hp$, which is defined as the \textit{intersection graph}\footnote{\,Given an arbitrary set $\A$ and a collection of subsets $\{\B_i\}_{i=1}^k$, the intersection graph of the collection is a graph with a vertex for each subset $\B_i$, and an edge between two vertices if and only if the corresponding subsets have non-empty intersection.} of its set of hyperedges. We will denote a \textit{max-clique}\footnote{\,In a graph $\gf{G}=(V,E)$, a clique is a complete subgraph induced by a subset of $V$. A max-clique is a clique which is not contained in any other clique.} of $\lhp$ by $Q$, and for an arbitrary max-clique we denote by $Q^\cap$ the intersection of the hyperedges of $\hp$ which are elements of $Q$, i.e.,
\begin{equation}
Q^\cap = \bigcap_{h_\X\in Q} h_\X.
\end{equation}
It was shown in \cite{Hubeny:2024fjn} that for any correlation hypergraph, unlike for arbitrary hypergraphs, the cardinality of $Q^\cap$ satisfies the very stringent bound $|Q^\cap|\leq 2$, and that the values 1 and 2 can only be attained when $\hp$ has a particular structure. In the second paragraph we will explain how, if $|Q^\cap|> 0$, we can reduce the problem of finding a graph model for $\Svec$ to the analogous problem for a collection of entropy vectors such that $|Q^\cap|=0$ for each element of the collection.

We stress that this reduction, as well as the one mentioned above for entropy vectors with vanishing components, are completely general and are \textit{not} specific to any particular subclass of graph models, in particular, they are not specific to trees. On the other hand, simple tree graph models are the focus of the third paragraph, where we introduce the ``chordality'' test from \cite{Hubeny:2024fjn}.

\paragraph{Reduction to strictly positive entries:} Given an entropy vector $\Svec$ that obeys all instances of SA and SSA, suppose that it has at least one vanishing entry $\ent_\I$. Since
\begin{equation}
\label{eq:top-mi-poset}
    \mi(\I:\I^\complement)=2\,\ent_\I \quad \text{with} \quad \I^\complement\coloneq\nsp\setminus\I,
\end{equation}
the $\bset$ $\bs{\nsp}$ is not positive, the correlation hypergraph is disconnected, and its connected components correspond to the elements of the partition $\Gamma(\nsp)$. The following result shows how in this case the problem of finding a graph model for $\Svec$ can be reduced to that of finding graph models for a collection of entropy vectors for fewer parties, none of which has a vanishing entry, corresponding to the connected components of $\hp$.

To state the theorem, we introduce a collection of CG-maps, one for each connected component of $\hp$; each map traces out all parties in every other component via a coarse-graining of the purifier. Specifically, for any element $\X_i$ of $\Gamma(\nsp)$, consider the CG-map $\cgmap_i$ specified by the partition
\begin{equation}
    \cgpart_i=\{\X_i,\nsp\setminus\X_i\},
\end{equation}
and indexing function
\begin{equation}
    \chi_i(\ell) = 
    \begin{cases}
        \lambda_i(\ell) & \ell\in\X_i\\
        0 & \ell\notin \X_i,
    \end{cases}
\end{equation}
where $\lambda_i$ is an arbitrary choice of bijection between $\X_i$ and $[\![\N_i]\!]$, and $\N_i=|\X_i|-1$. We then obtain the following reduction theorem.

\begin{figure}[tbp]
    \centering
     \begin{subfigure}{0.45\textwidth}
    \centering
    \begin{tikzpicture}
    \draw[PminusEcol, rounded corners, very thick] (-1.5,-0.6) rectangle (1.5,2.4);
    \draw[Mcol!80!black, very thick] (-1,0) -- (0,1.5);
    \draw[Mcol!80!black, very thick] (1,0) -- (0,1.5);
    \draw[Mcol!80!black, very thick] (-1,0) -- (0
    1,0);
    \draw[Mcol!80!black, very thick] (2.5,0) -- (0
    2.5,1.5);

    \node[] () at (-1,2.2) {{\scriptsize $h_{124}$}};  
    
    \filldraw (-1,0) circle (2pt);
    \filldraw (1,0) circle (2pt);
    \filldraw (0,1.5) circle (2pt);
    \filldraw (2.5,0) circle (2pt);
    \filldraw (2.5,1.5) circle (2pt);
    
    \node[] () at (-1.2,-0.3) {{\scriptsize $v_1$}};
    \node[] () at (1.2,-0.3) {{\scriptsize $v_2$}};
    \node[] () at (0,1.8) {{\scriptsize $v_4$}};
    \node[] () at (2.5,-0.3) {{\scriptsize $v_3$}};
    \node[] () at (2.5,1.8) {{\scriptsize $v_0$}};
    \end{tikzpicture}
    \subcaption[]{}
    \end{subfigure}
    \begin{subfigure}{0.45\textwidth}
    \centering
    \begin{tikzpicture}
    \draw (-1,0) -- (0,0.5);
    \draw (1,0) -- (0,0.5);
    \draw (0,1.5) -- (0,0.5);
    \draw (2.5,0) -- (2.5,1.5);
    
    \filldraw (-1,0) circle (2pt);
    \filldraw (1,0) circle (2pt);
    \filldraw (0,1.5) circle (2pt);
    \filldraw (2.5,0) circle (2pt);
    \filldraw (2.5,1.5) circle (2pt);
    
    \filldraw[fill=bulkcol] (0,0.5) circle (2pt);
    
    \node[] () at (-1.2,-0.3) {{\scriptsize $v_1$}};
    \node[] () at (1.2,-0.3) {{\scriptsize $v_2$}};
    \node[] () at (0,1.8) {{\scriptsize $v_4$}};
    \node[] () at (2.5,-0.3) {{\scriptsize $v_3$}};
    \node[] () at (2.5,1.8) {{\scriptsize $v_0$}};
    
    \node[red] () at (-0.15,1) {{\scriptsize $3$}};
    \node[red] () at (-0.5,0.05) {{\scriptsize $2$}};
    \node[red] () at (0.5,0.05) {{\scriptsize $2$}};
    \node[red] () at (2.35,0.75) {{\scriptsize $4$}};

    \end{tikzpicture}
    \subcaption[]{}
    \end{subfigure}
    \caption{An example of the reduction described in the first paragraph of the main text, for an $\N=3$ entropy vector with a vanishing entry, $\Svec=(2,2,4,3,3,6,2,6,2,7,7,0,6,6,4)$. The correlation hypergraph $\hp$ for the KC-PMI of $\Svec$ is shown in (a). Choosing the CG-map $\cgmap_L$ for the left component of $\hp$ specified by $\{\{1\}_1,\{2\}_2,\{0,3,4\}_0\}$, and the CG-map $\cgmap_R$ for the right component specified by $\{\{0,1,2,4\}_0,\{3\}_1\}$, we obtain the entropy vectors $\Svec^{(L)}=(2,2,3)$, and $\Svec^{(R)}=(4)$. The graph model realization of $\Svec$, shown in (b), is obtained from the sum of a choice of canonical fine-grainings of graphs realizing $\Svec^{(L)}$ and $\Svec^{(R)}$. In (b) we have omitted the (irrelevant) isolated vertices which would result from this sum.}
    \label{fig:van-comp-red-v2}
\end{figure}
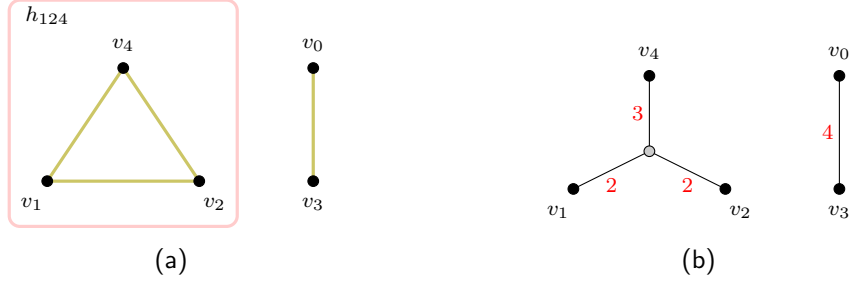

\begin{thm}
\label{thm:reduction-disconnected}
    An entropy vector $\Svec$ that obeys all instances of \emph{SA} and \emph{SSA}, and has at least one vanishing entry, is realizable by a holographic graph model if and only if all vectors in the collection
\begin{equation}
\label{eq:red-connected-collection}
    \{\Svec^{(i)} = \Lambda_{\Theta_i}^\bullet (\Svec)\,|\,\X_i\in\Gamma(\nsp)\},
\end{equation}
none of which has a vanishing entry, is also realizable. Furthermore, denoting by $\gf{G}^{(i)}$ a graph model realization of $\Svec^{(i)}$, a graph model for $\Svec$ is given by
\begin{equation}
\label{eq:red-connected-realization}
    \gf{G} = \!\bigoplus_{\X_i\in\Gamma(\nsp)} \!{}^\uparrow\gf{G}^{(i)},
\end{equation}
where ${}^\uparrow\gf{G}^{(i)}$ is an arbitrary canonical fine-graining of $\gf{G}^{(i)}$ with respect to $\cgmap_i$. 
\end{thm}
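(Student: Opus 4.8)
The plan is to establish the two implications of the biconditional separately, obtaining the ``if'' direction together with the explicit formula \eqref{eq:red-connected-realization} at the same time. The structural input used throughout is the one recorded just before the statement: a vanishing entry $\ent_\I=0$ forces $\mi(\I:\comp{\I})=0$, so $\bs{\nsp}$ is not positive, $\hp$ is disconnected, and its connected components are exactly the blocks $\X_i$ of $\Gamma(\nsp)$.

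For the forward direction, suppose $\gf{G}$ realizes $\Svec$. By the review of coarse-grainings of graph models in \S\ref{subsec:graph-review}, relabelling each boundary vertex $v_{\ell'}$ of $\gf{G}$ by $\cgmap_i(\ell')$ produces a graph model whose entropy vector is $\Lambda_{\Theta_i}^\bullet(\Svec)=\Svec^{(i)}$, so each $\Svec^{(i)}$ is realizable. It then remains to check that no $\Svec^{(i)}$ has a vanishing entry. Here I would invoke \Cref{thm:corr-hyp-cg}: the correlation hypergraph of $\Svec^{(i)}$ is the coarse-graining of $\hp$ by $\cgmap_i$, and since $\cgmap_i$ merely collapses all vertices outside $\X_i$ into a single vertex while relabelling those inside $\X_i$ bijectively, and $\X_i$ is a connected component of $\hp$, the resulting quotient hypergraph is connected; hence $\bs{\nsp}$ is positive in the $\N_i$-party system and no multi-party or full-system entry vanishes. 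Strict positivity of the single-party entries follows by applying \Cref{thm:gamma-partitions}\,\emph{(iii)} to a hyperedge of $\hp$ through a fixed vertex of $\X_i$ together with the Araki--Lieb inequality. (Degenerate singleton components $\{v_\ell\}$ with $\ent_{\{\ell\}}=0$ give $\N_i=0$ and an empty $\Svec^{(i)}$, and are handled trivially.)

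For the backward direction I would first record the additivity of $\Svec$ across components. Because $\ent_\nsp=0$, $\mi(\X_i:\nsp\setminus\X_i)=0$, and every entry is nonnegative by SA, one gets $\ent_{\X_i}=0$ for every $i$; moreover, for any $\X\subseteq\nsp$ and every $i$, the instance $\mi(\X\cap\X_i:\X\setminus\X_i)$ lies in $\pmi$ --- this is \Cref{thm:gamma-partitions}\,\emph{(iii)} applied to $\bs{\X}$, since the two arguments of this instance are disjoint unions of components and therefore cannot split any maximal positive sub-$\bset$ of $\X$ --- so by induction on the number of components met by $\X$ one obtains $\ent_\X=\sum_i\ent_{\X\cap\X_i}$. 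Given realizations $\gf{G}^{(i)}$ of the $\Svec^{(i)}$, I would then compute the entropy vector of $\gf{G}=\bigoplus_i{}^\uparrow\gf{G}^{(i)}$ directly. Since isolated boundary vertices contribute to no cut, each ${}^\uparrow\gf{G}^{(i)}$ has active boundary vertices labelled precisely by the parties in $\X_i$, and $\ent_{{}^\uparrow\gf{G}^{(i)},\X}$ is the entry of $\Svec^{(i)}$ indexed by $\X\cap\X_i$, subject to the purifier convention when $0\notin\X_i$; unwinding the definition of $\Theta_i^{-1}$, and using purity in the $\N$-party system together with the additivity and $\ent_{\X_j}=0$ just established to discharge that convention, gives $\ent_{{}^\uparrow\gf{G}^{(i)},\X}=\ent_{\X\cap\X_i}$ in all cases. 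Summing over $i$ and using additivity of entropy vectors under graph sums (\S\ref{subsec:graph-review}) then yields $\ent_{\gf{G},\X}=\sum_i\ent_{\X\cap\X_i}=\ent_\X$, so $\gf{G}$ realizes $\Svec$; since the $\gf{G}^{(i)}$ exist by hypothesis, this proves the ``if'' direction and the formula \eqref{eq:red-connected-realization} simultaneously.

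I expect the main obstacle to be organizational rather than conceptual: keeping careful track of the bijections $\lambda_i$ and of which block of $\Gamma(\nsp)$ contains the purifier, so that the entropy of each canonical fine-graining ${}^\uparrow\gf{G}^{(i)}$ is matched uniformly with $\ent_{\X\cap\X_i}$ --- in particular in the case $0\notin\X_i$ with $\X$ containing the original party that becomes the new purifier of the $i$-th subsystem, where the purifier convention must be combined with purity and with the cross-component additivity. A secondary point of care is the harmless treatment of the degenerate singleton components noted above.
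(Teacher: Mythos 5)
Your proof is correct and follows the same overall architecture as the paper's: the forward direction by coarse-graining the graph model, the backward direction by summing canonical fine-grainings, and the absence of vanishing entries in the $\Svec^{(i)}$ from the connectivity of the components of $\hp$. The one substantive difference is how the entropy of $\bigoplus_i{}^\uparrow\gf{G}^{(i)}$ is matched to $\Svec$. The paper argues that for each non-vanishing entry $\ent_\J$ there is a \emph{single} summand ${}^\uparrow\gf{G}^{(i(\J))}$ with a non-zero contribution, equal to $\ent_\J$; that is only literally true when $\J$ meets a single block of $\Gamma(\nsp)$ (for two Bell pairs on $\{1,2\}$ and $\{3,0\}$ and $\J=\{1,3\}$, both summands contribute $1$ to $\ent_{13}=2$). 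Your route --- first proving the cross-component additivity $\ent_\X=\sum_i\ent_{\X\cap\X_i}$ from \Cref{thm:gamma-partitions}\,(iii), then showing each summand contributes exactly $\ent_{\X\cap\X_i}$, with the purifier convention discharged via $\ent_{\X_i}=0$ and purity of each block --- treats all entries uniformly and is the more robust version of the same computation; the final sum then reproduces every component of $\Svec$ without appealing to uniqueness of the contributing summand. The bookkeeping you flag around $\lambda_i$ and the block containing the new purifier is indeed the only delicate point, and your handling of it (using $\ent_\K=\ent_{\X_i\setminus\K}$ for $\K\subseteq\X_i$) is sound.
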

\begin{proof}

We first show that $\Svec$ is realizable if and only if all entropy vectors in \eqref{eq:red-connected-collection} are realizable. One direction is trivial; since any vector $\Svec^{(i)}$ in \eqref{eq:red-connected-collection} is obtained via a CG-map from $\Svec$, if $\Svec$ is realizable by a holographic graph model $\gf{G}$, then the same CG-map applied to $\gf{G}$ gives a graph model realization $\gf{G}^{(i)}$ of $\Svec^{(i)}$. Conversely, suppose that each $\Svec^{(i)}$ in \eqref{eq:red-connected-collection} is realizable by a graph model $\gf{G}^{(i)}$. We will prove that $\Svec$ is realizable by showing that the entropy vector of $\gf{G}$ in \eqref{eq:red-connected-realization} is precisely $\Svec$. 

We first construct the graph models ${}^\uparrow\gf{G}^{(i)}$ by choosing, for each $\gf{G}^{(i)}$, an arbitrary canonical fine-graining with respect to the CG-map $\cgmap_i$. The entropy vectors corresponding to these graphs are denoted by ${}^\uparrow\Svec^{(i)}$. Consider then an arbitrary non-vanishing entry $\ent_\J$ of $\Svec$. Since the CG-maps $\cgmap_i$ are defined starting from a partition of $\nsp$, in the collection of vectors ${}^\uparrow\Svec^{(i)}$ there is a single one such that ${}^\uparrow\ent^{(i)}_\J$ does not vanish, and we denote it ${}^\uparrow\Svec^{(i(\J))}$. Moreover, by construction, $\ent^{(i(\J))}_{\cgmap_i(\J)}=\ent_\J$, where $\Svec^{(i(\J))}$ is the element of the collection of entropy vectors $\Svec^{(i)}$ which is lifted to ${}^\uparrow\Svec^{(i(\J))}$, and $\ent^{(i(\J))}_{\cgmap_i(\J)}$ is its $\cgmap_i(\J)$-entry. Since the entropy vector of a sum of graph models is the sum of the entropy vectors for the individual graphs, and $\ent^{(j)}_\J=0$ for all $j\neq i(\J)$, it follows that the $\J$-entry of the graph model in \eqref{eq:red-connected-realization} is precisely $\ent_\J$. Repeating the same argument for all entries of $\Svec$, we obtain the desired result.

Finally, notice that none of the $\Svec^{(i)}$ in \eqref{eq:red-connected-collection} has a vanishing entry, since the correlation hypergraph of each $\Svec^{(i)}$ is, by construction, and up to relabeling of the parties, one of the connected components of the correlation hypergraph of $\Svec$.
\end{proof}

\paragraph{Reduction to $|Q^\cap|=0$:} 

It was shown in \cite[Theorem 8]{Hubeny:2024fjn} that $|Q^\cap|=2$ is only possible if the correlation hypergraph is disconnected, and since we are focusing on entropy vectors which have strictly positive components, we can ignore this possibility.\footnote{\,Above we used the fact that if an entropy vector has vanishing entries its correlation hypergraph is disconnected. It is clear from \Cref{thm:gamma-partitions} and \eqref{eq:top-mi-poset} that the conver also holds.} Furthermore, \cite[Theorem 8]{Hubeny:2024fjn} showed that if $|Q^\cap|=1$ for some max-clique $Q$ of $\lhp$, then there is an instance of the mutual information $\mi(\X:\Y)$ that vanishes for $\Svec$, and whose arguments satisfy $\X\cup\Y=\nsp\setminus\ell_*$ for some party $\ell_*$. As we will now explain, this allows us to reduce the problem of finding a graph model for $\Svec$, similarly to what we discussed in the previous paragraph.

Consider the CG-map $\cgmap_\X$ specified by the partition 
\begin{equation}
    \cgpart_{\X}=\{\X\cup\ell_*,\Y\},
\end{equation}
and an indexing function 
\begin{equation}
    \chi_{_{\X}}(\ell)=\begin{cases}
        0 & \ell\in\Y \\
        \lambda_{\X}(\ell) & \ell\notin\Y,
    \end{cases}
\end{equation}
where $\lambda_{\X}$ is an arbitrary choice of bijection between $\X\cup\ell_*$ and $[\![\N_{\X}]\!]$, and $\N_\X=|\X|$. The CG-map $\cgmap_\Y$ is define analogously, swapping $\X$ and $\Y$ in the definitions above. The following result then shows how the problem of finding a graph model for $\Svec$ can be reduced to that of finding graph models for two entropy vectors for fewer parties. 

\begin{thm}
\label{thm:reduction-cliqint}
    An entropy vector $\Svec$ that obeys all instances of \emph{SA} and \emph{SSA}, has no vanishing entries, and for which there is at least one max-clique $Q$ of $\lhp$ such that $|Q^\cap|=1$, is realizable if and only if the entropy vectors
    \begin{equation}
        \Svec^{(\X)} = \Lambda^\bullet_\X (\Svec) \qquad \Svec^{(\Y)} = \Lambda^\bullet_\Y (\Svec)
    \end{equation}
    are realizable. Furthermore, denoting by $\gf{G}^{(X)}$ and $\gf{G}^{(Y)}$ the graph models for $\Svec^{(\X)}$ and $\Svec^{(\Y)}$, a graph model for $\Svec$ is given by
\begin{equation}
\label{eq:red-cliqint-realization}
    \gf{G} = {}^\uparrow\gf{G}^{(X)} \oplus {}^\uparrow\gf{G}^{(Y)},
\end{equation}
where ${}^\uparrow\gf{G}^{(X)}$ and ${}^\uparrow\gf{G}^{(Y)}$ are arbitrary canonical fine-grainings of $\gf{G}^{(X)}$ and $\gf{G}^{(Y)}$ with respect to (respectively) $\cgmap_X$ and $\cgmap_Y$. 
\end{thm}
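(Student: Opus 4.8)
The plan is to mirror the proof of \Cref{thm:reduction-disconnected}, the new ingredient being that here the reduction is driven by a \emph{single} vanishing mutual-information instance rather than by disconnectedness of the correlation hypergraph. First I would invoke \cite[Theorem 8]{Hubeny:2024fjn}, as recalled just above: the hypothesis that some max-clique $Q$ of $\lhp$ has $|Q^\cap|=1$ guarantees the existence of an MI instance $\mi(\X:\Y)$ with $\mi(\X:\Y)(\Svec)=0$ and $\X\sqcup\Y=\nsp\setminus\ell_*$ for some party $\ell_*$. Since $\X,\Y\neq\varnothing$ and $|\X|+|\Y|=\N$, this forces $|\X|<\N$ and $|\Y|<\N$, so the reduced vectors genuinely involve fewer parties, which is what makes the reduction effective.

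The forward implication is the easy one: if $\gf{G}$ realizes $\Svec$, then relabeling its boundary vertices according to $\cgmap_\X$ (respectively $\cgmap_\Y$) produces a graph model whose entropy vector is $\Lambda^\bullet_\X(\Svec)=\Svec^{(\X)}$ (respectively $\Svec^{(\Y)}$), by the description of coarse-grainings of graph models in \S\ref{subsec:graph-review}. For the converse, I would take realizations $\gf{G}^{(X)},\gf{G}^{(Y)}$ of $\Svec^{(\X)},\Svec^{(\Y)}$, form $\gf{G}={}^\uparrow\gf{G}^{(X)}\oplus{}^\uparrow\gf{G}^{(Y)}$ with canonical fine-grainings with respect to $\cgmap_\X,\cgmap_\Y$, and — using that the entropy vector of a sum of graph models is the sum of the entropy vectors — reduce the whole statement to the purely combinatorial identity
\begin{equation}
\label{eq:cliqint-plan-key}
    \Svec = {}^\uparrow\Svec^{(X)} + {}^\uparrow\Svec^{(Y)}.
\end{equation}

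To prove \eqref{eq:cliqint-plan-key} I would extend all entropy vectors to arbitrary subsystems of $\nsp$ by $\ent_\Z=\ent_{\nsp\setminus\Z}$ and verify it on representatives $\Z$ with $\ell_*\notin\Z$, i.e.\ $\Z\subseteq\X\sqcup\Y$. There are two ingredients. The first is that $\Svec$ obeys SSA, so its PMI is a KC-PMI, hence a down-set in the MI-poset; since $\mi(\Z\cap\X:\Z\cap\Y)\preceq\mi(\X:\Y)\in\pmimap(\Svec)$, the instance $\mi(\Z\cap\X:\Z\cap\Y)$ also vanishes, yielding the factorization $\ent_\Z=\ent_{\Z\cap\X}+\ent_{\Z\cap\Y}$ for every $\Z\subseteq\X\sqcup\Y$. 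The second is that, by construction, $\cgmap_\X$ keeps the parties of $\X$ distinct and absorbs $\Y\cup\ell_*$ into the new purifier, so $\Svec^{(\X)}$ records exactly the entropies $\ent_\W$ with $\W\subseteq\X$ (cf.\ \eqref{eq:ent-vec-cg}), and a canonical fine-graining only adds isolated boundary vertices for the parties of $\Y$; since an isolated vertex is free for any cut, ${}^\uparrow\ent^{(X)}_\Z=\ent_{\Z\cap\X}$ whenever $\ell_*\notin\Z$, and symmetrically ${}^\uparrow\ent^{(Y)}_\Z=\ent_{\Z\cap\Y}$. Adding these two identities and invoking the factorization gives \eqref{eq:cliqint-plan-key}, hence the theorem.

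I expect the main obstacle to be exactly this second ingredient: the careful bookkeeping of the purifier and of the special party $\ell_*$ under $\cgmap_\X$, $\cgmap_\Y$ and the canonical fine-grainings. One must check that the fine-graining can be set up so that $\ell_*$ is the party identified with the purifier on both sides (so that ${}^\uparrow\Svec^{(X)}$ and ${}^\uparrow\Svec^{(Y)}$ are genuinely supported on $\X$ and $\Y$ once $\ell_*$ is stripped off), and that the complementation convention $\ent_\Z=\ent_{\nsp\setminus\Z}$ makes everything consistent for subsystems $\Z$ that do contain $\ell_*$. By contrast, the factorization step is routine given that KC-PMIs are down-sets (cf.\ \Cref{thm:gamma-partitions}), and the forward direction follows directly from the general properties of coarse-grainings of graph models reviewed in \S\ref{subsec:cg-and-fg} and \S\ref{subsec:graph-review}.
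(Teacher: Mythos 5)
Your proposal is correct and follows essentially the same route as the paper: the forward direction via coarse-graining of the graph model, and the converse via the factorization $\ent_\I=\ent_{\I\cap\X}+\ent_{\I\cap\Y}$ (which the paper obtains from $\mi(\X:\Y)=0$ exactly as you do, via the down-set property of the KC-PMI) combined with the bookkeeping of canonical fine-grainings already used for \Cref{thm:reduction-disconnected}. The paper in fact leaves the fine-graining bookkeeping "as an exercise for the reader," so your explicit verification that ${}^\uparrow\ent^{(X)}_\Z=\ent_{\Z\cap\X}$ simply fills in what the paper omits.
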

\begin{proof}
    If $\Svec$ is realizable, $\Svec^{(\X)}$ and $\Svec^{(\Y)}$ are obviously realizable, and we focus on the opposite direction. We need to show that the entropy vector of the graph in \eqref{eq:red-cliqint-realization} is $\Svec$. For simplicity, we assume that $\ell_*$ is the purifier; if it is not, one can simply permute the parties as necessary. Consider an arbitrary component $\ent_\I$ of $\Svec$. Since we have chosen $\ell_*$ to be the purifier, it follows that $\I\subseteq \X\cup\Y=[\N]$. We then partition $\I$ into two subsystems $\I_\X=\I\cap\X$ and $\I_\Y=\I\cap\Y$, and since $\mi(\X:\Y)=0$, we have $\ent_\I = \ent_{\I_\X} + \ent_{\I_\Y}$. The argument is then almost identical to that in the proof of \Cref{thm:reduction-disconnected}, and we leave the details as an exercise for the reader. 
\end{proof}

Lastly, while $\Svec^{(\X)}$ and $\Svec^{(\Y)}$ clearly have no vanishing entries (since, by assumption, $\Svec$ also did not), it is important to note that the theorem above does not imply that, for either of these vectors, all max-cliques of the line graph satisfy $|Q^\cap|=0$. Nevertheless, if $|Q^\cap|=1$ for some max-clique, one can simply iterate the reduction until either $|Q^\cap|=0$ for all max-cliques of the line graph of each entropy vector produced by the reduction, or one obtains $\N=1$ entropy vectors, which are trivial to realize. An example of the latter situation, in which the iteration terminates only at $\N=1$, is an entropy vector realized by a graph model that is a path of boundary vertices (with arbitrary weights).

\begin{figure}[tbp]
    \centering
     \begin{subfigure}{0.45\textwidth}
    \centering
    \begin{tikzpicture}

    \coordinate (V0) at (0.0000, 2.0000);
    \coordinate (V1) at (-1.5637, 1.2470);
    \coordinate (V2) at (-1.9499, -0.4947);
    \coordinate (V3) at (-0.8678, -1.8019);
    \coordinate (V4) at (0.8678, -1.8019);
    \coordinate (V5) at (1.9499, -0.4947);
    \coordinate (V6) at (1.5637, 1.2470);

    \foreach \i in {0,...,6} {
        \foreach \j in {\i,...,6} {
            \ifnum\i=\j\relax\else
                \draw[gray!70, thin] (V\i) -- (V\j);
            \fi
        }
    }

    \filldraw[Mcol!80!black] (V0) circle (3pt);
    \filldraw[Mcol!80!black] (V1) circle (3pt);
    \filldraw[PminusEcol] (V2) circle (3pt);
    \filldraw[PminusEcol] (V3) circle (3pt);
    \filldraw[PminusEcol] (V4) circle (3pt);
    \filldraw[PminusEcol] (V5) circle (3pt);
    \filldraw[Mcol!80!black] (V6) circle (3pt);

    \node[above=2pt] at (V0) {{\scriptsize $h_{12}$}};
    \node[above right=2pt] at (V6) {{\scriptsize $h_{23}$}};
    \node[right=2pt] at (V5) {{\scriptsize $h_{123}$}};
    \node[below=2pt] at (V4) {{\scriptsize $h_{124}$}};
    \node[below=2pt] at (V3) {{\scriptsize $h_{234}$}};
    \node[left=2pt] at (V2) {{\scriptsize $h_{1234}$}};
    \node[above left=2pt] at (V1) {{\scriptsize $h_{24}$}};

    \end{tikzpicture}
    \subcaption[]{}
    \end{subfigure}
    \begin{subfigure}{0.45\textwidth}
    \centering
    \begin{tikzpicture}
    \draw (0,0) -- (-1,-1);
    \draw (0,0) -- (-1,1);
    \draw (0,0) -- (2,0);
    
    \filldraw (-1,-1) circle (2pt);
    \filldraw (-1,1) circle (2pt);
    \filldraw (2,0) circle (2pt);
    \filldraw (1,0) circle (2pt);
    
    \filldraw[fill=bulkcol] (0,0) circle (2pt);
    
    \node[] () at (-1.3,1) {{\scriptsize $v_1$}};
    \node[] () at (-1.3,-1) {{\scriptsize $v_0$}};
    \node[] () at (1,0.25) {{\scriptsize $v_2$}};
    \node[] () at (2,0.25) {{\scriptsize $v_3$}};
    
    \node[red] () at (-0.5,0.8) {{\scriptsize $1$}};
    \node[red] () at (-0.5,-0.75) {{\scriptsize $1$}};
    \node[red] () at (0.5,-0.25) {{\scriptsize $2$}};
    \node[red] () at (1.5,-0.25) {{\scriptsize $1$}};

    \node[] () at (2,-2) {{}};

    \end{tikzpicture}
    \subcaption[]{}
    \end{subfigure}
    \caption{An example of a realizable entropy $\Svec=(1, 3, 1, 2, 2, 2, 1)$ for which there is a max-clique $Q$ of $\lhp$ such that $|Q^\cap|=1$. The line graph $\lhp$ is shown in (a) and a graph model realization in (b). Notice that $\lhp$ is a complete graph, and therefore it has a single max-clique $Q$, which is the graph itself. Furthermore, $Q^\cap=\{2\}$ and, in agreement with \cite[Theorem 8]{Hubeny:2024fjn}, $\mi(01:3)=0$. Defining $\X=01$ and $\Y=3$, the coarse-graining $\cgmap_\X$ specified by $\{\{1\}_1,\{2,3\}_0,\{0\}_2\}$, applied to $\Svec$, gives $\Svec^{(\X)}=(1,1,2)$, while the coarse-graining $\cgmap_\Y$ specified by $\{\{0,1,2\}_0,\{3\}_1\}$ gives $\Svec^{(\Y)}=(1)$. Constructing graph models for $\Svec^{(\X)}$ and $\Svec^{(\Y)}$, choosing canonical fine-grainings with respect to the maps specified above, and taking the sum, we obtain a disconnected but not simple graph model for $\Svec$. The simple version of this graph, obtained by identifying vertices with the same party label, is (b).}
    \label{fig:q-int-one-red-v2}
\end{figure}
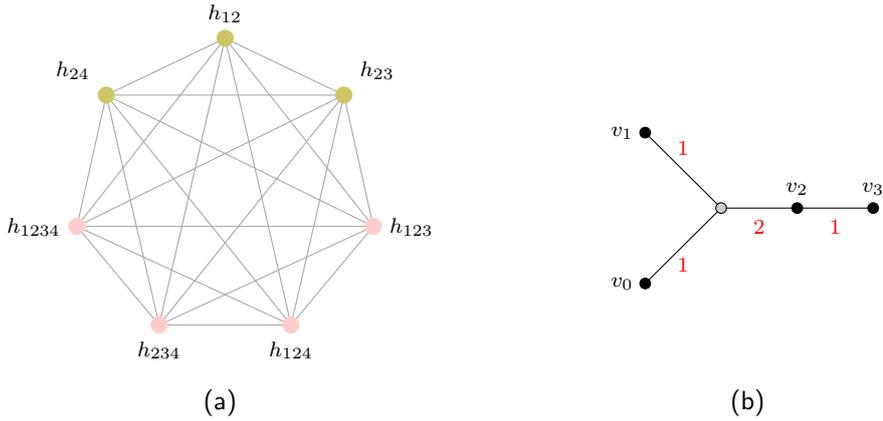

\paragraph{Chordality test:} It was shown in \cite{Hubeny:2024fjn} that an arbitrary entropy vector $\Svec$ which obeys all instances of subadditivity and strong subadditivity can be realized by a simple tree holographic graph model \textit{only if} ``its'' line graph $\lhp$ is a \textit{chordal} graph, i.e., it does not contain any cycle of length four or more without a chord. Given an entropy vector obtained from the reductions described in the previous two paragraphs, we can now proceed to verify whether its line graph is chordal, a check that can be performed efficiently. If $\lhp$ is chordal, we present in \S\ref{subsec:chordal-case} an efficient algorithm that potentially constructs a simple tree graph model realizing the entropy vector. As we will see, the algorithm always produces a \textit{candidate} graph model for $\Svec$; the outstanding question is whether it invariably yields a correct graph model, in the sense that computing its entropy vector using the standard prescription reviewed in \S\ref{subsec:graph-review} reproduces the desired entropy vector. Extensive tests indicate that the algorithm succeeds in all cases examined, and we hope to report a proof of its general validity soon \cite{sufficiency}. Conversely, if the line graph is not chordal, we discuss in \S\ref{subsec:non-chordal-case} a possible strategy for constructing a non-simple tree graph model, and we outline how one might detect unrealizability by arbitrary tree graph models in the absence of any holographic entropy inequality.

\subsection{The chordal case}
\label{subsec:chordal-case}

Let us briefly summarize the assumptions about the entropy vectors for which we want to find a realization by a simple tree graph model. For any given $\N$, in this subsection we consider an entropy vector $\Svec$ such that:
\begin{align}
\label{eq:assumptions}
    \text{(i) }  &\text{it satisfies all instances of SA and SSA},\nonumber\\
    \text{(ii) } &\text{it has no vanishing components, equivalently $\hp$ is connected},\nonumber\\
    \text{(iii) } &\text{$Q^\cap=\varnothing$ for all max-cliques $Q$ of $\lhp$}, \nonumber\\
    \text{(iv) } &\text{$\lhp$ is a chordal graph} 
\end{align}
An entropy vector that satisfies the first three conditions above will be said to be \textit{irreducible}, and for simplicity, we will say that an entropy vector satisfying (i) and (iv) \textit{is chordal}. A simple choice of $\Svec$, that we will use throughout this subsection to exemplify the various steps of the algorithm, is
    \begin{equation}
    \label{eq:Svec-ex-chordal}
        \Svec=(1, 1, 2, 2, 2, 3, 3, 3, 2, 2, 3, 3, 2, 1, 1) \ ,
    \end{equation}
in particular, the complement\footnote{\,For most KC-PMIs the correlation hypergraph and its line graph have a complicated structure which is hard to visualize. To present the relevant data more clearly, we often show either the complement of the line graph (for which most vertices are isolated), or even just the non-trivial subgraph omitting isolated vertices. In some cases, to make the structure of chordless cycles more evident, we instead present the subgraph of the line graph induced by the vertices which belong to chordless cycles.} of $\lhp$ is shown in \Cref{fig:chordal-eg-part-1}.\footnote{\,In the first version of this work, we considered a different entropy vector. While the construction was correct, it was not an example of an irreducible entropy vector. Accordingly, the result of the algorithm was a simple tree graph model which realized the entropy vector, but it had a contractible edge (specifically, the edge incident on the leaf $v_1$).}

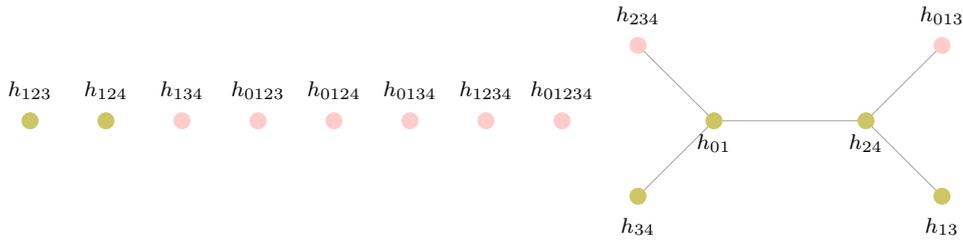
\begin{figure}[tbp]
    \centering
    \begin{tikzpicture}

    \draw[gray!70, thin] (0,0) -- (-1,-1);
    \draw[gray!70, thin] (0,0) -- (-1,1);
    \draw[gray!70, thin] (0,0) -- (1.5,0);
    \draw[gray!70, thin] (1.5,0) -- (3,0);

    \filldraw[Mcol!80!black] (0,0) circle (3pt);
    \filldraw[Mcol!80!black] (-1,-1) circle (3pt);
    \filldraw[PminusEcol] (-1,1) circle (3pt);
    \filldraw[Mcol!80!black] (1.5,0) circle (3pt);
    \filldraw[EminusMcol!90!black] (3,0) circle (3pt);

    \filldraw[Mcol!80!black] (-11,0) circle (3pt);
    \filldraw[EminusMcol!90!black] (-10,0) circle (3pt);
    \filldraw[Mcol!80!black] (-9,0) circle (3pt);
    \filldraw[EminusMcol!90!black] (-8,0) circle (3pt);
    \filldraw[PminusEcol] (-3,0) circle (3pt);
    \filldraw[EminusMcol!90!black] (-6,0) circle (3pt);
    \filldraw[PminusEcol] (-7,0) circle (3pt);
    \filldraw[PminusEcol] (-5,0) circle (3pt);
    \filldraw[PminusEcol] (-4,0) circle (3pt);
    \filldraw[PminusEcol] (-2,0) circle (3pt);

    \node[] () at (-1,1.4) {{\scriptsize $h_{234}$}};
    \node[] () at (-1,-1.4) {{\scriptsize $h_{34}$}};
    \node[] () at (3,-0.3) {{\scriptsize $h_{013}$}};
    \node[] () at (0,-0.3) {{\scriptsize $h_{01}$}};
    \node[] () at (1.5,-0.3) {{\scriptsize $h_{24}$}};

    \node[] () at (-11,0.4) {{\scriptsize $h_{023}$}};
    \node[] () at (-10,0.4) {{\scriptsize $h_{034}$}};
    \node[] () at (-9,0.4) {{\scriptsize $h_{123}$}};
    \node[] () at (-8,0.4) {{\scriptsize $h_{134}$}};
    \node[] () at (-7,0.4) {{\scriptsize $h_{0123}$}};
    \node[] () at (-6,0.4) {{\scriptsize $h_{0124}$}};
    \node[] () at (-5,0.4) {{\scriptsize $h_{0134}$}};
    \node[] () at (-4,0.4) {{\scriptsize $h_{0234}$}};
    \node[] () at (-3,0.4) {{\scriptsize $h_{1234}$}};
    \node[] () at (-2,0.4) {{\scriptsize $h_{01234}$}};

    \end{tikzpicture}
    \caption{The complement of $\lhp$ for the $\N=4$ chordal irreducible entropy vector in \eqref{eq:Svec-ex-chordal}.
    Throughout this subsection, we will describe the various steps of the algorithm for this example, until we obtain a simple tree graph model realizing $\Svec$ (in \Cref{fig:graph-construction-eg-chordal}). In $\lhp$, the isolated vertices of the graph shown above are connected to all other vertices, and therefore belong to all max-cliques; for convenience, we denote the set of these vertices by $\widetilde{Q}$. It is easy to see that $\lhp$ has three max-cliques given by
    $Q_1=\widetilde{Q}\cup\{h_{34},h_{013},h_{234}\}$, $Q_2=\widetilde{Q}\cup\{h_{01},h_{013}\}$ and $Q_3=\widetilde{Q}\cup\{h_{24},h_{34},h_{234}\}$.
    (In the non-trivial connected component of the complement of $\lhp$, any max-clique can contain at most one vertex for each edge.)}
    \label{fig:chordal-eg-part-1}
\end{figure}

To explain the logic behind the construction, it is useful to first consider more closely the origin of the chordality condition from \cite{Hubeny:2024fjn}, and for this, we first need to review two results from \cite{Hubeny:2024fjn} about holographic graph models. The first one is essentially the graph version of the known relationship between positivity of the mutual information and connectivity of the entanglement wedge in holography \cite{Headrick:2013zda}.

\begin{lemma}
\label{lem:ew_connectivity-v2}
    For any simple graph model $\gf{G}$ and subsystem $\X$, if $\bs{\X}$ is positive then the subgraph $\gf{G}_\X$ induced by the minimal min-cut $\mmC_{\X}$ for $\X$ is connected.
\end{lemma}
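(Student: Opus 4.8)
The plan is to argue by contradiction. Suppose $\gf{G}_\X$, the subgraph induced by the minimal min-cut $\mmC_{\X}$, is disconnected. From its decomposition into connected components I would build a bipartition $\{\Y,\Z\}$ of $\X$ for which $\mi(\Y:\Z)=0$, i.e.\ an element of $\bs{\X}$ that vanishes, directly contradicting the hypothesis that $\bs{\X}$ is positive. So the whole proof reduces to (a) showing that the components of $\gf{G}_\X$ can be grouped into two parts each containing boundary vertices, and (b) a short manipulation of cut costs.

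For step (a), the key observation is that minimality of $\mmC_{\X}$ forces \emph{every} connected component of $\gf{G}_\X$ to meet $\partial V$. Indeed, if some component $K$ contained no boundary vertex, then $\mmC_{\X}\setminus K$ would still be an $\X$-cut (it has exactly the same boundary vertices), and since there are no edges between $K$ and the rest of $\mmC_{\X}$, the cut edges of $\mmC_{\X}\setminus K$ form a subset of those of $\mmC_{\X}$; by non-negativity of the weights, $\mathrm{cost}(\mmC_{\X}\setminus K)\le\mathrm{cost}(\mmC_{\X})$, so $\mmC_{\X}\setminus K$ is again a min-cut for $\X$ strictly contained in $\mmC_{\X}$, contradicting minimality (cf.\ \cite{Avis:2021xnz}). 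Given this, I can split the components of $\gf{G}_\X$ into two nonempty groups with vertex sets $A$ and $B$, so that $\mmC_{\X}=A\sqcup B$, there are no edges between $A$ and $B$, and both $A\cap\partial V$ and $B\cap\partial V$ are nonempty.

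Now let $\Y$ (resp.\ $\Z$) be the set of party labels carried by the boundary vertices in $A$ (resp.\ $B$). Because the graph model is \emph{simple}, each party labels exactly one boundary vertex, hence $\Y\cap\Z=\varnothing$; since $\mmC_{\X}$ is an $\X$-cut we get $\Y\cup\Z=\X$; and both are nonempty by step (a). Thus $\mi(\Y:\Z)$ is a genuine element of $\bs{\X}$. To evaluate it, note that since no edge joins $A$ to $B$, the cut edges of $\mmC_{\X}$ partition into those with their $\mmC_{\X}$-endpoint in $A$ and those with it in $B$, so, viewing $A$ and $B$ as cuts, $\mathrm{cost}(\mmC_{\X})=\mathrm{cost}(A)+\mathrm{cost}(B)$. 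But $A$ is a $\Y$-cut and $B$ a $\Z$-cut, so $\ent_\Y\le\mathrm{cost}(A)$ and $\ent_\Z\le\mathrm{cost}(B)$, giving $\ent_\Y+\ent_\Z\le\mathrm{cost}(\mmC_{\X})=\ent_\X$. Since graph-model entropy vectors satisfy subadditivity, also $\ent_\Y+\ent_\Z\ge\ent_\X$, whence $\mi(\Y:\Z)=\ent_\Y+\ent_\Z-\ent_\X=0$ — the desired contradiction.

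The only delicate point, and the one I expect to require the most care, is step (a): the reduction to an ``honest'' bipartition of $\X$ genuinely uses the minimality of $\mmC_{\X}$ (a general min-cut could carry boundary-free components and the statement would fail for it). Everything else is routine: cost additivity across a separation is immediate, and the inequality $\ent_\Y+\ent_\Z\ge\ent_\X$ can either be quoted as subadditivity for graph models or re-derived by submodularity of the cut function applied to min-cuts of $\Y$ and $\Z$. It is also worth flagging that simplicity enters only through $\Y\cap\Z=\varnothing$; for a non-simple graph model the same argument would instead yield a vanishing mutual information in a \emph{coarse-grained} system, which is consistent with the discussion of coarse-grainings in \S\ref{subsec:cg-and-fg}.
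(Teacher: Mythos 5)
Your proof is correct and is essentially the argument the paper has in mind: the lemma is proved in the cited reference as the graph-theoretic version of the standard ``disconnected entanglement wedge $\Rightarrow$ vanishing mutual information'' argument, which is exactly your chain of (a) minimality of $\mmC_\X$ forcing every component to carry boundary vertices, (b) cost additivity across the separation giving $\ent_\Y+\ent_\Z\le\ent_\X$, and (c) subadditivity closing the gap to force $\mi(\Y:\Z)=0$. You also correctly isolate the two places where the hypotheses genuinely enter (minimality for step (a), simplicity for $\Y\cap\Z=\varnothing$), so nothing is missing.
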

\begin{proof}
    See \cite[Corollary 7]{Hubeny:2024fjn}.
\end{proof}

The second result establishes the relationship between the line graph of an entropy vector realizable by a graph model, and the set of subsystems $\X\subseteq\nsp$ whose minimal min-cuts induce connected subgraphs. Essentially, it follows from the fact that, like in the holographic context, the minimal min-cuts for two disjoint subsystems have empty intersection. 

\begin{lemma}
\label{lem:isomorphic_LH}
    For any entropy vector $\Svec$ that can be realized by a holographic graph model, and any simple graph model realizing $\Svec$, let 
    \begin{equation}
    \label{eq:sigma}
        \mathfrak{C}=\{\mmC_{\X}|\; \bs{\X}\in\pos\}.
    \end{equation}
    Then the intersection graph $\gf{L}_\mathfrak{C}$ of $\mathfrak{C}$ is isomorphic to the line graph $\lhp$ of the correlation hypergraph $\hp$ of the \emph{PMI} $\pmi$ of $\Svec$.
\end{lemma}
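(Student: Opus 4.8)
The plan is to show that the assignment $h_{\X}\mapsto\mmC_{\X}$ is a graph isomorphism from $\lhp$ onto $\gf{L}_{\mathfrak{C}}$. The vertices of $\lhp$ are the hyperedges of $\hp$, which are in bijection with the positive $\bsets$ via $\bs{\X}\leftrightarrow h_{\X}$, and the members of $\mathfrak{C}$ are the $\mmC_{\X}$ for exactly those same $\X$; so it suffices to check three things: that $\X\mapsto\mmC_{\X}$ is injective on the set of positive $\bsets$ (so that the induced vertex map is a bijection), and that for any two positive $\bsets$ $\bs{\X},\bs{\Y}$ one has $h_{\X}\cap h_{\Y}\neq\varnothing$ if and only if $\mmC_{\X}\cap\mmC_{\Y}\neq\varnothing$, i.e.\ that adjacency is both preserved and reflected.

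For injectivity I would invoke simplicity of the graph model: each party $\ell$ labels a unique boundary vertex $v_{\ell}$, so by definition any $\X$-cut $C$ satisfies $C\cap\partial V=\{v_{\ell}\mid\ell\in\X\}$, and hence $\X$ is recovered from $\mmC_{\X}$ as the set of parties labeling the vertices in $\mmC_{\X}\cap\partial V$. In particular distinct positive $\bsets$ yield distinct minimal min-cuts. This observation also settles one direction of the adjacency statement for free: if $\ell\in\X\cap\Y$ then $v_{\ell}\in\mmC_{\X}\cap\mmC_{\Y}$, so $\X\cap\Y\neq\varnothing$ (equivalently $h_{\X}\cap h_{\Y}\neq\varnothing$) implies $\mmC_{\X}\cap\mmC_{\Y}\neq\varnothing$.

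The remaining implication --- if $\X\cap\Y=\varnothing$ then $\mmC_{\X}\cap\mmC_{\Y}=\varnothing$ --- is the crux, and is precisely the graph analogue of the fact that minimal min-cuts of disjoint subsystems do not overlap. I would establish it by a cut-and-paste argument: since $\X$ and $\Y$ have disjoint boundary vertices, $\mmC_{\X}\setminus\mmC_{\Y}$ is again an $\X$-cut and $\mmC_{\Y}\setminus\mmC_{\X}$ again a $\Y$-cut; the graph-cut cost function is symmetric and submodular, hence posimodular, which gives $\mathrm{cost}(\mmC_{\X})+\mathrm{cost}(\mmC_{\Y})\geq\mathrm{cost}(\mmC_{\X}\setminus\mmC_{\Y})+\mathrm{cost}(\mmC_{\Y}\setminus\mmC_{\X})$; combining this with the minimality of $\mathrm{cost}(\mmC_{\X})$ and $\mathrm{cost}(\mmC_{\Y})$ among $\X$- and $\Y$-cuts respectively forces every inequality to be an equality, so $\mmC_{\X}\setminus\mmC_{\Y}$ is itself a min-cut for $\X$, and minimality of the minimal min-cut \cite{Avis:2021xnz} then forces $\mmC_{\X}\subseteq\mmC_{\X}\setminus\mmC_{\Y}$, i.e.\ $\mmC_{\X}\cap\mmC_{\Y}=\varnothing$. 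Assembling the pieces, $h_{\X}\mapsto\mmC_{\X}$ is a vertex bijection preserving and reflecting adjacency, hence an isomorphism $\lhp\cong\gf{L}_{\mathfrak{C}}$. I expect the only real obstacle to be this disjointness step; note in particular that \Cref{lem:ew_connectivity-v2} is not needed here, since the statement depends only on the intersection pattern of the minimal min-cuts, not on the connectivity of the subgraphs they induce.
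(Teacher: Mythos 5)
Your proof is correct and follows essentially the same route as the paper, which simply defers to \cite{Hubeny:2024fjn} (Lemma 4) after noting that the whole content of the statement is the disjointness of minimal min-cuts for disjoint subsystems. Your version supplies the details the paper leaves to the citation — injectivity and adjacency-preservation from simplicity of the graph model, and the crucial reflected direction via posimodularity of the cut cost together with minimality of $\mmC_{\X}$ — and you are also right that \Cref{lem:ew_connectivity-v2} plays no role here.
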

\begin{proof}
    See \cite[Lemma 4]{Hubeny:2024fjn}.
\end{proof}

For a given entropy vector $\Svec$, suppose now that we want to determine whether there exists a realization by a simple tree graph model $\gf{T}$. 
It should be intuitively clear that in general there is tension between the tree topology, and the requirement that multiple non-overlapping subgraphs are connected.
More precisely, since any connected subgraph of a tree is a \textit{subtree}, if there exists a simple tree graph model $\gf{T}$ realizing $\Svec$, then $\gf{L}_\mathfrak{C}$ in \Cref{lem:isomorphic_LH} is the intersection graph of a collection of subtrees of a tree. Graphs of this kind are known as \textit{subtree graphs} and are completely characterized by the following well-known result in graph theory, from which the necessary condition in \cite{Hubeny:2024fjn} for the existence of $\gf{T}$ follows immediately using \Cref{lem:isomorphic_LH}.

\begin{thm}[Buneman, Gavril, and Walter]
\label{thm:subtree}
     A graph is a subtree graph if and only if it is a chordal graph.
\end{thm}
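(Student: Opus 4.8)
The plan is to establish both implications of the characterization of subtree graphs as chordal graphs, following the classical approach. For the ``only if'' direction, suppose $G$ is a subtree graph, so there is a host tree $T$ and a family of subtrees $\{T_v\}_{v\in V(G)}$ such that $T_u\cap T_v\neq\varnothing$ precisely when $uv\in E(G)$. First I would recall the \emph{Helly property} for subtrees of a tree: any pairwise-intersecting family of subtrees of $T$ has a common vertex. Then, toward a contradiction, assume $G$ contains a chordless cycle $v_1 v_2\cdots v_k v_1$ with $k\geq 4$. The consecutive subtrees $T_{v_i}$ and $T_{v_{i+1}}$ intersect (indices mod $k$), but non-consecutive ones are disjoint. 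Pick a vertex $x_i\in T_{v_i}\cap T_{v_{i+1}}$ for each $i$. Since $T$ is a tree, there is a unique path between any two of these, and by tracing the path structure one shows that the edges $T_{v_i}$ together would have to ``close up'' a cycle in $T$, which is impossible since $T$ is acyclic. More carefully, I would argue that the path in $T$ from $x_{i-1}$ to $x_i$ lies in $T_{v_i}$ (since $T_{v_i}$ is connected and contains both endpoints), so concatenating these paths around the cycle gives a closed walk in $T$ that, because consecutive subtrees only overlap pairwise (Helly fails for the triple because the cycle is chordless and $k\ge 4$ means no three consecutive are forced to share a vertex beyond the pairwise overlaps), must traverse some edge of $T$ twice in a way that produces a genuine cycle — contradiction. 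Hence $G$ has no chordless cycle of length $\geq 4$, i.e.\ $G$ is chordal.

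For the ``if'' direction, suppose $G$ is chordal. I would use the standard fact that a chordal graph admits a \emph{perfect elimination ordering} $v_1,\ldots,v_n$: for each $i$, the neighbours of $v_i$ among $\{v_{i+1},\ldots,v_n\}$ form a clique. The construction of the host tree and subtree family proceeds by reverse induction on this ordering. Equivalently and more cleanly, I would invoke the \emph{clique tree} characterization: a graph is chordal if and only if its maximal cliques can be arranged as the nodes of a tree $T$ such that, for every vertex $v\in V(G)$, the set of maximal cliques containing $v$ induces a subtree of $T$ (the ``clique-intersection property''). Given this, one sets $T_v$ to be exactly that induced subtree. Then $T_u\cap T_v\neq\varnothing$ iff some maximal clique contains both $u$ and $v$, iff $uv\in E(G)$ (using that every edge lies in some maximal clique, and conversely vertices in a common clique are adjacent). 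This exhibits $G$ as the subtree graph of the family $\{T_v\}$ over the host tree $T$, completing the proof.

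The main obstacle is the ``only if'' direction, specifically making the topological argument about closing up a cycle in $T$ fully rigorous without hand-waving: one must carefully verify that the path segments in the host tree, indexed around the chordless cycle, cannot all cancel out, and this is where the chordlessness and the bound $k\geq 4$ are genuinely used (for a triangle, the three subtrees can share a single common vertex by the Helly property, so no contradiction arises — consistent with triangles being allowed in chordal graphs). A clean way to handle this is to invoke the Helly property for subtrees directly and combine it with an induction on $k$, or to appeal to the well-known equivalence with the absence of ``asteroidal'' configurations; but since the statement is classical, I would simply cite Buneman, Gavril, and Walter (or Golumbic's monograph on perfect graphs) for the full details rather than reproducing the argument, as the paper treats this as a known result in graph theory and only needs it as an input to the chordality criterion of the previous reference.
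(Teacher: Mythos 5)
Your proposal is consistent with the paper, which does not prove this classical result at all: its ``proof'' is a one-line citation to \cite[Theorem 2.4]{book:intersection_graphs}, and you arrive at the same endpoint by deferring to the literature after sketching the argument. Your sketch of the standard proof is essentially sound: for the forward direction, the closed-walk-in-a-tree argument (every edge of a tree traversed by a closed walk is traversed an even number of times, so some edge of the host tree would lie in two subtrees $T_{v_i}$, $T_{v_j}$ with $v_i v_j$ a chord, unless all the $x_i$ coincide, which forces all $T_{v_i}$ to share a vertex and hence the cycle to be a clique) is the right way to make it rigorous; your parenthetical about ``Helly failing for the triple'' is muddled, since for a chordless cycle of length at least four three consecutive subtrees do not even pairwise intersect, so the Helly property is not what is at stake there. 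For the converse, note that the clique-tree characterization you invoke is essentially equivalent to the theorem itself (the clique tree \emph{is} the canonical host tree), so citing it does no independent work; the perfect-elimination-ordering induction you mention first is the route that actually proves something. Since the paper uses this theorem purely as a black box, your treatment is adequate.
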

\begin{proof}
    See for example \cite[Theorem 2.4]{book:intersection_graphs}.
\end{proof}

Having reviewed the key principles behind the proof of the necessary condition, we now turn to the explicit construction of $\gf{T}$. Suppose that, for a given chordal irreducible $\Svec$, there exists a realization  via a simple tree graph model. It is important to note that the chordality condition applies to $\lhp$, or equivalently, if $\gf{T}$ exists, to $\gf{L}_{\mathfrak{C}}$, and that in this intersection graph, a subtree of $\gf{T}$ induced by the minimal min-cut for a subsystem is only represented by a single vertex. In other words, while $\Svec$ specifies $\gf{L}_{\mathfrak{C}}$ for any $\gf{T}$ that could possibly realize $\Svec$, this graph does not immediately provide additional information about $\gf{T}$, in particular, not even about the number of vertices it needs to have. Indeed, the correlation hypergraph, by definition, has exactly $\N+1$ vertices, while $\gf{T}$ has $\N+1$ boundary vertices and, in principle, any number of bulk vertices. Accordingly, the first step in our construction is to determine a candidate vertex set for $\gf{T}$, as well as for each element of $\mathfrak{C}$. For this, we will rely on a result from the theory of hypergraphs, which we now review \cite{book:intersection_graphs}.

Given an arbitrary set $\A$, a family $\F=\{\B_1,\ldots, \B_n\}$ of subsets of $\A$ is said to satisfy the \textit{Helly condition} if the following holds: For every subfamily $\F'\subseteq \F$, if the members of $\F'$ intersect pairwise, then there is an element of $\A$ that belongs to all elements of $\F'$. In other words, if $\B_i\cap \B_j\neq\varnothing$ for all $\B_i,\B_j\in \F'$, then $\bigcap_{\B_i\in \F'} \B_i\neq\varnothing$. A hypergraph $\gf{H}=(V,E)$ is said to be a \textit{Helly hypergraph} if its set of hyperedges $E$ satisfies the Helly condition. As we will see momentarily, the Helly condition is useful to characterize a class of hypergraphs called \textit{tree hypergraphs} which will play a central role in what follows. 

\begin{defi}[Tree Hypergraphs]
\label{def:tree-hypergraph}
    A hypergraph $\gf{H}=(V,E)$ is called a tree hypergraph if there exists a tree $\gf{T}$, with vertex set $V$, such that, for each hyperedge $h$ of $\gf{H}$, there is a subtree $\gf{T}_h$ of $\gf{T}$ with vertex set $h$.
\end{defi}

The tree $\gf{T}$ in \Cref{def:tree-hypergraph} is in general non-unique, and any such tree will be called a \textit{host tree} of the tree hypergraph $\gf{H}$. Tree hypergraphs are characterized by the following result.

\begin{thm}[Duchet, Flament, and Slater]
\label{thm:tree-hypergraphs}
     A hypergraph is a tree hypergraph if and only if it is a Helly hypergraph with a chordal line graph.
\end{thm}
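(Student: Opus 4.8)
The plan is to prove both directions of the equivalence between ``tree hypergraph'' and ``Helly hypergraph with chordal line graph'' by leveraging \Cref{thm:subtree}, which already tells us that a graph is a subtree graph if and only if it is chordal. The forward direction is the easier one: if $\gf{H}=(V,E)$ is a tree hypergraph with host tree $\gf{T}$, then each hyperedge $h$ is the vertex set of a subtree $\gf{T}_h$, so the line graph $\gf{L}_{\gf{H}}$ (the intersection graph of $E$) is the intersection graph of a family of subtrees of $\gf{T}$, hence a subtree graph, hence chordal by \Cref{thm:subtree}. For the Helly property, I would invoke the classical \emph{Helly property of subtrees of a tree}: any family of pairwise-intersecting subtrees of a tree has a common vertex. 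This is a standard fact (it follows by induction on the number of subtrees, or from the fact that trees are ``dismantlable''/median-like), and since the hyperedges of $\gf{H}$ are exactly the vertex sets of the subtrees $\gf{T}_h$, pairwise intersection of the subtrees is the same as pairwise intersection of the hyperedges, so $E$ satisfies the Helly condition.

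For the converse, suppose $\gf{H}=(V,E)$ is a Helly hypergraph whose line graph $\gf{L}_{\gf{H}}$ is chordal. By \Cref{thm:subtree}, $\gf{L}_{\gf{H}}$ is a subtree graph: there is a tree $\gf{S}$ and an assignment of each hyperedge $h\in E$ to a subtree $\gf{S}_h$ of $\gf{S}$ such that $h\cap h'\neq\varnothing$ iff $V(\gf{S}_h)\cap V(\gf{S}_{h'})\neq\varnothing$. The difficulty is that the vertex set of $\gf{S}$ is \emph{not} $V$, and a vertex of $\gf{S}$ lying in several subtrees need not ``be'' a shared vertex of the corresponding hyperedges. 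The standard remedy is to pass to the dual/clique structure: the maximal cliques of $\gf{L}_{\gf{H}}$, when arranged into a \emph{clique tree} (every chordal graph has one, and the branches through a fixed vertex form a subtree), give a tree whose nodes are the maximal cliques. One then uses the Helly condition to manufacture the host tree on $V$: for each maximal clique $Q$ of $\gf{L}_{\gf{H}}$, the hyperedges in $Q$ pairwise intersect (being mutually adjacent), so by Helly there is a common vertex $v_Q\in\bigcap_{h\in Q}h$; conversely each vertex $v\in V$ determines the set of hyperedges containing it, which is a clique in $\gf{L}_{\gf{H}}$ and hence (by the subtree/chordal structure) induces a connected region of the clique tree. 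I would then build the host tree $\gf{T}$ on $V$ by contracting/refining the clique tree accordingly — concretely, take $\gf{T}$ to have vertex set $V$, and join $v,w$ by an edge when the corresponding connected regions of the clique tree are adjacent (adding a spanning structure where needed within a clique) — and verify that for each $h\in E$ the set $\{Q \text{ maximal clique} : h\in Q\}$ is connected in the clique tree, so that its preimage gives a subtree of $\gf{T}$ with vertex set exactly $h$.

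The step I expect to be the main obstacle is precisely the construction of the host tree on $V$ in the converse direction: checking that each hyperedge $h$ does in fact induce a \emph{connected} subgraph of the candidate tree $\gf{T}$ with vertex set exactly $h$ (not a proper subset, and not something disconnected). This is where the Helly condition is essential — without it one only knows the maximal cliques containing $h$ form a subtree of the clique tree, but not that $h$ itself is ``realized'' — and the bookkeeping to show that the witnesses $v_Q$ can be chosen coherently (so that for a hyperedge $h$ the vertices $v_Q$ over the cliques $Q\ni h$ can be connected up using only vertices of $h$) requires care. My intended approach to this obstacle is to run the argument inductively on a leaf of the clique tree, peeling off one maximal clique at a time and using that the hyperedges living only in the leaf clique all contain its witness vertex, so they can be attached as subtrees hanging off that witness; the Helly condition guarantees the witnesses glue consistently across adjacent cliques. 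Once the host tree is built, the remaining verifications (correct vertex count, each $\gf{T}_h$ a genuine subtree) are routine. I would cite \cite{book:intersection_graphs} for the clique-tree machinery rather than reprove it, and present the construction at the level of the inductive peeling argument.
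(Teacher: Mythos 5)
First, a point of reference: the paper does not actually prove this statement — it is the classical Duchet--Flament--Slater theorem, and the paper's ``proof'' is a citation to \cite{book:intersection_graphs}. So any self-contained argument is necessarily a different route from the paper's; the question is only whether your sketch closes. Your forward direction does: the line graph of a tree hypergraph is the intersection graph of a family of subtrees of the host tree, hence chordal by \Cref{thm:subtree}, and the Helly property of subtrees of a tree is indeed a standard fact with an easy inductive proof. Your converse has the right skeleton (clique tree of the chordal line graph, Helly witnesses $v_Q\in\bigcap_{h\in Q}h$ for each maximal clique $Q$), and you correctly identify where the work is.

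The one step that is genuinely under-specified is the explicit construction of the host tree on $V$: the recipe ``join $v,w$ by an edge when the corresponding connected regions of the clique tree are adjacent, adding a spanning structure where needed'' is not well-defined (those regions can overlap or coincide) and does not obviously yield a tree. The clean way to finish is this. For $v\in V$ let $E_v=\{h\in E: v\in h\}$, a clique of the line graph, and assign $v$ to a maximal clique $Q(v)\supseteq E_v$. For a Helly witness $v_Q$ of a maximal clique $Q$ one has $Q\subseteq E_{v_Q}$ and $E_{v_Q}$ is a clique, so $E_{v_Q}=Q$ by maximality; consequently distinct maximal cliques have disjoint witness sets and $Q(v_Q)=Q$ is forced. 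Now fix a clique tree $T_C$, and build $\gf{T}$ on $V$ by attaching every $v$ with $Q(v)=Q$ as a pendant of the chosen witness $v_Q$, and joining $v_Q$ to $v_{Q'}$ whenever $QQ'$ is an edge of $T_C$; this graph has $|V|-1$ edges and is connected, hence a tree. For a hyperedge $h$: every $v\in h$ satisfies $h\in E_v\subseteq Q(v)$, so $Q(v)$ lies in the (connected) set of clique-tree nodes containing $h$, and every such node $Q$ contributes its witness $v_Q\in h$; hence the subgraph of $\gf{T}$ induced by $h$ is the connected set of these witnesses together with pendants attached to them — a subtree with vertex set exactly $h$. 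Your inductive leaf-peeling can be made to work too, but it needs precisely these two facts ($E_{v_Q}=Q$ and witness disjointness) to guarantee that the witnesses glue consistently; without stating them the ``bookkeeping'' you defer is where the proof could silently fail.
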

\begin{proof}
    See \cite[Theorem 2.7]{book:intersection_graphs}.
\end{proof}

We can now use this result to make an informed guess about what the vertex set of $\gf{T}$, as well as the minimal min-cuts for subsystems corresponding to positive $\bsets$, could be. If there exists a simple tree $\gf{T}$ that realizes $\Svec$, the subgraphs of $\gf{T}$ induced by the minimal min-cuts of subsystems whose $\bset$ is positive are subtrees, and we can view the set of all such minimal min-cuts as the set of hyperedges of a tree hypergraph with the same vertex set as $\gf{T}$. Our goal then is to construct a candidate for this tree hypergraph (as an intermediate step before the construction of $\gf{T}$ itself) starting from the correlation hypergraph, and our strategy will be to do so in a minimal fashion, i.e, by adding the smallest possible number of bulk vertices.

Notice that while we are assuming that $\lhp$ is chordal, the assumption $|Q^\cap|=0$ implies that $\hp$ is not a Helly hypergraph, and that indeed we do need to add new vertices. According to \Cref{thm:tree-hypergraphs} then, these new vertices must be distributed among the hyperedges in such a way that the resulting hypergraph is a Helly hypergraph and its line graph remains chordal. 

\begin{figure}
    \centering
    \begin{subfigure}{0.45\textwidth}
    {\footnotesize
    \begin{equation*}
    \begin{array}{c|ccccccccccccccc|}
    \hline
    v_0 & 1 & 1 & 0 & 0 & 1 & 1 & 1 & 1 & 0 & 1 & 1 & 0 & 0 & 1 & 0\\
    v_1 & 0 & 0 & 1 & 1 & 1 & 1 & 1 & 0 & 1 & 1 & 1 & 0 & 0 & 1 & 0\\
    v_2 & 1 & 0 & 1 & 0 & 1 & 1 & 0 & 1 & 1 & 1 & 0 & 1 & 0 & 0 & 1\\
    v_3 & 1 & 1 & 1 & 1 & 1 & 0 & 1 & 1 & 1 & 1 & 0 & 0 & 1 & 1 & 1\\
    v_4 & 0 & 1 & 0 & 1 & 0 & 1 & 1 & 1 & 1 & 1 & 0 & 1 & 1 & 0 & 1\\
    \hline
    \end{array}
    \end{equation*}
    }
    \caption{}
    \end{subfigure}
    \begin{subfigure}{0.45\textwidth}
    {\footnotesize
    \begin{equation*}
    \begin{array}{c|ccccccccccccccc|}
    \hline
    v_0 & 1 & 1 & 0 & 0 & 1 & 1 & 1 & 1 & 0 & 1 & 1 & 0 & 0 & 1 & 0\\
    v_1 & 0 & 0 & 1 & 1 & 1 & 1 & 1 & 0 & 1 & 1 & 1 & 0 & 0 & 1 & 0\\
    v_2 & 1 & 0 & 1 & 0 & 1 & 1 & 0 & 1 & 1 & 1 & 0 & 1 & 0 & 0 & 1\\
    v_3 & 1 & 1 & 1 & 1 & 1 & 0 & 1 & 1 & 1 & 1 & 0 & 0 & 1 & 1 & 1\\
    v_4 & 0 & 1 & 0 & 1 & 0 & 1 & 1 & 1 & 1 & 1 & 0 & 1 & 1 & 0 & 1\\
    \ddot{v}_{Q_1} & 1 & 1 & 1 & 1 & 1 & 1 & 1 & 1 & 1 & 1 & 0 & 0 & 1 & 1 & 1 \\
    \ddot{v}_{Q_2} & 1 & 1 & 1 & 1 & 1 & 1 & 1 & 1 & 1 & 1 & 1 & 0 & 0 & 1 & 0\\
    \ddot{v}_{Q_3} & 1 & 1 & 1 & 1 & 1 & 1 & 1 & 1 & 1 & 1 & 0 & 1 & 1 & 0 & 1\\
    \hline
    \end{array}
    \end{equation*}
    }
    \caption{}
    \end{subfigure}
    \caption{The incidence matrix (a) for the correlation hypergraph of the entropy vector in \eqref{eq:Svec-ex-chordal}, and (b), where we added three vertices corresponding to the max-cliques of $\lhp$. In particular, (b) shows the extension of the hyperedges of $\hp$, which will be interpreted as the vertex sets of the minimal min-cuts on $\gf{T}$ for the subsystems whose $\bset$ is positive. In both matrices, the first ten columns correspond to the isolated vertices in \Cref{fig:chordal-eg-part-1}, and the last five to those which are endpoints of at least one edge in the same figure.}
    \label{fig:inc-matrices}
\end{figure}

For an arbitrary chordal KC-PMI $\pmi$ such that $|Q^\cap|=0$ for all max-cliques $Q$ of $\lhp$, we define the \textit{minimal Helly extension} $\ddot{\gf{H}}_\pmi=(\ddot{V},\ddot{E})$ of $\hp$ as the Helly hypergraph constructed from $\hp$ as follows. We first construct $\ddot{V}$ by adding to $V$ one vertex $\ddot{v}_Q$ for each max-clique $Q$ of $\lhp$, i.e.,
\begin{equation}
    \ddot{V}=V\cup\{\ddot{v}_Q|\, Q\;\text{is a max-clique of $\lhp$}\}.
\end{equation}
The set of hyperedges $\ddot{E}$ is then obtained from $E$ by extending each hyperedge of $\hp$ as follows 
\begin{equation}
\label{eq:hyperedge-extension}
    h_{\X}\; \mapsto\; \ddot{h}_{\X} = h_{\X} \cup \{\ddot{v}_Q|\, \bs{\X}\in Q\}.
\end{equation}
In words, starting from a hyperedge $h_{\X}$ of $\hp$, we add to it all vertices $\ddot{v}_Q$ such that $\bs{\X}$ is an element of the max-clique $Q$ of $\lhp$. This construction ensures that any collection of pairwise intersecting hyperedges in $\ddot{\gf{H}}_\pmi$ (which by definition forms a clique) has a common element (namely the vertex corresponding to a max-clique containing this collection), and therefore satisfies the Helly condition.
The minimal Helly extension of the correlation hypergraph for the entropy vector in \Cref{fig:chordal-eg-part-1} is shown in \Cref{fig:inc-matrices}.

In general, the minimal Helly extension then has the following property. 
\begin{lemma}
\label{lem:minimal-helly-extension}
     The line graph $\ddot{\gf{L}}_\pmi$ of the minimal Helly extension $\ddot{\gf{H}}_\pmi$ of $\hp$, and the line graph $\lhp$ of $\hp$, are isomorphic.
\end{lemma}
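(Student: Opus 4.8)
The plan is to establish the isomorphism by exhibiting the obvious vertex bijection and showing it preserves and reflects adjacency. By construction of the minimal Helly extension, there is a one-to-one correspondence between the hyperedges $h_\X$ of $\hp$ and the hyperedges $\ddot{h}_\X$ of $\ddot{\gf{H}}_\pmi$, given by the extension rule \eqref{eq:hyperedge-extension}. Since the line graph is the intersection graph of the hyperedge set, the vertices of $\lhp$ are precisely the hyperedges $h_\X$ (equivalently, the positive $\bsets$), and the vertices of $\ddot{\gf{L}}_\pmi$ are the $\ddot{h}_\X$. The map $h_\X \mapsto \ddot{h}_\X$ is therefore a bijection on vertex sets, and it remains to check that $h_\X \cap h_\Y \neq \varnothing$ if and only if $\ddot{h}_\X \cap \ddot{h}_\Y \neq \varnothing$.

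First I would prove the forward direction, which is immediate: since $h_\X \subseteq \ddot{h}_\X$ and $h_\Y \subseteq \ddot{h}_\Y$, any common vertex of $h_\X$ and $h_\Y$ is also a common vertex of $\ddot{h}_\X$ and $\ddot{h}_\Y$, so adjacency in $\lhp$ implies adjacency in $\ddot{\gf{L}}_\pmi$. The content is in the converse. Suppose $\ddot{h}_\X \cap \ddot{h}_\Y \neq \varnothing$ but $h_\X \cap h_\Y = \varnothing$; then the common vertex must be one of the added vertices $\ddot{v}_Q$ for some max-clique $Q$ of $\lhp$. By the extension rule, $\ddot{v}_Q \in \ddot{h}_\X$ means $\bs{\X} \in Q$, and likewise $\bs{\Y} \in Q$. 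But $Q$ is a clique in $\lhp$, so its members intersect pairwise; in particular $h_\X \cap h_\Y \neq \varnothing$ (here using that the vertices of $\lhp$ are the $h_\X$ and adjacency is nonempty intersection), contradicting the assumption. Hence $h_\X \cap h_\Y \neq \varnothing$, completing the equivalence.

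Combining the two directions, $h_\X \mapsto \ddot{h}_\X$ is a graph isomorphism $\lhp \xrightarrow{\ \sim\ } \ddot{\gf{L}}_\pmi$, which is the claim. I expect no serious obstacle here: the only subtlety is bookkeeping the identification between vertices of the line graph, positive $\bsets$, and hyperedges, and making sure the ``only if'' direction really does force the shared vertex to be one of the $\ddot{v}_Q$'s and then uses the clique property of $Q$ correctly. One should also note in passing that no two distinct $h_\X$, $h_\Y$ are sent to the same $\ddot{h}_\X = \ddot{h}_\Y$ — this is clear since they already differ on the original vertex set $V$ — so the map is genuinely a bijection and not merely a surjection, but this is routine.
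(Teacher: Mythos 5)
Your proposal is correct and follows essentially the same route as the paper's proof: both reduce the claim to showing that $h_{\X}\cap h_{\Y}=\varnothing$ iff $\ddot{h}_{\X}\cap \ddot{h}_{\Y}=\varnothing$, with one direction following from the inclusions $h_{\X}\subseteq\ddot{h}_{\X}$ and the other from the fact that a shared added vertex $\ddot{v}_Q$ would force $h_{\X}$ and $h_{\Y}$ into a common clique of $\lhp$ and hence into nonempty intersection. Your explicit remark that the map on hyperedges is injective is a minor bookkeeping point the paper leaves implicit, but the argument is the same.
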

\begin{proof}
    It suffices to show that for any $\X,\Y$ such that $\bs{\X},\bs{\Y}\in\pos$
    \begin{equation}
        h_{\X}\cap h_{\Y}=\varnothing \quad \iff \quad \ddot{h}_{\X}\cap \ddot{h}_{\Y}=\varnothing.
    \end{equation}
    In one direction, if $h_{\X}\cap h_{\Y}=\varnothing$ then there is no clique of $\lhp$ that contains both $h_{\X}$ and $h_{\Y}$, and by the definition in \eqref{eq:hyperedge-extension}, it follows that $\ddot{h}_{\X}\cap \ddot{h}_{\Y}=\varnothing$. Conversely, if $\ddot{h}_{\X}\cap \ddot{h}_{\Y}=\varnothing$, the inclusions $h_{\X}\subseteq \ddot{h}_{\X}$ and $h_{\Y}\subseteq \ddot{h}_{\Y}$ immediately imply that $h_{\X}\cap h_{\Y}=\varnothing$.
\end{proof}
\Cref{lem:minimal-helly-extension} then immediately implies the following key corollary.
\begin{cor}
   The minimal Helly extension of a correlation hypergraph with chordal line graph is a tree hypergraph. 
\end{cor}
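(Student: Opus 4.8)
The plan is to appeal directly to the characterization of tree hypergraphs in \Cref{thm:tree-hypergraphs}: a hypergraph is a tree hypergraph if and only if it is a Helly hypergraph whose line graph is chordal. It therefore suffices to verify these two properties for the minimal Helly extension $\ddot{\gf{H}}_\pmi$ (which, recall, is defined precisely when $\pmi$ is a chordal KC-PMI with $|Q^\cap|=0$ for all max-cliques $Q$ of $\lhp$).

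The chordality of the line graph requires no new work: \Cref{lem:minimal-helly-extension} already shows $\ddot{\gf{L}}_\pmi\cong\lhp$, and $\lhp$ is chordal by hypothesis. The only substantive point is the Helly condition, and this is essentially forced by the extension rule \eqref{eq:hyperedge-extension}. I would argue as follows: let $\{\ddot{h}_{\X_1},\dots,\ddot{h}_{\X_n}\}$ be any subfamily of $\ddot{E}$ whose members intersect pairwise. Since $\ddot{\gf{L}}_\pmi$ is by definition the intersection graph of $\ddot{E}$, these hyperedges induce a clique in $\ddot{\gf{L}}_\pmi$, and via the isomorphism of \Cref{lem:minimal-helly-extension} the corresponding positive $\bsets$ $\{\bs{\X_1},\dots,\bs{\X_n}\}$ form a clique in $\lhp$. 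A clique in the finite graph $\lhp$ is contained in some max-clique $Q$, so $\bs{\X_i}\in Q$ for all $i$; then \eqref{eq:hyperedge-extension} puts the added vertex $\ddot{v}_Q$ into every $\ddot{h}_{\X_i}$, whence $\bigcap_i\ddot{h}_{\X_i}\ni\ddot{v}_Q\neq\varnothing$. This establishes the Helly condition, and \Cref{thm:tree-hypergraphs} then gives that $\ddot{\gf{H}}_\pmi$ is a tree hypergraph.

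I do not expect a genuine obstacle here: the corollary is little more than the conjunction of \Cref{lem:minimal-helly-extension}, the built-in Helly property of the construction, and the cited \Cref{thm:tree-hypergraphs}. The only places I would take a little care are checking that the extension does not identify distinct hyperedges (intersecting $\ddot{h}_{\X}$ with the original vertex set $V$ recovers $h_{\X}$, since the $\ddot{v}_Q$ are new vertices, so $\ddot{h}_{\X}=\ddot{h}_{\Y}\Rightarrow h_{\X}=h_{\Y}$, making the line-graph isomorphism of \Cref{lem:minimal-helly-extension} genuine), and noting that subfamilies of size one or two satisfy the Helly condition trivially, so the max-clique step above is only needed for $n\geq 3$.
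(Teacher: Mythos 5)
Your proof is correct and follows the paper's own route exactly: it combines \Cref{lem:minimal-helly-extension} (for chordality of the line graph), the Helly property built into the construction via the max-clique vertices $\ddot{v}_Q$ (which the paper notes immediately after \eqref{eq:hyperedge-extension}), and \Cref{thm:tree-hypergraphs}. Your extra care about distinct hyperedges and small subfamilies is harmless but not needed beyond what the paper already asserts.
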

\begin{proof}
    The statement follows immediately from \Cref{lem:minimal-helly-extension}, the fact that by construction the minimal Helly extension is a Helly hypergraph, and \Cref{thm:tree-hypergraphs}.
\end{proof}

Let us briefly summarize what we have established thus far. Given a chordal irreducible entropy vector $\Svec$, we have constructed a tree hypergraph $\ddot{\gf{H}}_\pmi$, the minimal Helly extension of $\hp$, whose line graph coincides with $\lhp$, as well as with the intersection graph of $\mathfrak{C}$, for any simple tree graph model $\gf{T}$ that can possibly realize $\Svec$. The next step then, is to explicitly construct a candidate topology for $\gf{T}$. 

In general, for arbitrary tree hypergraphs, the host tree is highly non-unique, and in principle this might be true even in our restricted set-up. In particular, given $\Svec$, there might exist a host tree of $\ddot{\gf{H}}_\pmi$ such that at least one boundary vertex is not a leaf. We now argue that we can ignore all such trees.

Consider an arbitrary simple tree graph model $\gf{G}$ such that all edge weights are strictly positive,\footnote{\,Given a simple tree graph model with vanishing edge weights, we can delete the corresponding edges to obtain a simple forest that realizes the same entropy vector. Furthermore, this entropy vector necessarily has vanishing entries and it is not irreducible, so for the purpose of this discussion, we can ignore this situation.} and for some party $\ell$, the boundary vertex $v_\ell$ is not a leaf. We denote by $k\geq 2$ the degree of $v_\ell$, and by $\{\gf{G}^{(i)}\}_{i=1}^k$ the connected components of the graph obtained from $\gf{G}$ by deleting $v_\ell$. We further denote by $\X_i$ the subsystem corresponding to the boundary vertices of $\gf{G}^{(i)}$, and assume that for all $i\in[k]$ we have $\X_i\neq\varnothing$.\footnote{\,If this is not the case we can simplify $\gf{G}$ by deleting the vertices in $\gf{G}^{(i)}$ and the resulting graph model is a simple tree realizing the same entropy vector as $\gf{G}$.} In $\lhp$, the vertices corresponding to the positive $\bsets$ $\bs{Y}$ such that $\Y$ contains $\ell$ form a clique, and we denote it by $Q^\ell$. Notice that $Q^\ell$ is non-empty, and in particular, that $\bs{\X_i\ell}$ is positive for all $\X_i$.\footnote{\,Notice that, since $v_\ell$ is not a leaf and $\gf{G}$ is a simple tree, the entropy vector of $\gf{G}$ can also be realized by a sum of simpler tree graph models obtained from the collection $\{\gf{G}^{(i)}\}_{i=1}^k$ by introducing a “copy’’ of the vertex $v_\ell$ for each element of the collection and connecting this copy to the same vertex to which $v_\ell$ was connected in $\gf{G}$, with the corresponding edge assigned the same weight. Therefore, since for each of these subgraphs the system $\X_i\ell$ is the ``full system'', if $\bs{\X_i\ell}$ is not positive, the subgraph must have vanishing edge weights, contradicting our assumption.} If $Q^\ell$ is not a max-clique, there exists a subsystem $\Z$ that does not contain $\ell$, whose $\bset$ is positive, and such that it has non-empty intersection with each subsystem corresponding to an element of $Q^\ell$. But since minimal min-cuts for non-intersecting subsystems cannot intersect \cite{Avis:2021xnz}, any subsystem which does not contain $\ell$ and whose $\bset$ is positive is necessarily a subset of $\X_i$ for some $i$. Therefore $\Z$ can have non-empty intersection only with a single $\X_i\ell$. Since $k\geq 2$, this implies that $Q^\ell$ is a max-clique $Q$ such that $\ell\in Q^\cap$, contradicting (iii) in \eqref{eq:assumptions}.

This argument demonstrates that for any chordal irreducible entropy vector $\Svec$, if there is a simple tree graph model $\gf{T}$ that realizes $\Svec$, all boundary vertices must be leaves. Therefore, a necessary condition for the realizability of $\Svec$ by a simple tree graph model is that there exists a host tree of the minimal Helly extension of the correlation hypergraph of $\Svec$ such that all boundary vertices are leaves. We will now show that such a host tree always exists, and how to construct it.

\begin{figure}[tbp]
    \centering
    \begin{subfigure}{0.4\textwidth}
    \centering
    \begin{tikzpicture}
    \filldraw[] (0,1) circle (2pt);
    \filldraw[] (1,2) circle (2pt);
    \filldraw[] (1,0) circle (2pt);
    \filldraw[] (2,1) circle (2pt);
    \filldraw[] (3,2) circle (2pt);
    \filldraw[] (3,0) circle (2pt);
    \filldraw[] (4,1) circle (2pt);
    
    \draw[] (0,1) -- (1,2);
    \draw[] (0,1) -- (2,1);
    \draw[] (0,1) -- (1,0);
    \draw[] (1,2) -- (1,0);
    \draw[] (1,2) -- (2,1);
    \draw[] (1,0) -- (2,1);
    \draw[] (2,1) -- (3,2);
    \draw[] (2,1) -- (4,1);
    \draw[] (1,0) -- (4,1);
    \draw[] (1,0) -- (3,0);
    \draw[] (3,0) -- (4,1);
    
    \node[] () at (-0.3,1) {{\scriptsize $1$}};
    \node[] () at (1,2.3) {{\scriptsize $2$}};
    \node[] () at (1,-0.3) {{\scriptsize $3$}};
    \node[] () at (2,1.3) {{\scriptsize $4$}};
    \node[] () at (3,2.3) {{\scriptsize $5$}};
    \node[] () at (3,-0.3) {{\scriptsize $6$}};
    \node[] () at (4,1.3) {{\scriptsize $7$}};
    \end{tikzpicture}
    \subcaption[]{}
    \end{subfigure}
    \begin{subfigure}{0.4\textwidth}
    \centering
    \begin{tikzpicture}
    \draw[] (0,0) -- (2,0);
    \draw[] (0,2) -- (2,0);
    \draw[] (0,2) -- (2,2);
    \draw[] (2,2) -- (2,0);
    
    \filldraw[fill=green!80!black] (0,0) circle (2pt);
    \filldraw[fill=green!80!black] (2,0) circle (2pt);
    \filldraw[fill=green!80!black] (0,2) circle (2pt);
    \filldraw[fill=green!80!black] (2,2) circle (2pt);
    
    \node[] () at (0,-0.3) {{\scriptsize $367$}};
    \node[] () at (0,2.3) {{\scriptsize $1234$}};
    \node[] () at (2,-0.3) {{\scriptsize $347$}};
    \node[] () at (2,2.3) {{\scriptsize $45$}};

    \node[green!80!black] () at (1,2.3) {{\scriptsize $1$}};
    \node[green!80!black] () at (2.3,1) {{\scriptsize $1$}};
    \node[green!80!black] () at (1,-0.3) {{\scriptsize $2$}};
    \node[green!80!black] () at (1.2,1.2) {{\scriptsize $2$}};
    \end{tikzpicture}
    \subcaption[]{}
    \end{subfigure}
    \bigskip \\
    \bigskip
    \begin{subfigure}{0.4\textwidth}
    \centering
    \begin{tikzpicture}
    \draw[] (0,0) -- (2,0);
    \draw[] (0,2) -- (2,0);
    \draw[] (0,2) -- (2,2);
    
    \filldraw[fill=bulkcol] (0,0) circle (2pt);
    \filldraw[fill=bulkcol] (2,0) circle (2pt);
    \filldraw[fill=bulkcol] (0,2) circle (2pt);
    \filldraw[fill=bulkcol] (2,2) circle (2pt);
    
    \node[] () at (0,-0.3) {{\scriptsize $367$}};
    \node[] () at (0,2.3) {{\scriptsize $1234$}};
    \node[] () at (2,-0.3) {{\scriptsize $347$}};
    \node[] () at (2,2.3) {{\scriptsize $45$}};
    \end{tikzpicture}
    \subcaption[]{}
    \end{subfigure}
    \begin{subfigure}{0.4\textwidth}
    \centering
    \begin{tikzpicture}
    \draw[] (0,0) -- (2,0);
    \draw[] (0,2) -- (2,0);
    \draw[] (2,2) -- (2,0);
    
    \filldraw[fill=bulkcol] (0,0) circle (2pt);
    \filldraw[fill=bulkcol] (2,0) circle (2pt);
    \filldraw[fill=bulkcol] (0,2) circle (2pt);
    \filldraw[fill=bulkcol] (2,2) circle (2pt);
    
    \node[] () at (0,-0.3) {{\scriptsize $367$}};
    \node[] () at (0,2.3) {{\scriptsize $1234$}};
    \node[] () at (2,-0.3) {{\scriptsize $347$}};
    \node[] () at (2,2.3) {{\scriptsize $45$}};
    \end{tikzpicture}
    \subcaption[]{}
    \end{subfigure}
    \caption{An example of subtree graph (a) and its weighted clique graph (b). Each vertex of (b) corresponds to a max-clique of (a), and each edge weight is the cardinality of the intersection of the max-cliques of (a) corresponding to its endpoints. The panels (c) and (d) show the two maximum spanning trees of (b); they are the two clique tree representations of (a). Indeed, notice that in both (c) and (d), for any vertex $i\in[7]$ of (a), the set of vertices which contain $i$ induce a subtree.
    For simplicity, in this example we label the vertices of the subtree graph in (a) by single integers, but in our context the subtree graph is $\lhp$, and its vertices correspond to $h_{\X}$ associated with subsystems $\X$ with positive $\bs{\X}$.}
    \label{fig:clique-tree-eg}
\end{figure}
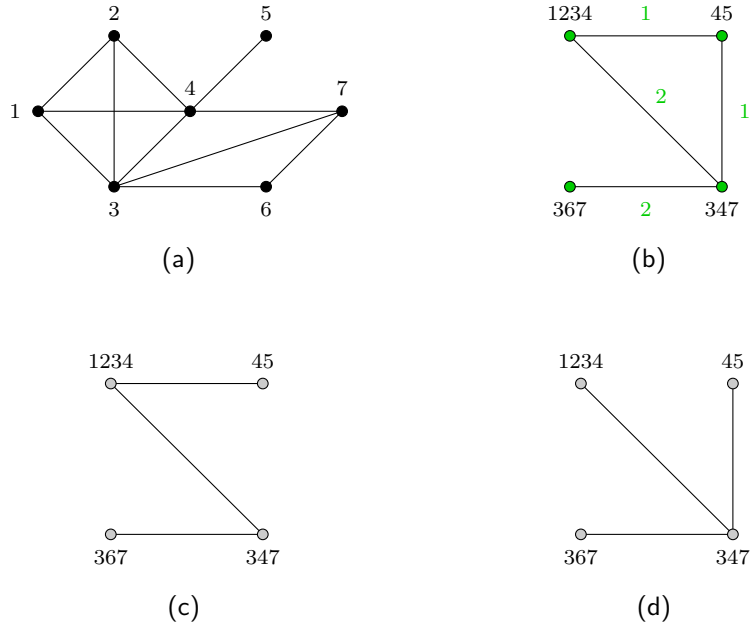

Recall that since $\lhp$ is chordal, it is a subtree graph by \Cref{thm:subtree}, i.e., it is the intersection graph of a family of subtrees of a tree. For any subtree graph $\gf{G}$, the tree and the subtrees in the family are called a \textit{tree representation} of the graph. Furthermore, it turns out that a tree representation can always be found such that the vertex set of the tree is the set of max-cliques of $\gf{G}$, and the subtree corresponding to a vertex $v$ of $\gf{G}$ is induced by the vertices of the tree corresponding to the max-cliques of $\gf{G}$ that contain $v$ \cite{book:intersection_graphs}. A tree representation of this kind is called a \textit{clique tree representation}, or simply a \textit{clique tree}. It is a well known result in graph theory that for any subtree graph a clique tree representation can always be found efficiently. The procedure amounts to first constructing the \textit{weighted clique graph} $\gf{K}_\pmi$ of $\gf{G}$, which is the intersection graph of the set of max-cliques of $\gf{G}$ where each edge has a weight given by the cardinality of the intersection of the max-cliques corresponding to its endpoints, and then a maximum \textit{spanning tree}\footnote{\,A spanning tree of a graph $\gf{G}$ is a tree which is a subgraph of $\gf{G}$ and has the same vertex set as $\gf{G}$. If $\gf{G}$ is weighted, a maximum spanning tree is a spanning tree which maximizes the sum of the weights on its edges.} of the weighted clique graph \cite{book:intersection_graphs}. As for the host tree of a tree hypergraph, in general a clique tree for a subtree graph is not unique, and we show an example of this construction in \Cref{fig:clique-tree-eg}.

Given a chordal irreducible entropy vector $\Svec$, we now denote by $\widetilde{\gf{T}}$ an arbitrary clique tree for $\lhp$. Notice that its vertices are precisely the bulk vertices that we have added to $\hp$ via the minimal Helly extension, i.e., the max-cliques of $\lhp$. However, $\widetilde{\gf{T}}$ is not yet a host tree for $\ddot{\gf{H}}_\pmi$, or even a ``topological'' holographic graph model, since it does not have boundary vertices. 

For an arbitrary party $\ell\in\nsp$, consider again the (not necessarily maximal) clique $Q^{\ell}$ of $\lhp$ corresponding to the set of all hyperedges of $\hp$ that contain the vertex $v_\ell$. It was shown in \cite[Theorem 9]{Hubeny:2024fjn} that for any KC-PMI and party $\ell$, this clique is contained in precisely one max-clique of $\lhp$.\footnote{\,Technically, $Q^\ell$ can be empty, and in that case it would be trivially contained in any max-clique of $\lhp$. However, this is only possible if $v_\ell$ does not belong to any hyperedge of $\hp$, or equivalently, it is an isolated vertex. In this case, however, $\hp$ is disconnected; a possibility we have excluded from the beginning of this subsection.} We can then use this information to determine, for each boundary vertex $v_\ell$, to which vertex of $\widetilde{\gf{T}}$ it should be connected. In practice, for each $\ell$, we find the unique max-clique $Q$ that contains $Q^\ell$, and then add to $\widetilde{\gf{T}}$ a vertex $v_\ell$ and an edge between $v_\ell$ and $v_Q$. It remains to be shown that the tree $\gf{T}$ resulting from this construction is indeed a host tree of $\ddot{\gf{H}}_\pmi$.

\begin{figure}[tbp]
    \centering
     \begin{subfigure}{0.28\textwidth}
    \centering
    \begin{tikzpicture}
    \draw[] (-1,0) -- (0,1.5);
    \draw[] (1,0) -- (0,1.5);
    \draw[] (-1,0) -- (1,0);

    \filldraw[fill=green!80!black] (-1,0) circle (2pt);
    \filldraw[fill=green!80!black] (1,0) circle (2pt);
    \filldraw[fill=green!80!black] (0,1.5) circle (2pt);
    
    \node[] () at (-1.2,-0.3) {{\scriptsize $Q_1$}};
    \node[] () at (1.2,-0.3) {{\scriptsize $Q_2$}};
    \node[] () at (0,1.8) {{\scriptsize $Q_3$}};

    \node[green!80!black] () at (-0.7,0.9) {{\scriptsize $12$}};
    \node[green!80!black] () at (0.7,0.9) {{\scriptsize $10$}};
    \node[green!80!black] () at (0,-0.2) {{\scriptsize $11$}};
    \end{tikzpicture}
    \subcaption[]{}
    \end{subfigure}
    \begin{subfigure}{0.28\textwidth}
    \centering
    \begin{tikzpicture}
    \draw[] (-1,0) -- (0,1.5);
    \draw[] (-1,0) -- (1,0);

    \filldraw[fill=bulkcol] (-1,0) circle (2pt);
    \filldraw[fill=bulkcol] (1,0) circle (2pt);
    \filldraw[fill=bulkcol] (0,1.5) circle (2pt);
    
    \node[] () at (-1.2,-0.3) {{\scriptsize $v_{Q_1}$}};
    \node[] () at (1.2,-0.3) {{\scriptsize $v_{Q_2}$}};
    \node[] () at (0,1.8) {{\scriptsize $v_{Q_3}$}};
    \end{tikzpicture}
    \subcaption[]{}
    \end{subfigure}
    \begin{subfigure}{0.4\textwidth}
    \centering
    \begin{tikzpicture}
    \draw (0,0) -- (-1,-1);
    \draw (0,0) -- (-1,1);
    \draw (0,0) -- (2,0);
    \draw (1,0) -- (1,1);
    \draw (2,0) -- (3,1);
    \draw (2,0) -- (3,-1);
    
    \filldraw (-1,-1) circle (2pt);
    \filldraw (-1,1) circle (2pt);
    \filldraw (1,1) circle (2pt);
    \filldraw (3,1) circle (2pt);
    \filldraw (3,-1) circle (2pt);
    
    \filldraw[fill=bulkcol] (0,0) circle (2pt);
    \filldraw[fill=bulkcol] (1,0) circle (2pt);
    \filldraw[fill=bulkcol] (2,0) circle (2pt);
    
    \node[] () at (-1.3,1) {{\scriptsize $v_1$}};
    \node[] () at (-1.3,-1) {{\scriptsize $v_0$}};
    \node[] () at (3.3,1) {{\scriptsize $v_2$}};
    \node[] () at (3.3,-1) {{\scriptsize $v_4$}};
    \node[] () at (1,1.3) {{\scriptsize $v_3$}};
    \node[] () at (-0.4,0) {{\scriptsize $v_{Q_2}$}};
    \node[] () at (1,-0.3) {{\scriptsize $v_{Q_1}$}};
    \node[] () at (2.4,0) {{\scriptsize $v_{Q_3}$}};
    
    \node[red] () at (-0.5,0.8) {{\scriptsize $1$}};
    \node[red] () at (-0.5,-0.8) {{\scriptsize $1$}};
    \node[red] () at (0.5,0.2) {{\scriptsize $1$}};
    \node[red] () at (1.2,0.5) {{\scriptsize $2$}};
    \node[red] () at (1.5,-0.2) {{\scriptsize $2$}};
    \node[red] () at (2.5,0.8) {{\scriptsize $1$}};
    \node[red] () at (2.5,-0.8) {{\scriptsize $2$}};


    \end{tikzpicture}
    \subcaption[]{}
    \end{subfigure}
    \caption{The construction of the simple tree graph model realizing the entropy vector $\Svec$ in \eqref{eq:Svec-ex-chordal}. The weighted clique graph of the minimal Helly extension of the correlation hypergraph of $\Svec$ is shown in (a). Each vertex corresponds to a max-clique of $\lhp$, and for each edge, the weight is given by the cardinality of the intersection of the max-cliques corresponding to the endpoints. Notice that (a) has a unique maximum spanning tree, which is the clique tree of $\lhp$ in (b). Starting from (b), we add the boundary vertices as leaves, and then the edge weights, as described in the main text. The result of this construction is the simple tree graph model in (c) that one can readily verify to be a realization of $\Svec$.
    (In particular, note that only 7 out of 15 components of $\Svec$ were used to determine all edge weights in the graph model (c).  From the remaining components, the first 4 correspond to subsystems with vanishing $\bsets$, while the second 4 to positive ones; nevertheless, because of non-trivial linear dependence relations among the MI instances in $\pmi$, and complementarity of the entropy, even these latter components of $\Svec$ are correctly reproduced by our algorithm.)}
    \label{fig:graph-construction-eg-chordal}
\end{figure}
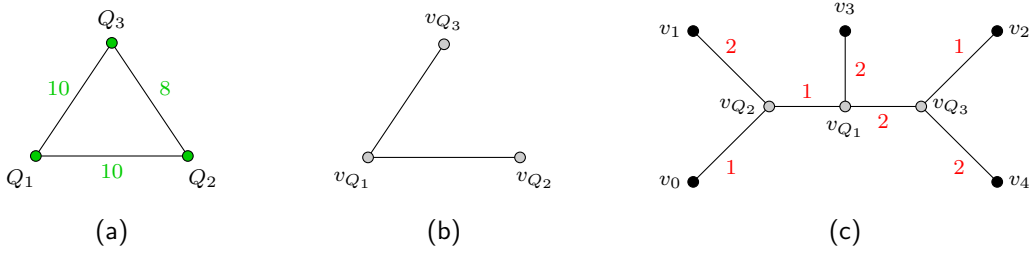

To see this, note that $\gf{T}$ fails to be a host tree of $\ddot{\gf{H}}_\pmi$ only if there exists a hyperedge $\ddot{h}$ of $\ddot{\gf{H}}_\pmi$ such that, for some boundary vertex $v_\ell$ in $\ddot{h}$, the clique $Q^\ell$ of $\lhp$ is contained in a max-clique $Q$ whose corresponding bulk vertex $v_Q$ of $\ddot{\gf{H}}_\pmi$ does not lie in $\ddot{h}$. However, this contradicts the construction of $\ddot{\gf{H}}_\pmi$, which assigns to every hyperedge $h$ of $\hp$ containing $v_\ell$ a bulk vertex $v_Q$ for each max-clique $Q$ of $\lhp$ that contains the vertex $h$ (and hence also the max-clique containing $Q^\ell$).

Having obtained a possible topology for $\gf{T}$, we now need to find the edge weights, but since $\gf{T}$ is a simple tree, it is straightforward to determine what they should be. In particular, since $\gf{T}$ is a tree, each edge $e$ of $\gf{T}$ is the \textit{only} cut edge for the cut $C$ corresponding to the choice of all vertices ``on one side'' of $e$. More precisely, $C$ is the set of vertices of one of the two connected components of the graph obtained from $\gf{T}$ by deleting the edge $e$. Furthermore, since $\gf{T}$ is simple, this cut is an $\X$-cut for $\X=C\cap \partial V$, although in principle $\X$ can be empty.\footnote{\,We believe that this situation actually never occurs. Intuitively, if $\gf{T}$ has a leaf $v_Q$ that is not a boundary vertex, then deleting $v_Q$ results in a new tree $\widetilde{\gf{T}}$ which is still a host tree of $\ddot{\gf{H}}$, contradicting the minimality of its vertex set. Nevertheless, in the interest of generality, we are not excluding this possibility at this stage.}

If $\gf{T}$ is indeed a graph model realization of $\Svec$, and $\X\neq\varnothing$, then $C$ is a min-cut for $\X$. To see this, first notice that if $C$ is not a min-cut, the edge $e$ is not a cut edge for \textit{any} min-cut, and we can simply contract the edge to obtain a new simple tree realizing $\Svec$ \cite{Hernandez-Cuenca:2022pst}. But this tree has one vertex less than $\gf{T}$, and it cannot possibly be a graph model for $\Svec$, because the vertex set of $\gf{T}$, which has been obtained from the minimal Helly extension, contains the smallest possible number of bulk vertices. Therefore, $C$ is a min-cut, and its cost, which is the weight of $e$, is equal to the component $\ent_{\I}$ of $\Svec$, where $\I=\X$ if $\X$ does not contain the purifier and $\I=\X^\complement$ otherwise. 

Based on this argument, for an arbitrary edge $e$ of $\gf{T}$, we denote by $\langle\X,\X^\complement\rangle_e$ the bipartition of $\nsp$ specified by $e$. If $\langle\X,\X^\complement\rangle_e\neq \{\varnothing,\nsp\}$, we assign to $e$ an edge weight $w(e)$ equal to $\ent_{\I}$ (again, with $\I=\X$ or $\I=\X^\complement$ depending on which subsystem contains the purifier). If instead $\langle\X,\X^\complement\rangle_e = \{\varnothing,\nsp\}$, we set $w(e)=0$.\footnote{\,Notice that this is not in tension with the fact that $\Svec$ has, by assumption, no vanishing entries, since we are only setting to 0 the weight of edges which would not be cut edges for any min-cut. In fact, we believe that this situation will never occur in practice, and that for any $\Svec$, each edge of the tree obtained from Algorithm~\ref{alg:reconstruction} will satisfy $\langle\X,\X^\complement\rangle_e\neq \{\varnothing,\nsp\}$.} 
Notice that many components of $\Svec$ have not been utilized, but because of the linear relations corresponding to the elements of $\pmi$, these are determined by the others, cf.~\Cref{fig:graph-construction-eg-chordal} which completes the explicit construction for the entropy vector $\Svec$ in \eqref{eq:Svec-ex-chordal}. This concludes our construction, and we present a summary in Algorithm~\ref{alg:reconstruction}.

\begin{algorithm}[t]
\caption{Construction of a simple tree graph model $\gf{T}$ realizing a chordal irreducible entropy vector $\Svec$.}
\label{alg:reconstruction}
\BlankLine
\Input{an irreducible entropy vector $\Svec$ whose PMI is chordal KC-PMI}
\Output{a simple tree graph model $\gf{T}$ realizing $\Svec$}
\BlankLine
$\pmi \leftarrow \text{the PMI of}\; \Svec$\;
$\pos \leftarrow \text{the set of positive $\bsets$ of}\; \pmi$\;
$\lhp \leftarrow \text{the intersection graph of}\; \pos$\;
$\gf{K}_\pmi \leftarrow \text{the weighted clique graph of}\; \lhp$\;
$\widetilde{\gf{T}} \leftarrow \text{a choice of a maximum spanning tree of}\; \gf{K}_\pmi$\;
$\gf{T} \leftarrow \widetilde{\gf{T}}$\;
\For{each $\ell\in\nsp$}
{
    construct the clique $Q^\ell$ in $\lhp$\;
    find the unique max-clique $Q$ of $\lhp$ that contains $Q^\ell$\; 
    add to $\gf{T}$ a vertex $v_\ell$ for the party $\ell$\;
    add to $\gf{T}$ an edge connecting $v_\ell$ to the vertex $v_Q$ corresponding to $Q$\;
    }
\For{each edge e of $\gf{T}$}
{
    $\I\leftarrow$ the element of $\langle\X,\X^\complement\rangle_e$ that does not contain the purifier\;
    \eIf{$\I\neq \varnothing$}
    {$w(e) \leftarrow \ent_{\I}$\;}
    {$w(e) \leftarrow 0$\;}
    }
\end{algorithm}

\subsection{The non-chordal case}
\label{subsec:non-chordal-case}

Having discussed the chordal case, we now turn to irreducible entropy vectors for which $\lhp$ is \textit{not} chordal. Unlike the chordal case, we do not present a general algorithm here; instead, we discuss a potential strategy for constructing non-simple tree graph models through a combination of fine-grainings and the algorithm proposed for the chordal case. Our focus is on the general conceptual and practical aspects of this strategy, rather than on the details of its implementation, and we will highlight the key challenges that require further investigation, which we leave for future work.

Given an entropy vector $\Svec$ for some number of parties $\N$, suppose that it can be realized by a non-simple tree graph model, and let $\gf{T}$ be one such realization. By relabeling the boundary vertices of $\gf{T}$ such that they all correspond to distinct parties, we obtain a simple tree graph model $\gf{T}'$ realizing a new entropy vector $\Svecp$ for some number of parties $\N' > \N$. The vector $\Svecp$ is then a fine-graining of $\Svec$ with respect to the coarse-graining map $\cgmap$ which relabels the boundary vertices of $\gf{T}'$ by the original parties in $\gf{T}$. Since $\gf{T}'$ is a simple tree graph model, the PMI of $\Svecp$ is a chordal KC-PMI.

This simple observation immediately suggests a strategy to search for a non-simple tree graph model realizing a given irreducible entropy vector $\Svec$ for which $\lhp$ is not chordal. For a choice of CG-map $\cgmap$ (more on this below), we can construct the set of fine-grainings of $\Svec$ with respect to $\cgmap$, as defined in \S\ref{subsec:cg-and-fg}, and search for an entropy vector $\Svecp$ in $\mathscr{A}^{\Sigma}_\cgmap(\Svec)$ such that its PMI is chordal. If we can find $\Svecp$ we can construct $\gf{T}'$ using Algorithm~\ref{alg:reconstruction}, and obtain $\gf{T}$. In the remainder of this subsection, we will ignore the issue of whether Algorithm~\ref{alg:reconstruction} always succeeds, and instead focus on the conceptual and technical challenges of the strategy to find $\Svecp$, or alternatively, to establish unrealizability of $\Svec$ by tree graph models.

We now consider this strategy more closely. Given $\Svec$, let us fix an arbitrary value of $\N' > \N$ and a CG-map $\cgmap$ from the $\N'$-party system to the $\N$-party system. In principle, one could attempt to construct explicitly the full set of fine-grainings $\mathscr{A}^{\Sigma}_\cgmap(\Svec)$ of $\Svec$ with respect to $\cgmap$, but in practice this computation rapidly becomes prohibitively difficult as $\N'$ increases. Moreover, since our goal is to find an entropy vector $\Svecp$ whose PMI is chordal, such a brute-force construction is arguably unnecessarily complicated. A more judicious strategy is instead to search directly for a chordal fine-graining of the PMI of $\Svec$. In fact, even if we were able to compute $\mathscr{A}^{\Sigma}_\cgmap(\Svec)$ explicitly, the absence of any $\Svecp$ in this set whose PMI is chordal would still not allow us to conclude that $\Svec$ is not realizable by a tree graph model, since we might simply have chosen an unsuitable CG-map, or even an unsuitable value of $\N'$. By contrast, once a chordal fine-graining $\pmi'$ of the PMI of $\Svec$ has been found, an appropriate choice of $\Svecp$ can be efficiently identified by means of a linear program. We now turn to an illustrative example that indicates how one might hope to determine the correct value of $\N'$ and CG-map by analyzing the chordless cycles of $\lhp$.

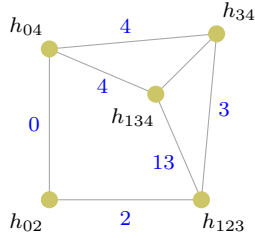
\begin{figure}
    \centering
    \begin{tikzpicture}
    \draw[gray!70, thin] (0,0) -- (2,0);
    \draw[gray!70, thin] (2,0) -- (1.4,1.4);
    \draw[gray!70, thin] (1.4,1.4) -- (0,2);
    \draw[gray!70, thin] (0,2) -- (0,0);
    \draw[gray!70, thin] (0,2) -- (2.2,2.2);
    \draw[gray!70, thin] (2.2,2.2) -- (2,0);
    \draw[gray!70, thin] (2.2,2.2) -- (1.4,1.4);
    
    \filldraw[Mcol!80!black] (0,0) circle (3pt);
    \filldraw[Mcol!80!black] (2,0) circle (3pt);
    \filldraw[EminusMcol!90!black] (1.4,1.4) circle (3pt);
    \filldraw[Mcol!80!black] (0,2) circle (3pt);
    \filldraw[Mcol!80!black] (2.2,2.2) circle (3pt);

    \node[] () at (-0.3,2.3) {{\scriptsize $h_{04}$}};
    \node[] () at (2.5,2.5) {{\scriptsize $h_{34}$}};
    \node[] () at (1.1,1.1) {{\scriptsize $h_{134}$}};
    \node[] () at (-0.3,-0.3) {{\scriptsize $h_{02}$}};
    \node[] () at (2.3,-0.3) {{\scriptsize $h_{123}$}};

    \node[blue] () at (-0.2,1) {{\scriptsize $0$}};
    \node[blue] () at (1,-0.25) {{\scriptsize $2$}};
    \node[blue] () at (1,2.3) {{\scriptsize $4$}};
    \node[blue] () at (2.3,1.2) {{\scriptsize $3$}};
    \node[blue] () at (0.7,1.5) {{\scriptsize $4$}};
    \node[blue] () at (1.5,0.5) {{\scriptsize $13$}};
    
    \end{tikzpicture} 
    \caption{Part of the graph $\lhp$ for the entropy vector in \eqref{eq:N4-example}. To simplify the figure we have shown only the vertices of $\lhp$ which are \textit{not} connected to all other vertices. The omitted vertices are: $h_{013}$, $h_{234}$, $h_{023}$, $h_{024}$, $h_{034}$, and all hyperedges with cardinality four or five. As clear from the figure, $\lhp$ has two chordless cycles of length 4, and is therefore not chordal. For each edge that belongs to at least one of the chordless cycles (i.e., all edges except for $\{h_{34},h_{134}\}$), we have indicated (in blue) the parties corresponding to the vertices in the intersection of the hyperedges of $\hp$ corresponding to its endpoints.}
    \label{fig:N4-example}
\end{figure}

Consider the following $\N=4$ irreducible entropy vector
\begin{equation}
\label{eq:N4-example}
    \Svec=(2, 4, 5, 5, 6, 7, 7, 9, 9, 8, 7, 11, 6, 8, 6),
\end{equation}
for which the non-trivial part of $\lhp$ is shown in \Cref{fig:N4-example}. As clear from the figure, $\lhp$ has two chordless cycles of length four, and it is therefore not chordal. Notice that, for example, party 2 is shared by $h_{02}$ and $h_{123}$, and that the edge connecting these vertices belongs to both chordless cycles. Consider now a \textit{refinement} of party 2 into parties 2 and 5 implemented by the pre-image of the CG-map specified by
\begin{equation}
\label{eq:cg-map-example-1}
    \cgmap:\quad \{\{0\}_0,\{1\}_1,\{2,5\}_2,\{3\}_3,\{4\}_4\}.
\end{equation}
Recall that the correlation hypergraph $\hpp^{(\text{min})}$ of the minimum fine-graining of $\pmi$ with respect to $\cgmap$ (cf., \Cref{thm:min-fg}) has, for each hyperedge $h_\X$ of $\hp$, a hyperedge $h'_{\Y'}$ for each $\Y'\subseteq\cginv(\X)$ that contains at least one party $\ell'\in\cginv(\ell)$ for each $\ell\in\X$. Therefore, for the specific case of $h_{02}$ and $h_{123}$, and CG-map \eqref{eq:cg-map-example-1}, $\hpp^{(\text{min})}$ includes the hyperedges
\begin{equation}
\label{eq:example-1-hyp-fg}
    h_{02}\; \rightarrow\; \{h'_{02} \,,\, h'_{05} \,,\, h'_{025}\} \qquad 
    h_{123}\; \rightarrow\; \{h'_{123} \,,\, h'_{135} \,,\, h'_{1235}\}.
\end{equation}
While $\hpp^{(\text{min})}$ is obviously not chordal, the set of hyperedges of the correlation hypergraph $\hpp$ of any other fine-graining of $\pmi$ (with respect to the same CG-map), is a subset of the set of hyperedges of $\hpp^{(\text{min})}$ (cf., \Cref{thm:partial-order-h}). Therefore, if there is a fine-graining $\pmi'$ whose correlation hypergraph $\hpp$ only has, for example, hyperedges $h'_{05}$ and $h'_{123}$ from \eqref{eq:example-1-hyp-fg}, then we would have effectively ``broken'' the two chordless cycles of $\lhp$, and $\pmi'$ would be a \textit{chordal fine-graining} of $\pmi$, since in the line graph of $\hpp$ there is no longer an edge connecting $h'_{05}$ and $h'_{123}$. An example of this mechanism is illustrated in \Cref{fig:non-chodal-example}, which shows the complement of $\lhpp$ for one such fine-graining $\pmi'$. By solving a simple linear program, we can then verify whether in the face of the 5-party SAC corresponding to $\pmi'$ there is a vector $\Svec'$ such that $\cgsvec(\Svecp)=\Svec$; an example is
\begin{equation}
\label{eq:N4-example-fg}
    \text{{\small $\Svec'=\{2, 2, 5, 5, 2, 4, 7, 7, 4, 7, 7, 4, 8, 7, 7, 5, 9, 6, 6, 9, 9, 6, 9, 9, 10, 4, 7, 11, 8, 8, 6\}$}}.
\end{equation}
Lastly, we can use Algorithm~\ref{alg:reconstruction} to find a simple tree graph model realization $\gf{T}$ of $\Svec'$, and by relabeling the boundary vertices of $\gf{T}$ according to $\cgmap$, we obtain a non-simple tree graph model realizing $\Svec$ (see again \Cref{fig:non-chodal-example}).

\begin{figure}[tbp]
    \centering
    \begin{subfigure}{0.59\textwidth}
    \centering
    \begin{tikzpicture}

    \draw[gray!70, thin] (0,0) -- (-1,-0.5);
    \draw[gray!70, thin] (0,0) -- (-1,0.5);
    \draw[gray!70, thin] (0,0) -- (2,0);
    \draw[gray!70, thin] (2,0) -- (3,1);
    \draw[gray!70, thin] (2,0) -- (3,-1);
    \draw[gray!70, thin] (0,0) -- (0,1);
    \draw[gray!70, thin] (0,0) -- (0,-1);

    \filldraw[Mcol!80!black] (0,0) circle (3pt);
    \filldraw[EminusMcol!90!black] (-1,-0.5) circle (3pt);
    \filldraw[EminusMcol!90!black] (0,1) circle (3pt);
    \filldraw[PminusEcol] (-1,0.5) circle (3pt);
    \filldraw[Mcol!80!black] (0,-1) circle (3pt);
    \filldraw[Mcol!80!black] (2,0) circle (3pt);
    \filldraw[PminusEcol] (3,1) circle (3pt);
    \filldraw[Mcol!80!black] (3,-1) circle (3pt);

    \filldraw[Mcol!80!black] (-2.5,-2.5) circle (3pt);
    \filldraw[Mcol!80!black] (-1.5,-2.5) circle (3pt);
    \filldraw[EminusMcol!90!black] (-0.5,-2.5) circle (3pt);
    \filldraw[PminusEcol] (0.5,-2.5) circle (3pt);
    \filldraw[PminusEcol] (1.5,-2.5) circle (3pt);
    \filldraw[PminusEcol] (2.5,-2.5) circle (3pt);
    \filldraw[PminusEcol] (3.5,-2.5) circle (3pt);
    \filldraw[PminusEcol] (4.5,-2.5) circle (3pt);
    \filldraw[PminusEcol] (-2.5,-3.5) circle (3pt);
    \filldraw[PminusEcol] (-1.5,-3.5) circle (3pt);
    \filldraw[PminusEcol] (-0.5,-3.5) circle (3pt);
    \filldraw[PminusEcol] (0.5,-3.5) circle (3pt);
    \filldraw[PminusEcol] (1.5,-3.5) circle (3pt);
    \filldraw[PminusEcol] (2.5,-3.5) circle (3pt);
    \filldraw[PminusEcol] (3.5,-3.5) circle (3pt);
    \filldraw[PminusEcol] (4.5,-3.5) circle (3pt);

    \node[] () at (0,1.4) {{\scriptsize $h'_{134}$}};
    \node[] () at (-1,0.9) {{\scriptsize $h'_{1234}$}};
    \node[] () at (0,-1.4) {{\scriptsize $h'_{34}$}};
    \node[] () at (-1,-0.9) {{\scriptsize $h'_{234}$}};
    \node[] () at (3,1.4) {{\scriptsize $h'_{045}$}};
    \node[] () at (3,-1.4) {{\scriptsize $h'_{04}$}};
    \node[] () at (0.35,0.3) {{\scriptsize $h'_{05}$}};
    \node[] () at (2,-0.3) {{\scriptsize $h'_{123}$}};

    \node[] () at (-2.5,-2.1) {{\scriptsize $h'_{013}$}};
    \node[] () at (-1.5,-2.1) {{\scriptsize $h'_{023}$}};
    \node[] () at (-0.5,-2.1) {{\scriptsize $h'_{0124}$}};
    \node[] () at (0.5,-2.1) {{\scriptsize $h'_{034}$}};
    \node[] () at (1.5,-2.1) {{\scriptsize $h'_{0123}$}};
    \node[] () at (2.5,-2.1) {{\scriptsize $h'_{0134}$}};
    \node[] () at (3.5,-2.1) {{\scriptsize $h'_{0135}$}};
    \node[] () at (4.5,-2.1) {{\scriptsize $h'_{0234}$}};

    \node[] () at (-2.5,-3.1) {{\scriptsize $h'_{0235}$}};
    \node[] () at (-1.5,-3.9) {{\scriptsize $h'_{0345}$}};
    \node[] () at (-0.5,-3.1) {{\scriptsize $h'_{01234}$}};
    \node[] () at (0.5,-3.9) {{\scriptsize $h'_{01235}$}};
    \node[] () at (1.5,-3.1) {{\scriptsize $h'_{01245}$}};
    \node[] () at (2.5,-3.9) {{\scriptsize $h'_{01345}$}};
    \node[] () at (3.5,-3.1) {{\scriptsize $h'_{02345}$}};
    \node[] () at (4.5,-3.9) {{\scriptsize $h'_{012345}$}};

    \node[] () at (0,-4) {};
    \end{tikzpicture}
    \subcaption[]{}
    \end{subfigure}
    \begin{subfigure}{0.39\textwidth}
    \centering
    \begin{tikzpicture}
    \draw (0,0) -- (-1,-1);
    \draw (0,0) -- (-1,1);
    \draw (0,0) -- (2,0);
    \draw (1,0) -- (1,1.5);
    \draw (2,0) -- (3,1);
    \draw (2,0) -- (3,-1);
    \draw (-1.5,0) -- (0,0);
    
    \filldraw (-1,-1) circle (2pt);
    \filldraw (-1,1) circle (2pt);
    \filldraw (-1.5,0) circle (2pt);
    \filldraw (1,1.5) circle (2pt);
    \filldraw (3,1) circle (2pt);
    \filldraw (3,-1) circle (2pt);
    
    \filldraw[fill=bulkcol] (0,0) circle (2pt);
    \filldraw[fill=bulkcol] (1,0) circle (2pt);
    \filldraw[fill=bulkcol] (2,0) circle (2pt);
    
    \node[] () at (-1.3,1) {{\scriptsize $v_3$}};
    \node[] () at (-1.3,-1) {{\scriptsize $v_1$}};
    \node[] () at (-1.8,0) {{\scriptsize $v_2$}};
    \node[] () at (3.3,1) {{\scriptsize $v_0$}};
    \node[] () at (3.3,-1) {{\scriptsize $v_2$}};
    \node[] () at (1,1.8) {{\scriptsize $v_4$}};
    
    \node[red] () at (-0.5,0.8) {{\scriptsize $5$}};
    \node[red] () at (-0.5,-0.8) {{\scriptsize $2$}};
    \node[red] () at (0.5,0.2) {{\scriptsize $5$}};
    \node[red] () at (1.2,0.75) {{\scriptsize $5$}};
    \node[red] () at (1.5,-0.2) {{\scriptsize $4$}};
    \node[red] () at (2.5,0.8) {{\scriptsize $6$}};
    \node[red] () at (2.5,-0.8) {{\scriptsize $2$}};
    \node[red] () at (-0.8,-0.2) {{\scriptsize $2$}};

    \node[] () at (0,-3) {};

    \end{tikzpicture}
    \subcaption[]{}
    \end{subfigure}
    \caption{An example of a chordal fine-graining $\pmi'$ of the KC-PMI $\pmi$ of $\Svec$ in \eqref{eq:N4-example} with respect to the CG-map specified in \eqref{eq:cg-map-example-1}, and a non simple tree realizing $\Svec$. (a) shows the complement of the line graph of $\hpp$ and (b) the non simple tree realizing $\Svec$. Relabeling the vertex $v_2$ on the right side of (b) gives a simple tree graph model whose entropy vector is $\Svec'$ in \eqref{eq:N4-example-fg} and whose PMI is $\pmi'$.
    }
    \label{fig:non-chodal-example}
\end{figure}
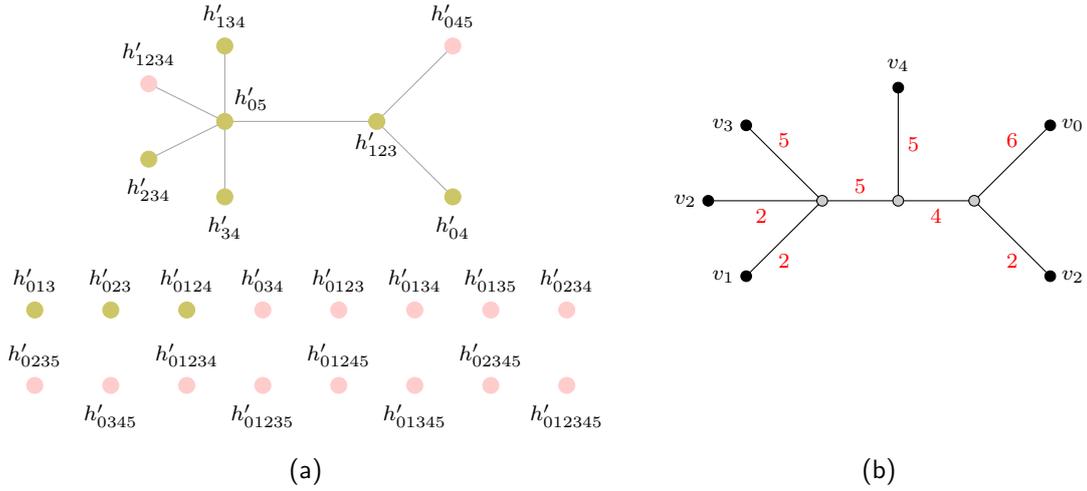

We conclude the analysis of this example with a few more comments about this strategy. First, note that we could have also considered a refinement of party 0, rather than 2, but not, individually, of parties 3 or 1, since any such refinement cannot break the two chordless cycles of $\lhp$ and give a chordal fine-graining of $\pmi$ (although a refinement of both 3 and 1 parties might). Another option would have been to refine party 4, although in this case, since there are two edges in $\lhp$ corresponding to pairs of vertices that share party 4, the necessary transformation of $h_{04}$, $h_{34}$ and $h_{134}$ is more subtle. Secondly, notice that from this type of analysis we might be able to determine which CG-maps are clearly not sufficient to obtain a chordal fine-graining, and which map \textit{might} be sufficient, but we cannot immediately determine the actual fine-grainings, even when they do exist. This is closely related to the fact that not all hypergraphs are correlation hypergraphs, and accordingly, the transformation of $\lhp$ under fine-grainings is in general highly non trivial. We can already see a signature of this complexity in the example that we have presented. Indeed, by examining the positive $\bsets$ of $\pmi'$ in \Cref{fig:non-chodal-example}, one can see that while $h_{02}$ and $h_{123}$ have been mapped (respectively) to $h'_{05}$ and $h'_{123}$ as desired, the transformation of the other vertices involving party 2 is harder to predict. For example, $h_{0234}$ is a hyperedge of $\hp$ and all of $h'_{0234}$, $h'_{0345}$ and $h'_{02345}$ are hyperedges of $\hpp$. But while one might have expected a similar transformation also for $h_{1234}$, only $h'_{1234}$ is a hyperedge of $\hpp$ and there are no hyperedges in $\hpp$ for the subsystems $1345$ and $12345$. Similarly, the hyperedge $h'_{045}$ originates from $h_{024}$, whereas $h'_{024}$ is no longer present. 

The main goal of the example we have just discussed was to elucidate a potential strategy for finding a non-simple tree realization of an entropy vector. However, as we stressed before, one of the central aspects of our program is also to understand whether a given entropy vector \textit{can} be realized by a non-simple tree graph model (and ultimately by any graph model; see \S\ref{sec:discussion}). Consider an entropy vector $\Svec$ with KC-PMI $\pmi$, and let $\pmi'_1$ be a chordal fine-graining of $\pmi$ with respect to some CG-map $\cgmap$. As mentioned above, we can use a linear program to verify whether the face of the $\N'_1$-party SAC corresponding to $\pmi'_1$ contains an entropy vector $\Svec'_1$ such that $\cgsvec(\Svec'_1)=\Svec$. Nevertheless, if this is not the case, we cannot conclude that there is no tree graph model realizing $\Svec$, since, even for the same CG-map, there might be another chordal fine-graining $\pmi'_2$ such that there exists an $\Svec'_2$ with $\cgsvec(\Svec'_2)=\Svec$. An example of this situation is illustrated in \Cref{fig:eg-different-chord-fgs}. Furthermore, even if this is not the case, there might be other chordal fine-grainings of $\pmi$ for different CG-maps, and any one of these might contain the required entropy vector. Therefore, in order to conclude that an entropy vector $\Svec$ cannot be realized by any tree graph model, in principle we need to scan over all possible CG-maps and chordal fine-grainings of $\pmi$. Since there are infinitely many CG-maps (as there is no upper bound on $\N'$), this problem, at least superficially, might seem undecidable. On the other hand, given that the holographic entropy cone is polyhedral for any number of parties, and the question can therefore be settled by knowledge of a finite number of holographic entropy inequalities, it is reasonable to expect that for any $\Svec$ there is effectively an upper bound on $\N'$ and a finite number of CG-maps (and therefore fine-grainings) that one needs to consider. We plan to address the issue of ``finiteness'' in forthcoming work \cite{graphoidal}.

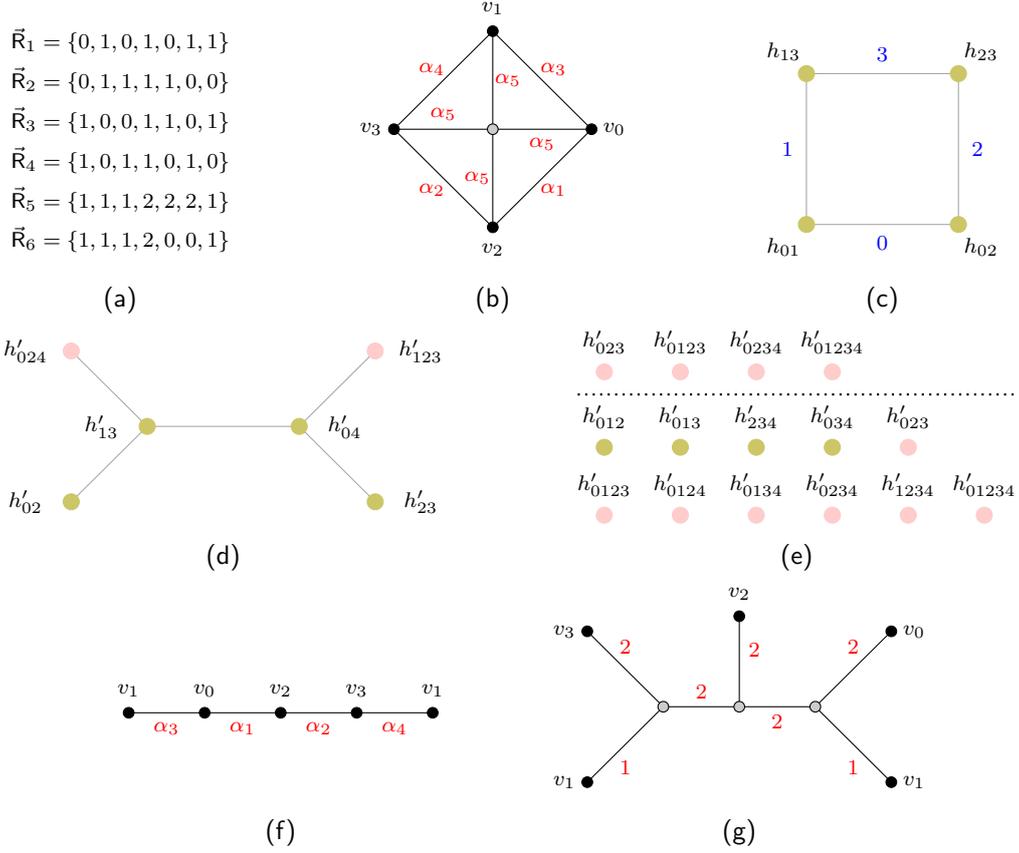
\begin{figure}
    \centering
    \begin{subfigure}{0.3\textwidth}
    \centering
    {\scriptsize
    \begin{align*}
        \vec{{\sf R}}_1=\{0, 1, 0, 1, 0, 1, 1\}\\
        \vec{{\sf R}}_2=\{0, 1, 1, 1, 1, 0, 0\}\\
        \vec{{\sf R}}_3=\{{1, 0, 0, 1, 1, 0, 1}\}\\
        \vec{{\sf R}}_4=\{1, 0, 1, 1, 0, 1, 0\}\\
        \vec{{\sf R}}_5=\{1, 1, 1, 2, 2, 2, 1\}\\
        \vec{{\sf R}}_6=\{1, 1, 1, 2, 0, 0, 1\}\\
    \end{align*}
    }
    \vspace{-1.2cm}
    \subcaption[]{}
    \end{subfigure}
    \begin{subfigure}{0.33\textwidth}
    \centering
    \begin{tikzpicture}
    \draw (0,0) -- (0,1.3);
    \draw (0,0) -- (0,-1.3);
    \draw (0,0) -- (-1.3,0);
    \draw (0,0) -- (1.3,0);
    \draw (0,1.3) -- (1.3,0);
    \draw (0,1.3) -- (-1.3,0);
    \draw (0,-1.3) -- (-1.3,0);
    \draw (0,-1.3) -- (1.3,0);
    
    \filldraw[fill=bulkcol] (0,0) circle (2pt);
    \filldraw (0,1.3) circle (2pt);
    \filldraw (0,-1.3) circle (2pt);
    \filldraw (-1.3,0) circle (2pt);
    \filldraw (1.3,0) circle (2pt);
    
    \node[] () at (0,1.6) {{\scriptsize $v_{1}$}};
    \node[] () at (0,-1.6) {{\scriptsize $v_{2}$}};
    \node[] () at (-1.6,0) {{\scriptsize $v_{3}$}};
    \node[] () at (1.6,0) {{\scriptsize $v_{0}$}};

    \node[red] () at (-0.8,0.8) {{\scriptsize $\alpha_4$}};
    \node[red] () at (0.8,-0.8) {{\scriptsize $\alpha_1$}};
    \node[red] () at (-0.8,-0.8) {{\scriptsize $\alpha_2$}};
    \node[red] () at (0.8,0.8) {{\scriptsize $\alpha_3$}};
    \node[red] () at (-0.65,0.2) {{\scriptsize $\alpha_5$}};
    \node[red] () at (0.65,-0.2) {{\scriptsize $\alpha_5$}};
    \node[red] () at (0.2,0.65) {{\scriptsize $\alpha_5$}};
    \node[red] () at (-0.2,-0.65) {{\scriptsize $\alpha_5$}};
    
    \end{tikzpicture} 
    \subcaption[]{}
    \end{subfigure}
    \begin{subfigure}{0.33\textwidth}
    \centering
    \begin{tikzpicture}
    \draw[gray!70, thin] (0,0) -- (2,0);
    \draw[gray!70, thin] (0,2) -- (0,0);
    \draw[gray!70, thin] (0,2) -- (2,2);
    \draw[gray!70, thin] (2,2) -- (2,0);
    
    \filldraw[Mcol!80!black] (0,0) circle (3pt);
    \filldraw[Mcol!80!black] (2,0) circle (3pt);
    \filldraw[Mcol!80!black] (0,2) circle (3pt);
    \filldraw[Mcol!80!black] (2,2) circle (3pt);

    \node[] () at (-0.3,2.3) {{\scriptsize $h_{13}$}};
    \node[] () at (2.3,2.3) {{\scriptsize $h_{23}$}};
    \node[] () at (-0.3,-0.3) {{\scriptsize $h_{01}$}};
    \node[] () at (2.3,-0.3) {{\scriptsize $h_{02}$}};

    \node[blue] () at (-0.25,1) {{\scriptsize $1$}};
    \node[blue] () at (1,-0.25) {{\scriptsize $0$}};
    \node[blue] () at (1,2.25) {{\scriptsize $3$}};
    \node[blue] () at (2.25,1) {{\scriptsize $2$}};
    
    \end{tikzpicture} 
    \subcaption[]{}
    \end{subfigure}
    \\
    \begin{subfigure}{0.49\textwidth}
    \centering
    \begin{tikzpicture}

    \draw[gray!70, thin] (0,0) -- (-1,-1);
    \draw[gray!70, thin] (0,0) -- (-1,1);
    \draw[gray!70, thin] (0,0) -- (2,0);
    \draw[gray!70, thin] (2,0) -- (3,1);
    \draw[gray!70, thin] (2,0) -- (3,-1);
    
    \filldraw[Mcol!80!black] (0,0) circle (3pt);
    \filldraw[Mcol!80!black] (-1,-1) circle (3pt);
    \filldraw[PminusEcol] (-1,1) circle (3pt);
    \filldraw[Mcol!80!black] (2,0) circle (3pt);
    \filldraw[PminusEcol] (3,1) circle (3pt);
    \filldraw[Mcol!80!black] (3,-1) circle (3pt);
    
    \node[] () at (-1.6,1) {{\scriptsize $h'_{024}$}};
    \node[] () at (-1.6,-1) {{\scriptsize $h'_{02}$}};
    \node[] () at (3.6,1) {{\scriptsize $h'_{123}$}};
    \node[] () at (3.6,-1) {{\scriptsize $h'_{23}$}};
    \node[] () at (-0.6,0) {{\scriptsize $h'_{13}$}};
    \node[] () at (2.6,0) {{\scriptsize $h'_{04}$}};
    
    \end{tikzpicture}
    \subcaption[]{}
    \end{subfigure}
    \begin{subfigure}{0.49\textwidth}
    \centering
    \begin{tikzpicture}

    \draw[dotted,thick] (-2.85,1.6) -- (2.95,1.6);

    \filldraw[PminusEcol] (-2.5,0) circle (3pt);
    \filldraw[PminusEcol] (-1.5,0) circle (3pt);
    \filldraw[PminusEcol] (-0.5,0) circle (3pt);
    \filldraw[PminusEcol] (0.5,0) circle (3pt);
    \filldraw[PminusEcol] (1.5,0) circle (3pt);
    \filldraw[PminusEcol] (2.5,0) circle (3pt);

    \node[] () at (-2.5,0.4) {{\scriptsize $h'_{0123}$}};
    \node[] () at (-1.5,0.4) {{\scriptsize $h'_{0124}$}};
    \node[] () at (-0.5,0.4) {{\scriptsize $h'_{0134}$}};
    \node[] () at (0.5,0.4) {{\scriptsize $h'_{0234}$}};
    \node[] () at (1.5,0.4) {{\scriptsize $h'_{1234}$}};
    \node[] () at (2.5,0.4) {{\scriptsize $h'_{01234}$}};

    \filldraw[EminusMcol!90!black] (-2.5,0.9) circle (3pt);
    \filldraw[EminusMcol!90!black] (-1.5,0.9) circle (3pt);
    \filldraw[EminusMcol!90!black] (-0.5,0.9) circle (3pt);
    \filldraw[EminusMcol!90!black] (0.5,0.9) circle (3pt);
    \filldraw[PminusEcol] (1.5,0.9) circle (3pt);

    \node[] () at (-2.5,1.3) {{\scriptsize $h'_{012}$}};
    \node[] () at (-1.5,1.3) {{\scriptsize $h'_{013}$}};
    \node[] () at (-0.5,1.3) {{\scriptsize $h'_{234}$}};
    \node[] () at (0.5,1.3) {{\scriptsize $h'_{034}$}};
    \node[] () at (1.5,1.3) {{\scriptsize $h'_{023}$}};
    
    \filldraw[PminusEcol] (-2.5,1.9) circle (3pt);
    \filldraw[PminusEcol] (-1.5,1.9) circle (3pt);
    \filldraw[PminusEcol] (-0.5,1.9) circle (3pt);
    \filldraw[PminusEcol] (0.5,1.9) circle (3pt);

    \node[] () at (-2.5,2.3) {{\scriptsize $h'_{023}$}};
    \node[] () at (-1.5,2.3) {{\scriptsize $h'_{0123}$}};
    \node[] () at (-0.5,2.3) {{\scriptsize $h'_{0234}$}};
    \node[] () at (0.5,2.3) {{\scriptsize $h'_{01234}$}};
    
    \end{tikzpicture} 
    \subcaption[]{}
    \end{subfigure}
    \\
    \begin{subfigure}{0.39\textwidth}
    \centering
    \begin{tikzpicture}
    \draw (0,0) -- (4,0);
    
    \filldraw (0,0) circle (2pt);
    \filldraw (1,0) circle (2pt);
    \filldraw (2,0) circle (2pt);
    \filldraw (3,0) circle (2pt);
    \filldraw (4,0) circle (2pt);
    
    \node[] () at (0,0.3) {{\scriptsize $v_1$}};
    \node[] () at (1,0.3) {{\scriptsize $v_0$}};
    \node[] () at (2,0.3) {{\scriptsize $v_2$}};
    \node[] () at (3,0.3) {{\scriptsize $v_3$}};
    \node[] () at (4,0.3) {{\scriptsize $v_1$}};

    \node[red] () at (0.5,-0.2) {{\scriptsize $\alpha_3$}};
    \node[red] () at (1.5,-0.2) {{\scriptsize $\alpha_1$}};
    \node[red] () at (2.5,-0.2) {{\scriptsize $\alpha_2$}};
    \node[red] () at (3.5,-0.2) {{\scriptsize $\alpha_4$}};

    \node[] () at (0,-1) {};

    \end{tikzpicture}
    \subcaption[]{}
    \end{subfigure}
    \begin{subfigure}{0.39\textwidth}
    \centering
    \begin{tikzpicture}
    \draw (0,0) -- (-1,-1);
    \draw (0,0) -- (-1,1);
    \draw (0,0) -- (2,0);
    \draw (1,0) -- (1,1.2);
    \draw (2,0) -- (3,1);
    \draw (2,0) -- (3,-1);
    
    \filldraw (-1,-1) circle (2pt);
    \filldraw (-1,1) circle (2pt);
    \filldraw (1,1.2) circle (2pt);
    \filldraw (3,1) circle (2pt);
    \filldraw (3,-1) circle (2pt);
    
    \filldraw[fill=bulkcol] (0,0) circle (2pt);
    \filldraw[fill=bulkcol] (1,0) circle (2pt);
    \filldraw[fill=bulkcol] (2,0) circle (2pt);
    
    \node[] () at (-1.3,1) {{\scriptsize $v_3$}};
    \node[] () at (-1.3,-1) {{\scriptsize $v_1$}};
    \node[] () at (3.3,1) {{\scriptsize $v_0$}};
    \node[] () at (3.3,-1) {{\scriptsize $v_1$}};
    \node[] () at (1,1.5) {{\scriptsize $v_2$}};
    
    \node[red] () at (-0.5,0.8) {{\scriptsize $2$}};
    \node[red] () at (-0.5,-0.8) {{\scriptsize $1$}};
    \node[red] () at (0.5,0.2) {{\scriptsize $2$}};
    \node[red] () at (1.2,0.75) {{\scriptsize $2$}};
    \node[red] () at (1.5,-0.2) {{\scriptsize $2$}};
    \node[red] () at (2.5,0.8) {{\scriptsize $2$}};
    \node[red] () at (2.5,-0.8) {{\scriptsize $1$}};

    \end{tikzpicture}
    \subcaption[]{}
    \end{subfigure}
    \caption{Up to permutations, the $\N=3$ KC-PMI $\pmi=\{\mi(1:2),\mi(3:0)\}$ is the simplest non-chordal KC-PMI; all $\N=2$ PMIs are chordal KC-PMIs, and any other non-chordal $\N=3$ KC-PMI corresponds to a higher dimensional face of the SAC. The only non-positive $\bsets$ of $\pmi$ are $\bs{12}$ and $\bs{03}$. The extreme rays (ERs) of the 5-dimensional face associated to $\pmi$ are listed in (a): The first four are realized by Bell pairs, the fifth by the perfect tensor, and last one violates MMI. Accordingly,
    there are entropy vectors in the interior of $\face$ which also violate MMI, and are therefore non realizable (e.g., $\Svec=\vec{{\sf R}}_6 + \epsilon\sum_{i=1}^5 \vec{{\sf R}}_i$ with sufficiently small $\epsilon>0$). On the other hand, $\pmi$ is realizable, since any conical combination $\Svec=\sum_{i=1}^5 \alpha_i \vec{{\sf R}}_i$, with $\alpha_i>0$ for all $i$, is in the interior of $\face$, and is realized by the graph in (b), which is simple, but not a tree.
    Indeed $\pmi$ is not chordal, as clear from $\lhp$, which is shown in (c) (to simplify the figure we have omitted the vertices corresponding to hyperedges with cardinality four and five, which are adjacent to all vertices). The structure of $\lhp$ indicates that refining a single party into two might be sufficient to find a chordal fine-graining of $\pmi$, and we choose the CG-map $\cgmap$ specified by $\{\{1,4\}_1,\{2\}_2,\{3\}_3,\{0\}_0\}$. In total there are 258 fine-grainings of $\pmi$ with respect to $\cgmap$, and 8 of them are chordal. The complement of the line graph for two such fine-grainings $\pmi'_1$ and $\pmi'_2$ are shown (jointly) by (d) and (e). Specifically, (d) shows the common non-trivial component of this graph, while (e) shows the isolated vertices for $\pmi'_1$ (top) and $\pmi'_2$ (bottom). The face $\face'_1$ of the SAC$_4$ corresponding to $\pmi'_1$ is simplicial and 4-dimensional. Its ERs are canonical fine-grainings of the first four ERs in (a), and any $\Svec$ from the combination above, where $\alpha_5=0$ and $\alpha_i>0$ for each $i\in[4]$, is in the interior of $\face$ (even if $\face$ is 5-dimensional) and is realized by the non-simple tree graph model in (f). Indeed, notice that (f) can be obtained from (b) by setting $\alpha_5=0$ (and deleting the corresponding edges and bulk vertex) and ``splitting'' the vertex $v_1$ into two vertices labeled by the same party. Setting instead $\alpha_i=1/2$ for all $i\in[4]$ and $\alpha_5=1$, we obtain $\Svec=\{2, 2, 2, 4, 3, 3, 2\}$, which cannot be obtained as a coarse-graining of some $\Svec'_1\in\face'_1$. It is however a coarse-graining of $\Svec'_2=\{1, 2, 2, 1, 3, 2, 2, 3, 3, 3, 2, 4, 3, 3, 2\}$ in the face $\face'_2$, the 6-dimensional face corresponding to $\pmi'_2$, as shown in (g) (the figure shows the non-simple tree realization of $\Svec$, relabeling one vertex $v_1$ as $v_4$ gives the simple tree realizing $\Svec'_2$).
    }
    \label{fig:eg-different-chord-fgs}
\end{figure}

An example of an entropy vector $\Svec$ that cannot be realized by a tree graph model (or in fact, any graph model) even if there exists a chordal fine-graining $\pmi'$ of its KC-PMI $\pmi$ is shown in \Cref{fig:eg-different-chord-fgs}. This example illustrates a typical situation: Given a face $\face$ of the SAC whose corresponding PMI $\pmi$ is a non-chordal KC-PMI, the set of entropy vectors in the interior of $\face$ that can be realized by non simple trees (and therefore via all possible chordal fine-grainings of $\pmi$ with respect to all possible CG-maps) is a proper subset of the interior of $\face$.

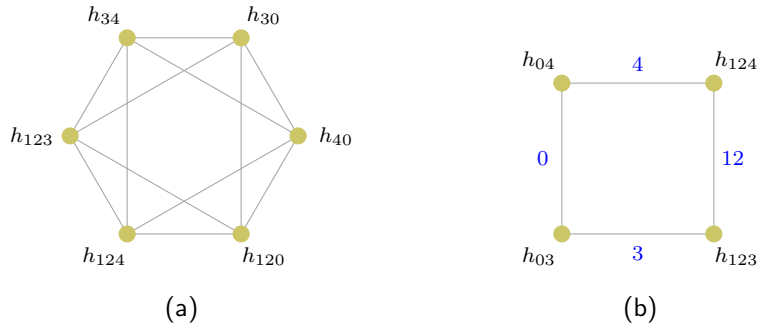
\begin{figure}
    \centering
    \begin{subfigure}{0.39\textwidth}
    \centering
    \begin{tikzpicture}
    \draw[gray!70, thin] (1.50, 0.00) -- (0.75, 1.30);
    \draw[gray!70, thin] (0.75, 1.30) -- (-0.75, 1.30);
    \draw[gray!70, thin] (-0.75, 1.30) -- (-1.50, 0.00);
    \draw[gray!70, thin] (-1.50, 0.00) -- (-0.75, -1.30);
    \draw[gray!70, thin] (-0.75, -1.30) -- (0.75, -1.30);
    \draw[gray!70, thin] (0.75, -1.30) -- (1.50, 0.00);

    \draw[gray!70, thin] (1.50, 0.00) -- (-0.75, 1.30);
    \draw[gray!70, thin] (0.75, 1.30) -- (-1.50, 0.00);
    \draw[gray!70, thin] (-0.75, 1.30) -- (-0.75, -1.30);
    \draw[gray!70, thin] (-1.50, 0.00) -- (0.75, -1.30);
    \draw[gray!70, thin] (-0.75, -1.30) -- (1.50, 0.00);
    \draw[gray!70, thin] (0.75, -1.30) -- (0.75, 1.30);

    \filldraw[Mcol!80!black] (1.50, 0.00) circle (3pt);
    \filldraw[Mcol!80!black] (0.75, 1.30) circle (3pt);
    \filldraw[Mcol!80!black] (-0.75, 1.30) circle (3pt);
    \filldraw[Mcol!80!black] (-1.50, 0.00) circle (3pt);
    \filldraw[Mcol!80!black] (-0.75, -1.30) circle (3pt);
    \filldraw[Mcol!80!black] (0.75, -1.30) circle (3pt);
    
    \node[] () at (2, 0.00) {{\scriptsize $h_{40}$}};
    \node[] () at (1.05, 1.60) {{\scriptsize $h_{30}$}};
    \node[] () at (-1.05, 1.60) {{\scriptsize $h_{34}$}};
    \node[] () at (-2, 0.00) {{\scriptsize $h_{123}$}};
    \node[] () at (-1.05, -1.60) {{\scriptsize $h_{124}$}};
    \node[] () at (1.05, -1.60) {{\scriptsize $h_{120}$}};
    
    \end{tikzpicture} 
    \subcaption[]{}
    \label{fig:N4-example-not-holographic-pmi}
    \end{subfigure}
    \begin{subfigure}{0.39\textwidth}
    \centering
    \begin{tikzpicture}
    \draw[gray!70, thin] (0,0) -- (2,0);
    \draw[gray!70, thin] (0,2) -- (0,0);
    \draw[gray!70, thin] (0,2) -- (2,2);
    \draw[gray!70, thin] (2,2) -- (2,0);
    
    \filldraw[Mcol!80!black] (0,0) circle (3pt);
    \filldraw[Mcol!80!black] (2,0) circle (3pt);
    \filldraw[Mcol!80!black] (0,2) circle (3pt);
    \filldraw[Mcol!80!black] (2,2) circle (3pt);

    \node[] () at (-0.3,2.3) {{\scriptsize $h_{04}$}};
    \node[] () at (2.3,2.3) {{\scriptsize $h_{124}$}};
    \node[] () at (-0.3,-0.3) {{\scriptsize $h_{03}$}};
    \node[] () at (2.3,-0.3) {{\scriptsize $h_{123}$}};

    \node[blue] () at (-0.25,1) {{\scriptsize $0$}};
    \node[blue] () at (1,-0.25) {{\scriptsize $3$}};
    \node[blue] () at (1,2.25) {{\scriptsize $4$}};
    \node[blue] () at (2.25,1) {{\scriptsize $12$}};
    
    \end{tikzpicture} 
    \subcaption[]{}
    \label{fig:N4-example-not-ssa-pmi}
    \end{subfigure}
    \caption{(a) An example of an $\N=4$ SSA-compatible PMI which is not realizable by any holographic graph model. The figure shows the vertices of $\lhp$ which belong to at least one chordless cycle. All the omitted vertices are adjacent to all vertices, and they are: $h_{034}$ and all hyperedges with cardinality four or five. (b) An example of an $\N=4$ KC-PMI which is not SSA-compatible. The figure follows the same conventions as (a), and the omitted vertices are: $h_{012}$, $h_{134}$, $h_{234}$, $h_{034}$ and all hyperedges with cardinality four or five. The edge labels are analogous to the ones in \Cref{fig:N4-example}.}
\end{figure}

We conclude this section with a few comments about yet another possibility. So far we have considered examples of entropy vectors whose PMIs have at least one chordal fine-graining with respect to some CG-map. However, is this always the case? For $\N\geq 4$ it was shown in \cite{He:2022bmi} that there exist faces of the SAC that despite being SSA-compatible do not contain in their interior any entropy vector that can be realized by a graph model (since they all violate at least one instance of MMI). Since these faces are SSA-compatible, the corresponding PMIs are KC-PMIs, and if chordality is a sufficient condition for realizability by a simple tree graph model \cite{sufficiency}, it must then be that each of these PMIs has no chordal fine-graining for any CG-map. 

An example of this situation is shown in \Cref{fig:N4-example-not-holographic-pmi}. Of course, this example was constructed \textit{a posteriori}, using knowledge of holographic entropy inequalities. If instead we are given an SSA-compatible face $\face$ of the SAC for some number of parties $\N$, for which we do not know holographic entropy inequalities, we would like to be able to detect directly from the PMI of the face whether it admits a chordal fine-graining for some CG-map. Again, our goal in this paper is not to answer this difficult question—which is intimately related to the issues already highlighted above—but rather to present an example that suggests how such a test might proceed. To simplify the presentation, rather than focusing on the example in \Cref{fig:N4-example-not-holographic-pmi}, for which $\lhp$ has a rather complicated structure of chordless cycles, we consider instead a face of the $\N=4$ SAC that is not SSA-compatible, i.e., whose interior consists entirely of entropy vectors that violate SSA. These entropy vectors are therefore, obviously, not realizable by graph models, but this choice has the advantage that $\lhp$ contains a single chordless cycle (indicated in \Cref{fig:N4-example-not-ssa-pmi}) and does not alter the logic of the presentation. Importantly, we choose this example so that the corresponding PMI is a KC-PMI, and therefore we can apply the technology of the correlation hypergraph.

For the KC-PMI in \Cref{fig:N4-example-not-ssa-pmi}, an obvious choice of CG-map $\Theta$ with respect to which a chordal fine-graining of $\pmi$ might exist is 
\begin{equation}
    \cgmap:\quad \{\{1\}_1,\{2\}_2,\{3\}_3,\{4\}_4,\{5,0\}_0\}.
\end{equation}
Let $\pmi'$ be the minimum fine-graining of $\pmi$ with respect to $\cgmap$. Its correlation hypergraph $\hpp$ contains in particular the hyperedge $h'_{05}$ and all possible hyperedges that are mapped by the quotient (cf., \Cref{thm:min-fg}) to the hyperedges $h_{03}$ and $h_{04}$ of $\hp$,
\begin{equation}
\label{eq:ssa-incompatible-hyperedges}
    h_{03}\; \rightarrow\; \{h'_{03} \,,\, h'_{35} \,,\, h'_{035}\} \qquad 
    h_{04}\; \rightarrow\; \{h'_{04} \,,\, h'_{45} \,,\, h'_{045}\}.
\end{equation}
We now search for a chordal fine-graining $\widetilde{\pmi}'\succ\pmi'$ by imposing a minimal set of requirements for the structure of $\widetilde{\pmi}'$, which take the form $\mi'=0$ for some $\mi'$ (for clarity, we denote by $\mi'$ the MI instances for $\N'=5$). 

The first necessary requirement for $\widetilde{\pmi}'$ is that the $\bsets$ $\bs{035}$ and $\bs{045}$ are not positive. Furthermore, since in $\pmi$ we have $\mi(0:3)>0$ and $\mi(0:4)>0$, in $\widetilde{\pmi}'$ it must be that $\mi'(05:3)>0$ and $\mi'(05:4)>0$, otherwise $\widetilde{\pmi}'$ is not a fine-graining of $\pmi$ with respect to $\cgmap$. In order for $\bs{035}$ to be non-positive then, we must impose either $\mi'(03:5)=0$ or $\mi'(0:35)=0$, and similarly for $\bs{045}$.\footnote{\,Note that $\mi'(0:35)=0$ is not in contradiction with $\mi(0:3)>0$ because KC only implies $\mi'(0:3)=0$, but in the $\N$-party system, parties 0 and 3 are still correlated if 3 and 5 are correlated in the $\N'$-party system.} Since the minimum fine-graining is obviously invariant under a swap of 0 and 5, we simply make a choice for $\bs{035}$, and we set $\mi'(03:5)=0$ (we search for $\widetilde{\pmi}'$ up to permutations of the parties). This choice also implies (by KC) that $\bs{35}$ is not positive, and therefore that there is no $h'_{35}$ in the correlation hypergraph of $\widetilde{\pmi}'$. On the other hand, we must demand that $\mi'(0:3)>0$, so that there is the hyperedge $h'_{03}$, otherwise the hypergraph of the coarse-graining of $\widetilde{\pmi}'$ does not contain $h_{03}$. Chordality then demands that there is no hyperedge $h'_{04}$, and accordingly we set $\mi'(0:45)=0$. Furthermore, note that we also need $\mi'(0:5)=0$, so that there is no hyperedge $h'_{05}$, otherwise, in the line graph, the corresponding edge ``connects'' $h'_{03}$ to $h'_{45}$, and there is a chordless cycle. This, however, is already guaranteed by KC and our previous requirements.

We now construct the following down-set in the $\N'=5$ MI-poset
\begin{equation}
    \ds' = \pmi'\; \cup\, \downarrow\!\{\mi'(03:5),\mi'(0:45)\}.
\end{equation}
Recall that a down-set is not in general a PMI. In the \textit{lattice of down-sets}\footnote{\,The set of down-sets of any poset is a distributive lattice where meet and join correspond to intersection and union.} of the $\N'$-party system, we can then search for the\footnote{\,The uniqueness of this KC-PMI is not essential here, and was proven in \cite{He:2024xzq}.} minimal KC-PMI $\widetilde{\pmi}'$ such that $\widetilde{\pmi}'\geq \ds'$, which has essential $\bsets$
\begin{equation}
    \ess(\widetilde{\pmi}')=\{\bs{03},\bs{45},\bs{123},\bs{124},\bs{134},\bs{234}\}.
\end{equation}
It is then a simple exercise to verify that $\widetilde{\pmi}'$ is indeed a chordal KC-PMI. However, its coarse-graining under $\cgmap$ is not $\pmi$ but another KC-PMI $\widetilde{\pmi}\succ \pmi$, since, for example, the correlation hypergraph of $\widetilde{\pmi}$ does not have the hyperedge $h_{012}$ (which instead is in $\hp$).

This argument illustrates how one may establish that a given KC-PMI $\pmi$ admits no chordal fine-graining with respect to a specified CG-map, and clarifies the obstruction to its existence. Of course, to conclude that $\pmi$ admits \textit{no} chordal fine-graining whatsoever, one must replicate the same reasoning for every possible CG-map. Since there are infinitely many such maps, this once again encounters the finiteness issue noted above.

\section{Discussion}
\label{sec:discussion}

The main goal of this work has been to take the first steps towards a systematic construction of holographic graph models for a given entropy vector, and to understand under what conditions such a construction is possible independently of any prior knowledge of holographic entropy inequalities. For entropy vectors that satisfy the ``chordality condition'' first introduced in \cite{Hubeny:2024fjn}, we have proposed an efficient algorithm that constructs a simple tree graph model.
Although we have not yet proven that this algorithm always succeeds, and therefore that the chordality condition is sufficient for the realizability of an entropy vector by a holographic graph model, we believe that this is indeed the case and we hope to report on this soon \cite{sufficiency}.

The chordality condition is, however, not necessary for realizability by arbitrary graph models, or even by trees, provided the latter are not simple. While we have not provided a general algorithm for the construction of non-simple tree graph models for a given entropy vector, we have shown by explicit examples that the structure of the set of its possible fine-grainings is crucial for such constructions. Motivated by this, we have expanded the toolkit of the correlation hypergraph introduced in \cite{Hubeny:2024fjn} so that it can naturally accommodate a varying number of parties.

One of the important applications of the technology developed in this work will be a more refined test of the strong form of the conjecture in \cite{Hernandez-Cuenca:2022pst} about the general structure of the holographic entropy cone. This conjecture states that for any number of parties, any extreme ray of the HEC can be realized by a graph model with tree topology. While the conjecture has been confirmed for $\N=5$, a systematic test at $\N=6$—which is particularly interesting because of its very rich structure—is considerably more challenging. Many of the known ERs of the HEC$_6$ are indeed realized by trees, but there also exist extreme rays for which all known graph model realizations contain a ``bulk cycle'', i.e., a cycle of bulk vertices. Importantly, this does not disprove the conjecture, since it does not rule out the existence of alternative tree graph models for the same extreme rays, but initial attempts to construct such tree models have so far failed. One possible explanation is that such models simply do not exist; another, equally plausible, possibility is that they are more difficult to find in practice, owing to the highly nontrivial structure of the relevant fine-graining maps.

While we expect that the techniques developed here can be used to explore this question, it is useful to clarify why the problem remains challenging and how these techniques might be developed further. As we have explained, given an entropy vector $\Svec$ and a CG-map $\cgmap$ it is, in principle, straightforward to construct the set of its fine-grainings with respect to $\cgmap$. However, there exists an infinite set of possible CG-maps, since there is no upper bound on the number of parties that we can introduce. In order to be able to test, and eventually disprove, the conjecture, an important step forward will therefore be to determine whether the problem is actually finite. More precisely, it seems plausible that it is in fact sufficient to verify whether a fine-graining with chordal KC-PMI exists for a finite set of CG-maps, and that, if this is not the case, the introduction of further parties is irrelevant. Moreover, even for a fixed CG-map, as the number of parties grows the explicit construction of the polyhedron of fine-grainings of $\Svec$ with respect to $\cgmap$ becomes fundamentally unapproachable in practice. For these reasons, a more promising approach might be one that is focused on a direct search for ``chordal fine-grainings'' by studying more explicitly the transformation of arbitrary hypergraphs and their line graphs under fine-grainings. Indeed, given any hypergraph, one can simply use a linear program to determine whether it is a correlation hypergraph or not, and thus whether it defines an admissible fine-graining. We plan to explore this avenue in the near future \cite{graphoidal}.

\acknowledgments

V.H. would like to thank Sergio Hern\'andez-Cuenca for useful conversations.  V.H. has been supported in part by the U.S. Department of Energy grant DE-SC0009999 and by funds from the University of California.   M.R. acknowledges support from UK Research and Innovation (UKRI) under the UK government’s Horizon Europe guarantee (EP/Y00468X/1).
V.H. acknowledges the hospitality of the Kavli Institute for Theoretical Physics (KITP) during early stages of this work and of the Aspen Center for Physics (supported by National Science Foundation grant PHY-2210452). The authors acknowledge the hospitality of the Centro de Ciencias de Benasque Pedro Pascual during the workshop ``Gravity - New quantum and string perspectives''.
M.R. would like to thank the hospitality of QMAP and the University of California, Davis, during various stages of this work.

There is no underlying data associated with this work.

For the purpose of open access, the authors have applied a Creative Commons Attribution (CC BY) licence to any Author Accepted Manuscript version arising from this submission.

\bibliography{general_bib}
\bibliographystyle{utphys}

\end{document}